\title{The Algebras for Automatic Relations\thanks{Licensed under \href{http://creativecommons.org/licenses/by/4.0/}{CC BY 4.0}.}}
\titlerunning{The Algebras for Automatic Relations}
\author{Rémi Morvan}{LaBRI (Univ. Bordeaux, CNRS \& Bordeaux INP), France \and \url{https://www.morvan.xyz} }{remi.morvan@u-bordeaux.fr}{https://orcid.org/0000-0002-1418-3405}{}
\authorrunning{R. Morvan} 
\keywords{synchronous automata, automatic relations, regular relations, transductions, synchronous algebras, Eilenberg correspondence, pseudovarieties, algebraic characterizations} 
\renewcommand\S{\adfS}
\renewcommand\phi{\varphi}
\tikzset{
    initial text={},
    accepting/.style=accepting by double
}
\definecolor{Dark Ruby Red}{HTML}{6a211f}
\definecolor{Dark Blue Sapphire}{HTML}{164a59}
\definecolor{Dark Gamboge}{HTML}{be7c00}
\definecolor{Desire}{HTML}{eb3b5a} 
\definecolor{Boyzone}{HTML}{2d98da} 
\definecolor{Royal Blue}{HTML}{3867d6} 
\definecolor{NYC Taxi}{HTML}{f7b731} 
\definecolor{Algal Fuel}{HTML}{20bf6b} 
\definecolor{Innuendo}{HTML}{a5b1c2} 
\definecolor{Twinkle Blue}{HTML}{d1d8e0} 
\definecolor{Gloomy Purple}{HTML}{8854d0} 
\colorlet{cBlue}{Royal Blue}
\colorlet{cYellow}{NYC Taxi}
\colorlet{cGreen}{Algal Fuel}
\colorlet{cRed}{Desire}
\colorlet{cGrey}{Innuendo}
\colorlet{cLightGrey}{Twinkle Blue}
\colorlet{cPurple}{Gloomy Purple}
\RenewDocumentCommand\withkl{mm}{
\int_gincr:N\knowledge_inner_modifier_count_int
\cs_gset:cpx
{\knowledge_inner_command:}
{\exp_not:N\cs_gset:Npn
\exp_not:c{\knowledge_inner_command:}
{\knowledge_inner_modifer_re_tl\knowledge_kl_modifiers_tl\exp_not:n{#1}}
\knowledge_kl_modifiers_tl\exp_not:n{#1}}
\knowledge_kl_modifiers_reset:
#2
\int_gdecr:N\knowledge_inner_modifier_count_int
}
\newrobustcmd\defeq{\mathrel{\hat{=}}}
\newrobustcmd\dom{\mathop{\textrm{dom}}}
\newrobustcmd\B[1]{\mathbb{#1}}
\newrobustcmd\pset{\mathop{\mathcal{P}\mkern-1mu}}
\newrobustcmd\cat[1]{\mathsf{#1}}
\newrobustcmd\surj{\twoheadrightarrow}
\newrobustcmd\lBrack{\llbracket}
\newrobustcmd\rBrack{\rrbracket}
\newrobustcmd\N{\mathbb{N}}
\newrobustcmd\Np{\mathbb{N}_{>0}}
\newrobustcmd\Z{\mathbb{Z}}
\newrobustcmd\id{\textrm{id}}
\knowledgenewrobustcmd\Set[1][\+S]{\cmdkl{\cat{Set}^{\smash{#1}}}}
\knowledgenewrobustcmd\Pos[1][\+S]{\cmdkl{\cat{Pos}^{\smash{#1}}}}
\knowledgenewrobustcmd\Dep[1][\+S]{\cmdkl{\cat{Dep}^{\smash{#1}}}}
\newcommand{\proofcase}[1]{\adforn{39}~\emph{#1}~}
\renewrobustcmd{\emptyset}{\varnothing}
\knowledgenewrobustcmd\equivclass[2]{\cmdkl{[}#1\cmdkl{]^{\smash{#2}}}}
\newrobustcmd\decisionproblem[3]{%
\AP\begin{center}%
\fbox{\begin{tabular}{rl}%
{\textit{Problem}}: & #1 \\%
{\textit{Input}}: & #2 \\%
{\textit{Question}}: & #3%
\end{tabular}}%
\end{center}%
}
\renewrobustcmd\ll{\kl[\ll]{\text{\sfrac{\footnotesize\textsc{l}\,}{\,\footnotesize\textsc{l}}}}\hspace{.1em}}
\knowledge{\ll}{notion}
\knowledgenewrobustcmd\lb{\cmdkl{\text{\sfrac{\footnotesize\textsc{l}\,}{\,\footnotesize\textsc{b}}}}\hspace{.1em}}
\knowledgenewrobustcmd\bl{\cmdkl{\text{\sfrac{\footnotesize\textsc{b}\,}{\,\footnotesize\textsc{l}}}}\hspace{.1em}}
\definecolor{Jalapeno}{HTML}{b71540}
\newrobustcmd{\todo}[1]{{\color{Jalapeno}{Todo: #1}}}
\newrobustcmd{\pad}{%
	\hspace{.05em}\rule{.3em}{.075em}\hspace{.05em}
}
\knowledgenewrobustcmd\SigmaPair[1][\Sigma]{\cmdkl{#1^{\smash{2}}_{\smash{\hspace{.03em}\raisebox{.18em}{\scriptsize\pad}}}}\!}
\newrobustcmd\pair[2]{\left(\begin{smallmatrix}%
	#1\\%
	#2%
\end{smallmatrix}\right)}%
\knowledgenewrobustcmd\WellFormed[1][\Sigma]{\cmdkl{\textsf{WellFormed}_{#1}}}
\knowledgenewrobustcmd\proj[1]{\cmdkl{\underline{#1}}} 
\knowledgenewrobustcmd\projP[1]{\cmdkl{\underline{#1}}}
\knowledgenewrobustcmd\typeW[2]{#1_{\cmdkl{#2}}}
\knowledgenewrobustcmd\types{\cmdkl{\mathcal{T}}\!}
\knowledgenewrobustcmd\typesEps{\cmdkl{\mathcal{T}_{1}}\!}
\knowledgenewrobustcmd\concattype{\cmdkl{\cdot}}
\knowledgenewrobustcmd\type[3][]{#2^{#1}_{\cmdkl{#3}}}
\knowledgenewrobustcmd\disunion{%
	\mathop{\cmdkl{\text{\raisebox{-2pt}{\LARGE$\uplus$}}}}%
}
\knowledgenewrobustcmd\dep[1][]{%
	\mathrel{\cmdkl{\asymp#1}}%
}
\knowledgenewrobustcmd\negrel{\cmdkl{\neg}}
\knowledgenewrobustcmd\Sync[2][2]{\cmdkl{\?S_{#1}#2}}
\knowledgenewrobustcmd\SyncP[2][2]{\cmdkl{\?S^+_{#1}#2}}
\knowledgenewrobustcmd\consol[1]{\cmdkl{#1^{\smash{0}}}} 
\knowledgenewrobustcmd\projtype{\cmdkl{\pi_{\smash{\textsf{type}}}}}
\knowledgenewrobustcmd\projval{\cmdkl{\pi_{\smash{\textsf{val}}}}}
\knowledgenewrobustcmd\typemap{\cmdkl{\mathrm{type}}}
\newrobustcmd\Acc{\textrm{Acc}}
\newrobustcmd\Bcc{\textrm{Bcc}}
\knowledgenewrobustcmd\inducedmor[1]{\cmdkl{\smash{\tilde{#1}}}}
\knowledgenewrobustcmd\inducedalg[1]{\cmdkl{\?A_{#1}}}
\knowledgenewrobustcmd\inducedmorSg[1]{\cmdkl{\smash{\tilde{#1}}}}
\knowledgenewrobustcmd\inducedalgSg[1]{\cmdkl{\?A_{#1}}}
\knowledgenewrobustcmd\SyntSA[1]{\cmdkl{\?A_{\smash{#1}}}}
\knowledgenewrobustcmd\SyntSAM[2][]{\cmdkl{\eta_{\smash{#2}}^{#1}\!}}
\knowledgenewrobustcmd\SyntSAP[1]{\cmdkl{\?A_{\smash{#1}}}}
\knowledgenewrobustcmd\SyntSAMP[2][]{\cmdkl{\eta_{\smash{#2}}^{#1}\!}}
\newrobustcmd\consolSyntSAP[1]{\kl[\SyntSAP]{\?A_{#1}^{\kl[\consol]{\,\smash{0}}}}}
\newrobustcmd\consolSyntSAMP[1]{\kl[\SyntSAMP]{\eta_{#1}^{\kl[\consol]{\,\smash{0}}}}}
\knowledgenewrobustcmd\sgpSyntSAMP[1]{\cmdkl{\eta_{#1}^{\smash{\textsf{sgp}}}}}
\knowledgenewrobustcmd\quotient[2]{\cmdkl{{#1}/{#2}}}
\knowledgenewrobustcmd\residual[2][]{#2^{\cmdkl{-1}}_{\kl[\type]{#1}}}
\newcommand{\asympbar}{\mathrel{\raisebox{-.1em}{\scalebox{.6}{\ensuremath{\setstackgap{S}{0pt}%
	\stackMath\mathrel{\stackon{\frown}{\stackon{\raisebox{.135em}{\rule{.6em}{.08em}}}{\smile}}}}}}}}
\knowledgenewrobustcmd\congr[1]{%
	\mathrel{\cmdkl{\asympbar_{#1}}}%
}
\knowledgenewrobustcmd\compos{%
	\mathbin{\cmdkl{\circ}}%
}
\newrobustcmd{\coslicesurj}{\!\!\text{\rotatebox[origin=c]{60}{$\relbar\!\joinrel\!\twoheadrightarrow$}}\!}
\newrobustcmd{\coslice}{\!/}
\knowledgenewrobustcmd\projL[1]{\cmdkl{#1^{\textrm{sync}}}}
\knowledgenewrobustcmd\projA[1]{\cmdkl{#1^{\textrm{sync}}}}
\knowledgenewrobustcmd\corrAR{%
	\mathop{\cmdkl{\to}}%
}
\knowledgenewrobustcmd\corrRA{%
	\mathop{\cmdkl{\to}}%
}
\knowledgenewrobustcmd\corr{%
	\mathop{\cmdkl{\leftrightarrow}}%
}
\knowledgenewrobustcmd\Nil{\cmdkl{\+{(\mkern-1mu c\mkern-1mu o\mkern-2mu )\mkern-1mu F\mkern-2mu i\mkern-1mu n}}}
\knowledgenewrobustcmd\NilS{\cmdkl{\B{N}\textrm{il}}}
\knowledgenewrobustcmd\LocTriv{\cmdkl{\+{L\mkern-1mu I}}}
\knowledgenewrobustcmd\LocTrivS{\cmdkl{\B{LI}}}
\knowledgenewrobustcmd\idem{\cmdkl{\omega}}
\knowledgenewrobustcmd\dist{%
	\mathop{\cmdkl{\textbf{d}}}%
}
\knowledgenewrobustcmd\rist{%
	\mathop{\cmdkl{\textbf{r}}}%
}
\knowledgenewrobustcmd\typing{%
	\mathop{\cmdkl{\textbf{t}}}%
}
\knowledgenewrobustcmd\profSy[1]{\cmdkl{\smash{\widehat{#1}}}}
\knowledgenewrobustcmd\profSg[1]{\cmdkl{\smash{\widehat{#1}}}}
\knowledgenewrobustcmd\profeq{%
	\mathrel{\cmdkl{=}}%
}
\knowledgenewrobustcmd\profimp{%
	\mathrel{\cmdkl{\to}}%
}
\knowledgenewrobustcmd\profequiv{%
	\mathrel{\cmdkl{\leftrightarrow}}%
}
\knowledgenewrobustcmd\profdep{%
	\mathrel{\cmdkl{\asymp}}%
}
\knowledgenewrobustcmd\semprofdep[1]{\cmdkl{\lBrack} #1 \cmdkl{\rBrack}}
\knowledgenewrobustcmd\substitWord[2][\typing]{\cmdkl{#2^{#1}}}
\knowledgenewrobustcmd\substitProf[2][\typing]{\cmdkl{#2^{#1}}}
\knowledgenewrobustcmd\inducedProfdep[1]{\cmdkl{#1^{\smash{\textsf{sync}}}}}
\knowledgenewrobustcmd\EqVar[1]{\cmdkl{\+E_{\smash{#1}}}}
\knowledgenewrobustcmd\distvar[1][\B{V}]{%
\knowledgenewrobustcmd\ristvar[1][\B{V}]{%
\knowledgenewrobustcmd\simvar[1][\B{V}]{\mathrel{\cmdkl{\asympbar_{#1}}}}
\knowledgenewrobustcmd\freepro[2][\B{V}]{\cmdkl{\smash{\widehat{\mathbf{F}}}_{#1}#2}}
\knowledgenewrobustcmd\freeproproj[1][\B{V}]{\cmdkl{\pi_{#1}}}
\newrobustcmd\Pl[1][\B{V}]{%
\knowledge{\Pl}{notion}
\newrobustcmd\Cl[2][\B{V}]{%
\knowledge{\Cl}{notion}
\knowledgenewrobustcmd\MonadSync[1][2]{\cmdkl{\B{S}_{#1}}}
\knowledgenewrobustcmd\MonadSyncP[1][2]{\cmdkl{\B{S}^+_{#1}}}
\newrobustcmd\PictureTypedSet[5]{
	\begin{center}
		\begin{tikzpicture}
			\node[draw, circle] (ll) at (0, 0)  {$\ll$};
			\node[draw, circle] (lb) [above right= .7cm and 1.4cm of ll] {$\lb$};
			\node[draw, circle] (bl) [below right= .7cm and 1.4cm of ll] {$\bl$};
			
			\draw (ll) edge[->, loop left] node[left] {$#1$} (ll);
			\draw (ll) edge[->, dashed] node[midway, fill=white, opacity=.8] {$#2$} (lb);
			\draw (lb) edge[->, loop right] node[right] {$#3$} (lb);
			\draw (ll) edge[->, dashed] node[midway, fill=white, opacity=.8] {$#4$} (bl);
			\draw (bl) edge[->, loop right] node[right] {$#5$} (bl);
		\end{tikzpicture}
	\end{center}
}
\newrobustcmd{\sat}{\mathrel{\kl[\sat]{\vDash}}}
	\newtheorem{fact}[theorem]{Fact}
	\crefname{fact}{Fact}{Facts}
	\theoremstyle{acmdefinition}
	\newtheorem{question}[theorem]{Question}
	\crefname{question}{Question}{Questions}
\begin{document}

\maketitle

\begin{abstract}
	We introduce ``synchronous algebras'', an algebraic structure tailored to recognize automatic relations ("aka" synchronous relations, or regular relations). They are the equivalent of monoids for regular languages, however they conceptually differ in two points: first, they are typed and second, they are equipped with a dependency relation expressing constraints between elements of different types.

	The interest of the proposed definition is that it allows to lift, in an effective way, pseudovarieties of regular  languages to that of synchronous relations, and we show how algebraic characterizations of pseudovarieties of regular languages can be lifted to the pseudovarieties of synchronous relations that they induce.
	Since this construction is effective, this implies that the membership problem is decidable
	for (infinitely) many natural classes of automatic relations.
	A typical example of such a pseudovariety is the class of ``group relations'', defined as the relations recognized by finite-state synchronous permutation automata.

	In order to prove this result, we adapt two pillars of algebraic language theory to synchronous algebras: (a) any relation admits a syntactic synchronous algebra recognizing it, and moreover, the relation is synchronous if, and only if, its syntactic algebra is finite and (b) classes of synchronous relations with desirable closure properties ("ie" pseudovarieties) correspond to pseudovarieties of synchronous algebras.
\end{abstract}

\bigskip
\noindent\adfsmallleafright~
This pdf contains internal links: clicking on a "notion@@notice" leads
to its \AP ""definition@@notice"".

\section{Introduction}
\label{sec:intro}

\subsection{Background}


The landscape of rationality for $k$-ary relations of finite words ($k \geq 2$) is far more complex than for languages---recall that languages can be seen as unary relations of finite words---as depicted in \Cref{fig:landscape-rationality} on page~\pageref{fig:landscape-rationality}. Perhaps the most natural class is that of \AP""rational relations"", defined as relations accepted by non-deterministic two-tape automata---an input $(u,v)$ is described by writing $u$ on the first tape and $v$ and the second tape---that can move its two heads independently, from left to right---see \cite[\S 2.1]{carton_decision_2006} for a formal definition. For instance, the suffix relation is "rational".

Our paper focuses on "synchronous relations", "aka" "automatic relations" or "regular relations", defined as the "rational relations" that can be recognized by
"synchronous automata",
a subclass of the machines described above obtained by keeping a single head that
moves synchronously from left to right, reading one pair of letters after the other; we add
padding symbols $\pad$ at the end of the shorter word---see
\Cref{fig:ex-sync-auto}.
While the suffix relation is not "synchronous", typical examples include the prefix relation,
the same-length relation, etc.
"Synchronous relations" play a central role in the
definitions of automatic structures---introduced by Hodgson \cite{hodgson_theories_1976,hodgson_direct_1982,hodgson_decidabilite_1983} and rediscovered by Khoussainov \& Nerode \cite{goos_automatic_1995}, see \cite[\S XI, pp.~627--762]{blumensath_monadic_2023}.
They also have been studied in the context of graph databases \cite[Definition 3.1, p.7 \& Theorem 6.3, p.~13]{barcelo_expressive_2012}, see \cite[\S 8, p.~17]{figueira_foundations_2021} for more context \& results on \emph{extended} conjunctive regular path queries.

\begin{figure}[htb]
	\begin{center}
		\includegraphics[width=.4\linewidth,valign=m]{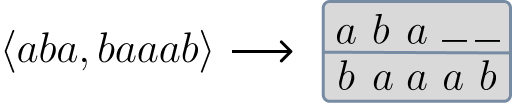}
		\qquad
		\begin{tikzpicture}[shorten >= 1pt, node distance = 1.8cm, on grid, baseline]
			\AP\small
			\node[state, initial left, accepting] (q0) {}; 
			\node[state] (q1) [right =of q0] {};
			\path[->]
				(q0) edge[loop above] node[font=\scriptsize] {$\pair{a}{a}, \pair{b}{b}, \textrm{Pad}\hspace{1.5em}$} (q0) 
				(q0) edge[bend left=20] node[above, font=\scriptsize] {$\pair{a}{b}, \pair{b}{a}$} (q1)
				(q1) edge[bend left=20] node[below, font=\scriptsize] {$\pair{a}{b}, \pair{b}{a}$} (q0)
				(q1) edge[loop above] node[font=\scriptsize] {$\hspace{1.5em}\pair{a}{a}, \pair{b}{b}, \textrm{Pad}$} (q1);
		\end{tikzpicture}
	\end{center}
	\caption{
		\label{fig:ex-sync-auto}
		Encoding a pair of words of $\Sigma^* \times \Sigma^*$ into an element of
		$(\SigmaPair)^*$ where $\SigmaPair \defeq (\Sigma \times \Sigma) \,\cup\,
		(\Sigma \times \{\pad\}) \,\cup\,
		(\{\pad\} \times \Sigma)$ (left) and 
		a deterministic complete "synchronous automaton" (right)
		over $\Sigma = \{a,b\}$ accepting the binary relation
		of pairs $(u,v)$ such that the number of $a$'s in $u_1\hdots u_k$
		and in $v_1\hdots v_k$ are the same mod $2$, where $k = \min(|u|, |v|)$. 
		$\textrm{Pad}$ denotes the set of transitions $\{\pair{a}{\pad}, \pair{b}{\pad}, \pair{\pad}{a}, \pair{\pad}{b}\}$.
	} 
\end{figure}

\begin{remark}
	All our results are described for binary relations, but can be extended to
	$k$-ary synchronous relations, see \Cref{sec:discussion}.
\end{remark}

"Synchronous relations" stand at the frontier between expressiveness and undecidability: for instance, Carton, Choffrut and Grigorieff showed that it is decidable whether an "automatic relation" is ""recognizable"" \cite[Proposition 3.9, p.~265]{carton_decision_2006}, meaning that
it can be written as a finite union of Cartesian products of regular languages.\footnote{For 
instance, the relation ``having the same length modulo $2$'' is "recognizable", since it can be 
written as $(aa)^*\times (aa)^* \cup
a(aa)^*\times a(aa)^*$.}\footnote{The problem was latter shown to be "NL"-complete and
"PSpace"-complete depending on whether the input automaton is deterministic or not
in \cite[Theorem 1, p.~3]{barcelo_monadic_2019}.}
"Synchronous relations" are effectively closed under Boolean operations---see "eg" \cite[Lemma XI.1.3, p.~627]{blumensath_monadic_2023},
and moreover, inclusion (and subsequent problems: universality, emptiness, equivalence…) is decidable for them,
by reduction to classical automata,
contrary to the equivalence problem over "rational relations"
which is undecidable \cite[Theorem 8.4, p.~81]{berstel_transductions_1979}.

However, some seemingly easy problems are undecidable: Köcher showed that
it is undecidable if the (infinite) graph defined by a "synchronous relation" is
2-colourable---\cite[Proposition 6.5, p.~43]{kocher_analyse_2014}, and Barceló, Figueira and Morvan showed that undecidability also
holds for regular 2-colourability \cite[Theorem 4.4, p.~8]{barcelo_separating_2023}.
On the other hand, one can decide
if said graph contains an infinite clique, see \cite[Corollary 5.5, p.~32]{kuske_natural_2010}:
this is a consequence of \cite[Theorem 3.20, p.~185]{Rubin_2008}. 

\subsection{Motivation}

Any "synchronous relation" can be seen as a regular language over the alphabet
$\SigmaPair \defeq (\Sigma \times \Sigma) \,\cup\,
(\Sigma \times \{\pad\}) \,\cup\, (\{\pad\} \times \Sigma)$ of pairs. On the other hand any regular language $L$ over $\SigmaPair$
produces a "synchronous relation" when intersected with the language of all
"well-formed words"---namely words where the padding symbols are consistently placed;
see \Cref{sec:preliminaries} for precise definitions. In fact, the semantics
of "synchronous automata" such as the one in \Cref{fig:ex-sync-auto} is precisely defined this way:
it is the intersection of the ``classical semantic'' of the automaton, seen as an NFA, intersected
with "well-formed words".

\begin{figure}[htbp]
	\begin{center}
		\includegraphics[width=.4\linewidth]{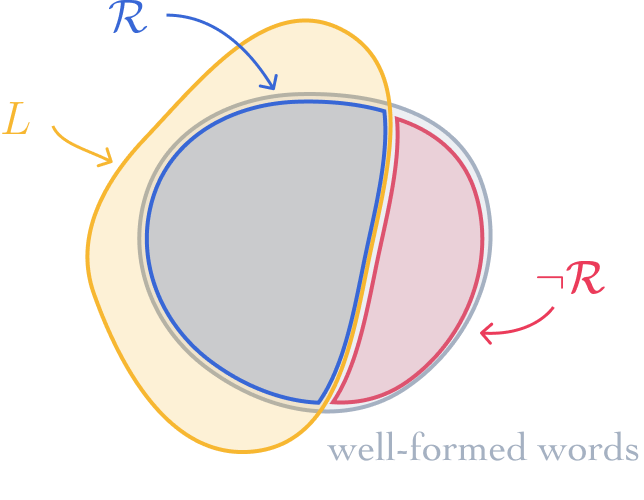}
	\end{center}
	\caption{
		\label{fig:projection}
		Drawing in $(\SigmaPair)^*$ of a "$\+V$-relation" $\+R$ and $\negrel\+R \defeq \{(u,v) \in \Sigma^*\times \Sigma^* \mid (u,v) \not\in \+R\}$, where $\+R$ is defined as $L \cap \WellFormed$ with $L \in \+V$.
	} 
\end{figure}

In particular, a class $\+V$ of regular languages over
$\SigmaPair$
("e.g." first-order definable languages, group languages, etc.) induces a class of so-called
"$\+V$-relations", defined as the relations over $\Sigma$ obtained as the intersection of some 
language of $\+V$ with "well-formed words", see \Cref{fig:projection}.
For instance, the relation of \Cref{fig:ex-sync-auto}
is a "$\+V$-relation" where $\+V$ is the class of all group languages---these relations can be 
alternatively described as those recognized by a deterministic complete synchronous automaton whose 
transitions functions are permutations of states.

\begin{question}
	\label{quest:V-relations}
	Given a class $\+V$ of languages, can we characterize and decide the class of "$\+V$-relations"?
\end{question}

As we will see in \Cref{ex:group-languages}, for a "relation" to be $\+V_{\SigmaPair}$
is not necessary for it to be a "$\+V$-relation".

\subsection{Contributions}

We answer positively to this question.
For this we first need to develop an algebraic theory of "synchronous relations",
which enables us to prove the lifting theorem. In short, the "lifting theorem" states that algebraic characterizations of classes of word languages can be lifted in a canonical way to algebraic characterizations of classes of word relations.

The algebraic approach usually provides more than decidability: it attaches
canonical algebras to languages/relations ("eg" monoids for languages of finite words), and often simple ways to characterize complex properties ("eg" first-order definability, see "eg" \cite[Theorem 2.6, p.~40]{bojanczyk_languages_2020}).
Our "synchronous algebras" differ from monoids in two points:
\begin{itemize}
	\item they are typed---a quite common feature in algebraic language theory, shared "eg" by $\omega$-semigroups \cite[\S 4.1, p.~91]{perrin_infinite_2004};
	\item they are equipped with a "dependency relation", which expresses constraints between 
	elements of different types---to our knowledge, this feature is entirely novel.\footnote{Note that algebras equipped with binary relations have been studied before, "eg" Pin's ordered 
	$\omega$-semigroups---see \cite[\S 2.4, p.~7]{pin_positive_1998}---but the constraints (here the 
	orderings) are always defined between elements of the \emph{same type}.}
\end{itemize}

Importantly, some variations are possible on the definition of "synchronous algebras":
in particular, one could get rid of the notion of "dependency relation" and 
\Cref{lem:syntactic-morphism-theorem,lem:eilenberg-sy} would still hold.
However, we show in \Cref{apdx:counterexample} that these
simplified synchronous algebras cannot characterize the property of being a "$\+V$-relation".
Therefore, the notion of "dependency" seems necessary to tackle \Cref{quest:V-relations}.
Moreover, we show that these algebras arise from a monad, but to our knowledge none of the 
meta-theorems developing algebraic language theories over monads apply to it,
see \Cref{apdx:monads} for more details.

We show that assuming that $\+V$ is a "$*$-pseudovariety of regular languages"---in short, a class of regular languages with desirable closure properties---, then the algebraic characterization of $\+V$ can be easily lifted to characterize "$\+V$-relations".

\begin{restatable*}[\reintro{Lifting theorem: Elementary Formulation}]{theorem}{liftingtheoremmonoids}
	\label{thm:lifting-theorem-monoids}
	Given a "relation" $\+R$ and a "$\ast$-pseudovariety of regular languages" $\+V$
	"corresponding@@EilenbergSg" to a "pseudovariety of monoids" $\B{V}$,
	the following are equivalent:
	\begin{enumerate}
		\item $\+R$ is a "$\+V$-relation",
		\item $\+R$ is "recognized@@sync" by a finite "synchronous algebra" $\mathbf{A}$
			whose "underlying monoids" are all in $\mathbb{V}$,
		\item all "underlying monoids" of the "syntactic synchronous algebras" $\SyntSA{\+R}$ of
			$\+R$ are in $\mathbb{V}$.
	\end{enumerate} 
\end{restatable*}

This theorem rests on a solid algebraic theory. 
First, we show the existence of "syntactic algebras@@sync" (\Cref{lem:syntactic-morphism-theorem}): 
each relation $\+R$ admits a unique canonical and minimal algebra $\SyntSA{\+R}$, which is finite 
"iff" the relation is "synchronous",
and then, we exhibit a correspondence between classes of finite algebras and classes of
synchronous relations (\Cref{lem:eilenberg-sy})---we assume suitable closure properties; these classes are called ``pseudovarieties''.
While the proof structures of \Cref{lem:syntactic-morphism-theorem,lem:eilenberg-sy} follow the classic proofs, see "eg" \cite{pin_mathematical_2022},
the "dependency relation" has to be taken into account quite carefully, leading for instance
to a surprising definition of "residuals", see \Cref{def:residuals}.

\subparagraph*{Organization.} After giving preliminary results in \Cref{sec:preliminaries}, we introduce
the "synchronous algebras" in \Cref{sec:synchronous-algebras} and show the existence of
"syntactic algebras@@sync". We then proceed to prove the "lifting theorem@@monoids" for 
"$*$-pseudovarieties@@reglang" in \Cref{sec:lifting-theorem}, and after introducing "$*$-pseudovarieties of synchronous relations", we provide a more algebraic reformulation of the "lifting 
theorem@@monoidspseudovar" (\Cref{thm:lifting-theorem-monoids-pseudovarieties}).
We conclude the paper with
a short discussion in \Cref{sec:discussion}.

\subsection{Related Work}

\begin{figure}[htb]
	\centering
	\includegraphics[width=\linewidth]{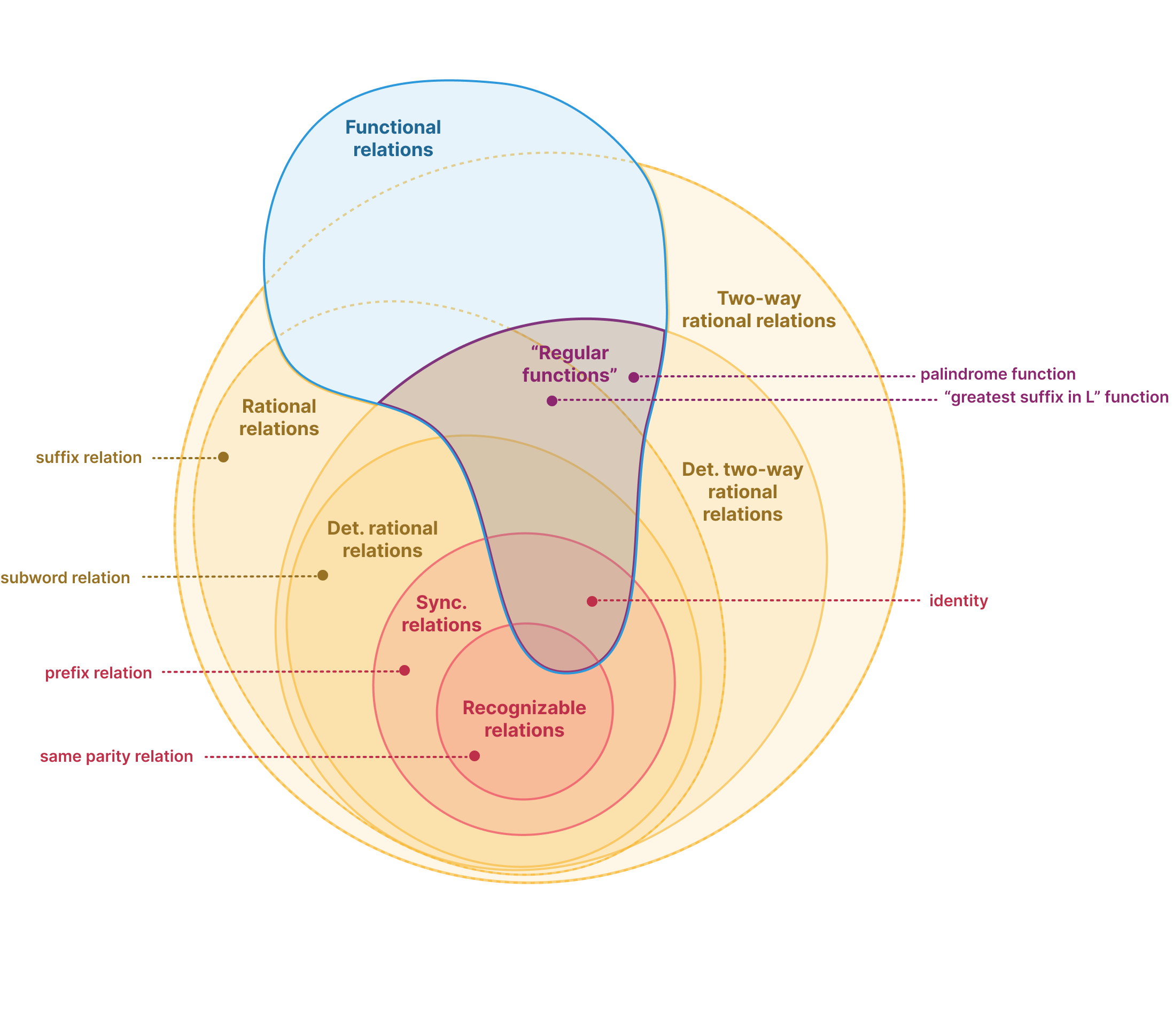}
	\caption{
		\label{fig:landscape-rationality} The landscape of rationality for binary relations.
		Dashed regions are empty: the intersection of
		functional relations and two-way rational relations
		collapses to regular functions by
		\cite[Theorem 22, p.~243]{EH2001transduction}.
	}
\end{figure}
The algebraic framework has been extended far beyond languages of finite words: let us cite amongst other Reutenauer's ``algèbre associative syntactique'' for weighted languages
\cite[Théorème I.2.1, p.~451]{reutenauer_series_1980} and their associated Eilenberg theorem \cite[Théorème III.1.1, p.~469]{reutenauer_series_1980};
for languages of $\omega$-words, Wilke's algebras and $\omega$-semigroups,
see \cite[\S II, pp.~75--131 \& \S VI, pp.~265--306]{perrin_infinite_2004};
more generally, for languages over countable linear orderings, see Carton, Colcombet \& Puppis' ``$\circledast$-monoids'' and ``$\circledast$-algebras''
\cite[\S 3, p.~7]{carton_algebraic_2018}.
A systemic approach has been recently developed using monads, see \Cref{apdx:monads}.
Non-linear structures are also suited to such an approach, see "eg" Bojańczyk \& Walukiewicz's 
forest algebras \cite[\S 1.3, p.~4]{bojanczyk_forest_2008} \cite[\S 5, p.~159]{bojanczyk_languages_2020}, or Engelfriet's hyperedge replacement algebras for graph languages
\cite[\S 2.3, p.~100]{courcelle_graph_2012} \cite[\S 6.2, p.~194]{bojanczyk_recognisable_2015}.
For relations over words ("aka" transductions), "recognizable 
relations" are exactly the ones recognized by monoid morphisms $\Sigma^* \times \Sigma^* \to M$ 
where $M$ is finite. This can be trivially generalized to show 
that a relation $\+R$ is a finite union of Cartesian products of languages in $\+V$ if, and only 
if, it is recognized by a monoid from $\B{V}$, the pseudovariety of monoids corresponding to
$\+V$, see \Cref{apdx:relations-recognizable}.
In 2023, Bojańczyk \& Nguy\smash{\~{\^e}}n 
managed to develop an algebraic structure called ``transducer semigroups'' for ``regular functions'' \cite[Theorem 3.2, p.~6]{bojanczyk_algebraic_2023}, an 
orthogonal class of relations to ours---see \Cref{fig:landscape-rationality}.

The counterpart of "$\+V$-relations" for rational relations---that we call here $\+V$-rational relations---was studied by Filiot, Gauwin \& Lhote \cite{filiot_logical_2019}: they show that if
$\+V$ has decidable membership, then ``$\+V$-rational transductions'' also have decidable membership
\cite[Theorem 4.10, p.~26]{filiot_logical_2019}.
``Rational transductions'' correspond in \Cref{fig:landscape-rationality} to the intersection of functional relations with rational relations: this class
is orthogonal to "synchronous relations",
but is included in the class of ``regular functions''.
A different problem---focussing more on the semantics of the transduction---, called ``$\+V$-continuity'' was studied by Cadilhac, Carton \& Paperman \cite[Theorem 1.3, p.~3]{cadilhac_continuity_2020}, although it has to be noted that their results only concern
a finite number of pseudovarieties.
\section{Preliminaries}
\label{sec:preliminaries}

\subsection{Automata \& Relations}

We assume familiarity with basic algebraic language theory over finite words, see \cite[\S 1, 2, 4, pp.~3--66 \& pp.~107--156]{bojanczyk_languages_2020} for a succinct and monad-driven approach, or \cite[\S I--XIV, pp.~3--247]{pin_mathematical_2022} for a more detailed presentation of the domain.
We also refer to \cite{SW21varieties} for a presentation on pseudovarieties.\footnote{``Pseudovarieties of \emph{foo}'' and ``varieties of finite \emph{foo}''---where \emph{foo} is "eg" ``groups''
or ``semigroups''---are used interchangeably in the literature.}
More precise pointers are given in \Cref{apdx:pointers-pin}.

A \AP""relation"" is a subset of $\Sigma^*\times\Sigma^*$,
where $\Sigma$ is an alphabet---"ie" a non-empty finite set.
We define its \AP""complement@@rel"" \AP$\intro*\negrel \+R$ as the "relation" $\{(u,v) \in \Sigma^*\times \Sigma^* \mid (u,v) \not\in \+R\}$.
Letting
$\intro*\SigmaPair \defeq
(\Sigma \times \Sigma) \,\cup\,
(\Sigma \times \{\pad\}) \,\cup\,
(\{\pad\} \times \Sigma)$, a ""synchronous automaton"" is a finite-state machine with initial states, final states, and
non-deterministic transitions labelled by elements of $\SigmaPair$.
We denote by \AP$\intro*\WellFormed$ the set of \AP""well-formed"" words over $\SigmaPair$ where the padding symbols are placed consistently, namely: if some padding symbol occurs on a tape/component, then the following symbols of this tape/component must all be padding symbols.
From this constraint, and since $\pair{\pad}{\pad} \not\in \SigmaPair$,
there can never be padding symbols on both tapes.

Note that elements of $\WellFormed$ are in natural bijection with $\Sigma^*\times\Sigma^*$---see
\Cref{fig:ex-sync-auto}.
The relation recognized by a "synchronous automaton" is the set of pairs $(u,v) \in \Sigma^*\times\Sigma^*$ such that their corresponding element in $\WellFormed$ is the label of
an accepting run of the automaton. We say that a relation is \AP""synchronous"" if it is recognized 
by such a machine.

\begin{remark}
	\label{rk:semantic}
	Crucially, in the semantics of "synchronous automata" we \emph{never}
	try to feed them inputs where the padding symbols are not consistent: for instance, while
	\[
		\pair{aab}{b\pad a},
		\text{ or }
		\pair{aba\pad}{a\pad\pad b}	
	\]
	are sequences in $(\SigmaPair)^*$, the behaviour of a "synchronous automaton"
	on such sequences is completely disregarded to define the relation it recognizes. 
\end{remark}

We can then reformulate the definition of the semantics of a "synchronous automaton",
to make the connection with "$\+V$-relations"---see the next subsection---explicit.

\begin{fact}
	\label{fact:synchronous-is-regular}
	Given a "synchronous automaton", its semantics as a "synchronous automaton"
	can be written as the intersection of its semantics as a classical automaton over $\SigmaPair$
	with $\WellFormed$.
\end{fact}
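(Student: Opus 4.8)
The plan is to unwind the two definitions of the semantics and observe that they coincide on the nose, once one fixes the canonical bijection between $\WellFormed$ and $\Sigma^*\times\Sigma^*$. Write $\+A$ for the given "synchronous automaton", regarded as an NFA over the alphabet $\SigmaPair$, and let $L(\+A) \subseteq (\SigmaPair)^*$ be its classical semantics, "ie" the set of words that label an accepting run of $\+A$. Let $\beta \colon \WellFormed \to \Sigma^*\times\Sigma^*$ be the natural bijection recalled above (and illustrated in \Cref{fig:ex-sync-auto}), so that for $(u,v) \in \Sigma^*\times\Sigma^*$, $\beta^{-1}(u,v)$ is the unique "well-formed" word encoding the pair $(u,v)$.

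First I would recall that, by definition, the relation recognized by $\+A$ \emph{as a "synchronous automaton"} is the set $\{(u,v) \in \Sigma^*\times\Sigma^* \mid \beta^{-1}(u,v) \text{ labels an accepting run of } \+A\}$, where we stress — as in \Cref{rk:semantic} — that $\+A$ is only ever run on well-formed inputs. Since ``labelling an accepting run'' is exactly membership in $L(\+A)$, this set equals $\{(u,v) \mid \beta^{-1}(u,v) \in L(\+A)\}$. Now $\beta^{-1}(u,v)$ always lies in $\WellFormed$, so $\beta^{-1}(u,v) \in L(\+A)$ if, and only if, $\beta^{-1}(u,v) \in L(\+A) \cap \WellFormed$; applying $\beta$, the synchronous semantics is therefore $\beta\bigl(L(\+A) \cap \WellFormed\bigr)$, which — under the identification of $\WellFormed$ with $\Sigma^*\times\Sigma^*$ via $\beta$ — is precisely $L(\+A) \cap \WellFormed$, as claimed.

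There is essentially no obstacle here: the statement is definitional bookkeeping, and the only point requiring a modicum of care is to keep track of the bijection $\beta$ so as not to conflate a pair of words with its unique well-formed encoding. If anything, the ``content'' of the fact is the observation underlying \Cref{rk:semantic}, namely that the behaviour of $\+A$ on the non-well-formed words of $(\SigmaPair)^*$ is entirely discarded in the synchronous semantics — and intersecting with $\WellFormed$ is exactly what implements that.
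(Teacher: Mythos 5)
Your proof is correct and matches the paper's treatment: the paper states this fact without proof precisely because it is the definitional unwinding you carry out, namely that the synchronous semantics only ever consults well-formed encodings, so it coincides with the classical semantics restricted to $\WellFormed$ under the natural bijection. Your explicit bookkeeping with the bijection $\beta$ is the only content there is, and it is handled correctly.
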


In particular a relation $\+R$ is "synchronous" if, and only if, it is a regular language when seen as a subset of $(\SigmaPair)^*$.

\subsection{Induced Relations}

Given a class $\+V$ of regular languages,
the class of \AP""$\+V$-relations"" over $\Sigma$ consists of all "relations"
of the form $L \cap \WellFormed$ for some $L \in \+V_{\SigmaPair}$---see \Cref{fig:projection}.\footnote{The notation $L \in \+V_{\SigmaPair}$ means that $L$ is a language over
the alphabet $\SigmaPair$. See \cite[introduction of \S{}XIII.1]{pin_mathematical_2022}
for why classes of regular languages are defined in such a way.}

\begin{figure}[tbp]
	\begin{center}
		\scalebox{1}{
		\begin{tikzpicture}[shorten >= 1pt, node distance = 2cm, on grid, baseline]
			\AP\small
			\node[state, initial left, accepting] (q0) {}; 
			\node[state] (q1) [right =of q0] {};
			\node[state, accepting] (q0') [below left = 1.5cm and 1cm of q0] {};
			\node[state, accepting] (q0'') [below right = 1.5cm and 1cm of q0] {};
			\node[state] (qbot) [below right = 1.5cm and 1cm of q0''] {};
			\path[->]
				(q0) edge[loop above] node[font=\scriptsize] {$\pair{a}{a}, \pair{b}{b}$} (q0) 
				(q0) edge[bend left=20] node[above=0pt, font=\scriptsize] {$\pair{a}{b},\pair{b}{a}$} (q1)
				(q1) edge[bend left=20] node[above=0pt, font=\scriptsize] {$\pair{a}{b},\pair{b}{a}$} (q0)
				(q1) edge[loop above] node[font=\scriptsize] {$\pair{a}{a}, \pair{b}{b}$} (q1)
				(q0) edge node[left] {$\pair{a}{\pad}, \pair{b}{\pad}$} (q0')
				(q0) edge node[right] {$\pair{\pad}{a}, \pair{\pad}{b}$} (q0'')
				(q0') edge[loop left] node[left] {$\pair{a}{\pad}, \pair{b}{\pad}$} (q0')
				(q0'') edge[loop right] node[right] {$\pair{\pad}{a}, \pair{\pad}{b}$} (q0'')
				(q0') edge[bend right=20] node[left] {$*$} (qbot)
				(q0'') edge node[right] {$*$} (qbot)
				(q1) edge[bend left=90] node[right] {$*$} (qbot)
				(qbot) edge[loop below] node[below] {$*$} (qbot);
		\end{tikzpicture}
		}
	\end{center}
	\caption{
		\label{fig:min-auto}
		Minimal (deterministic complete) ``classical'' automaton for 
		the binary relation
		of pairs $(u,v)$ such that the number of $a$'s in $u_1\hdots u_k$
		and in $v_1\hdots v_k$ are the same mod $2$, where $k = \min(|u|, |v|)$,
		seen as a language over $\SigmaPair$.
		Said otherwise, this is automaton rejects exactly all words in $(\SigmaPair)^*$ which (1)
		are not the valid encoding of a pair of words and (2) are the encoding of a pair
		which does not satisfy the property above.
		Each label $*$ is defined so that the automaton is deterministic and complete.
	} 
\end{figure}
For instance, if $\+V$ is the class of all regular languages, then by
\Cref{fact:synchronous-is-regular}, "$\+V$-relations" are exactly the "regular relations", "aka" "synchronous relations"!
However, because of
\Cref{rk:semantic}, the minimal automaton for a relation, seen as a language over $\SigmaPair$,
can be significantly more complex than a deterministic complete "synchronous automaton" recognizing it, see \Cref{fig:min-auto} in page~\pageref{fig:min-auto}---while the size blow-up is only polynomial, it breaks many of the structural properties of the automaton, such as the property of being a permutation automaton.

Note that if $\+R$ belongs to $\+V$ when $\+R$ is seen as a language over $\SigmaPair$,
then $\+R$ is a "$\+V$-relation".
The converse implication holds under some strong assumption on $\+V$ (\Cref{fact:a-triviality-on-trivial-relations}),
but is not true in general (\Cref{ex:group-languages}).

\begin{fact}
	\label{fact:a-triviality-on-trivial-relations}
	If $\+V$ is a class of languages closed under intersection and that contains $\WellFormed$, then
	a relation $\+R$ is a "$\+V$-relation" if, and only if, it belongs to $\+V$ when seen
	as a language over $\SigmaPair$.
\end{fact}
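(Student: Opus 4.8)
The plan is to unfold the definition of "$\+V$-relation" and check both implications directly; the one substantive observation is that a "relation", once viewed as a language over $\SigmaPair$, is automatically contained in $\WellFormed$. Concretely, since a "relation" $\+R$ is a subset of $\Sigma^*\times\Sigma^*$ and the elements of $\WellFormed$ are in natural bijection with $\Sigma^*\times\Sigma^*$, the corresponding language $\+R\subseteq(\SigmaPair)^*$ satisfies $\+R\subseteq\WellFormed$, hence $\+R=\+R\cap\WellFormed$.

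For the ``if'' direction I would argue as follows. If $\+R\in\+V_{\SigmaPair}$ (i.e.\ $\+R$ belongs to $\+V$ when seen as a language over $\SigmaPair$), then setting $L\defeq\+R$ we have $L\in\+V_{\SigmaPair}$ and $\+R=L\cap\WellFormed$ by the observation above, so $\+R$ is a "$\+V$-relation" straight from the definition. This uses no closure hypothesis on $\+V$ and is precisely the remark made just before the statement.

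For the ``only if'' direction, suppose $\+R$ is a "$\+V$-relation", so that $\+R=L\cap\WellFormed$ for some $L\in\+V_{\SigmaPair}$. Here I would invoke the two hypotheses on $\+V$: since $\WellFormed\in\+V_{\SigmaPair}$ and $\+V_{\SigmaPair}$ is closed under intersection, the language $\+R=L\cap\WellFormed$ lies in $\+V_{\SigmaPair}$, which is exactly the statement that $\+R$ belongs to $\+V$ when seen as a language over $\SigmaPair$.

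I do not expect any real obstacle here: both implications follow in one line from the definition of "$\+V$-relation" together with the bijection $\WellFormed\leftrightarrow\Sigma^*\times\Sigma^*$. The only point that deserves a little care is bookkeeping of alphabets — the hypotheses ``contains $\WellFormed$'' and ``closed under intersection'' must be applied within $\+V_{\SigmaPair}$, the component of $\+V$ over the alphabet $\SigmaPair$ (as emphasized in the footnote to the definition of "$\+V$-relations"), rather than over $\Sigma$.
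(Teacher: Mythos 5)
Your proof is correct and is exactly the argument the paper intends: the paper states this as a \emph{Fact} without an explicit proof, and the ``if'' direction is already noted in the surrounding text, while the ``only if'' direction is the one-line application of closure under intersection together with $\WellFormed\in\+V_{\SigmaPair}$. Your key observation that $\+R\subseteq\WellFormed$ as a language over $\SigmaPair$ (hence $\+R=\+R\cap\WellFormed$) and your care about working within the component $\+V_{\SigmaPair}$ are both right.
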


Classes of languages $\+V$ satisfying the previous assumption ("eg" first-order definable languages, piecewise-testable languages, etc.) are easy to capture when
it comes to "$\+V$-relations" since this class reduces to $\+V$-languages.
So, in the remaining of the paper, we will focus on classes $\+V$ which do not satisfy
the assumptions of \Cref{fact:a-triviality-on-trivial-relations}, such as group languages.

\begin{example}[Group relations]
	\label{ex:group-languages}
	If $\+V$ is the class
	of group languages, namely languages recognized by permutation automata\footnote{A permutation 
	automaton is a finite-state deterministic complete automaton whose transition functions are all 
	permutations of states.} or equivalently by a finite group, then we call
	"$\+V$-relations" \AP``""group relations""''. They can be characterized 
	as relations recognized by permutation "synchronous automata". For instance, the relation
	of \Cref{fig:ex-sync-auto} is a "group relation" as witnessed by the permutation synchronous automaton of \Cref{fig:ex-sync-auto}. Note however that it is not a group language, when seen as a language over $\SigmaPair$, since its minimal automaton over $\SigmaPair$ is not
	a permutation automaton, see \Cref{fig:min-auto} on \cpageref{fig:min-auto}.
\end{example}

\begin{fact}
	\label{fact:separability}
	Given a relation $\+R$ and a class $\+V$ of languages, the following are equivalent:
	\begin{enumerate}
		\item $\+R$ is a "$\+V$-relation";
		\item $\+R$ and $\negrel \+R$ are $\+V$-separable as languages over $\SigmaPair$,
		"ie" there is a language in $\+V$ which contains $\+R$ and does not intersect $\negrel \+R$.
	\end{enumerate}
\end{fact}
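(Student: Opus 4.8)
The plan is to exploit the canonical bijection between $\WellFormed$ and $\Sigma^*\times\Sigma^*$ and regard both $\+R$ and $\negrel\+R$ as \emph{subsets of $\WellFormed$} (languages over $\SigmaPair$ that happen to be included in $\WellFormed$). The key elementary observation is that under this identification the two relations partition $\WellFormed$: $\+R\cup\negrel\+R=\WellFormed$ and $\+R\cap\negrel\+R=\emptyset$, whereas neither statement would hold inside the larger set $(\SigmaPair)^*$. Once this is in place, each direction of the equivalence is a one-line set-theoretic manipulation.

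First I would prove $(1)\Rightarrow(2)$. Assume $\+R=L\cap\WellFormed$ for some $L\in\+V_{\SigmaPair}$. I claim $L$ itself is a $\+V$-separator of $\+R$ and $\negrel\+R$. On the one hand $\+R\subseteq L$ directly by definition. On the other hand, using $\negrel\+R\subseteq\WellFormed$, we get $L\cap\negrel\+R=L\cap\WellFormed\cap\negrel\+R=\+R\cap\negrel\+R=\emptyset$, so $L$ is disjoint from $\negrel\+R$.

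Then I would prove $(2)\Rightarrow(1)$. Let $L\in\+V$ with $\+R\subseteq L$ and $L\cap\negrel\+R=\emptyset$; I claim $\+R=L\cap\WellFormed$, which exhibits $\+R$ as a $\+V$-relation. The inclusion $\subseteq$ follows from $\+R\subseteq L$ together with $\+R\subseteq\WellFormed$. For $\supseteq$, take $w\in L\cap\WellFormed$; since $w\in\WellFormed=\+R\cup\negrel\+R$ and $w\in L$ forbids $w\in\negrel\+R$, we conclude $w\in\+R$.

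There is no genuine obstacle here; the statement is essentially a reformulation of the definition of a $\+V$-relation once one recalls \Cref{fact:synchronous-is-regular} and the bijection $\WellFormed\cong\Sigma^*\times\Sigma^*$. The only point that deserves a moment of care — and the reason the fact is worth stating explicitly — is to keep the identification of relations with subsets of $\WellFormed$ consistent throughout, so that complementation of relations (taken inside $\Sigma^*\times\Sigma^*$) matches complementation relative to $\WellFormed$ rather than relative to $(\SigmaPair)^*$.
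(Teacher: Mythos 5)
Your proof is correct and is exactly the argument the paper has in mind: the paper's own proof is just ``By definition, see \Cref{fig:projection}'', and your spelled-out version (identifying $\+R$ and $\negrel\+R$ with the two blocks of the partition of $\WellFormed$ and checking both inclusions) is the intended unpacking of that one-liner. No gaps.
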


\begin{proof}
	By definition, see \Cref{fig:projection}, on page \pageref{fig:projection}.
\end{proof}

And so, if the $\+V$-separability problem is decidable, then the class of "$\+V$-relations"
is decidable. However, there are pseudovarieties $\+V$ with decidable membership but 
undecidable separability problem \cite[Corollary 1.6, p.~478]{rhode_pointlike_2011}.\footnote{The paper cited only claims undecidability of pointlikes, but it was noted in \cite[\S 1, pp.~1--2]{gool_pointlike_2019} that undecidability of the 2-pointlikes also holds, which is a problem 
equivalent to separability by \cite[Proposition 3.4, p.~6]{almeida_algorithmic_1999}.}
Moreover, some of these classes do not contain $\WellFormed$ \cite[Corollary 1.7, p.~478]{rhode_pointlike_2011}. But beyond 
this, even when a separation algorithm exists, it can be conceptually much harder than its
membership counterpart: for instance, deciding membership for group languages is trivial---it boils down to checking if a monoid is a group---, yet the decidability of the 
separation problem for group languages is considered to be one of the major results in semigroup theory:
it follows from Ash's infamous type II theorem \cite[Theorem 2.1, p.~129]{ash_inevitable_1991}, see \cite[Theorem 1.1, p.~3]{henckell_ashs_1991} for a presentation of the result in terms of pointlike sets, see also \cite[\S III, Theorem 8, p.~5]{place_group_2023} for an elegant automata-theoretic reformulation.

\section{Synchronous Algebras}
\label{sec:synchronous-algebras}

In this section, we introduce and study the ``elementary'' properties of "synchronous algebras".

\subsection{Types \& dependent Sets}

\subparagraph*{Motivation.} The axiomatization of a semigroup reflects the algebraic structure of
finite words: these objects can be concatenated, in an associative way---reflecting the linearity of 
words. Now observe that elements of $\WellFormed$ are still linear, but
not all words can be concatenated together: for instance, $\pair{a}{\pad}$
cannot be followed by $\pair{a}{b}$.
Formally, given two words $u, v \in \WellFormed$, to decide if $uv \in \WellFormed$
it is necessary and sufficient to know if the last pair of $u$ and first pair of $v$
consists of a pair of proper letters (denoted by \AP$\intro*\ll$), a pair of a proper letter and a blank/padding symbol (\AP$\intro*\lb$) or a pair of a blank/padding symbol and a proper letter (\AP$\intro*\bl$). This information is called the \AP""letter-type"" of an element of $\SigmaPair$.

We then define the \AP""type"" of a word of $(\SigmaPair)^+$ as the pair $(\alpha, \beta)$,
usually written $\alpha \to \beta$, of the "letter-types" of its first and last letters.
It is then routine to check that the possible types of "well-formed words" are
\[
	\intro*\types \defeq \big\{
		\ll\to\ll,\; \ll\to\lb,\; \lb\to\lb,\; \ll\to\bl,\; \bl\to\bl
	\big\}.
\]
For the sake of readability, we will write $\alpha$ instead of $\alpha \to \alpha$
for $\alpha \in \{\ll, \lb, \bl\}$.

One non-trivial point lies in the following innocuous question: what is the "type" of the empty word? Any "type" of $\types$ sounds like an acceptable answer. But then
it would be natural to say that the concatenation of $\pair{aaa}{aaa}$ of "type" $\ll$ with the empty word of "type" $\ll\to\lb$ should be $\pair{aaa}{aaa}$ of "type" $\ll\to\lb$. Automata-wise,
this would represent a sequence of transitions $\pair{a}{a}, \pair{a}{a}, \pair{a}{a}$ together with
the promise that the next transition would have a padding symbol on its second tape. But then, 
one has to formalize the idea that the two elements $\pair{aaa}{aaa}$ of type $\ll$ and $\ll\to\lb$
represent the same underlying pair of words of $\Sigma^*\times\Sigma^*$: this idea will be captured by what we call a "dependency relation". A more natural solution would be to simply introduce a
new type for the empty word (or to forbid it), but we show in \Cref{apdx:counterexample} that
the resulting notion of algebras cannot capture the property of being a "$\+V$-relation".

A \AP""$\types$-typed set"" (or \reintro{typed set} for short) consists of
a tuple $\?X = (X_\tau)_{\tau \in \types}$, where each $X_\tau$ is a set.
Instead of $x \in X_\tau$, we will often write \AP$\intro*\type{x}{\tau} \in \?X$.
A \AP""map between typed sets"" $\?X$ and $\?Y$ is a collection of functions
$X_\tau \to Y_\tau$ for each "type" $\tau$.
Similarly, a subset of $\?X$ is a tuple of subsets of $X_\tau$ for each "type" $\tau$.
To make the notations less heavy, we will often think of
"typed sets" as sets with type annotations rather than tuples, and ask that
all operators/constructions should preserve this "type".

\begin{definition}
	\label{def:dependency}
	A \AP""dependency relation"" over a "typed set" $\?X$ consists of
	a reflexive and symmetric relation $\intro*\dep$ over
	\AP$\intro*\disunion \?X \defeq \bigcup_{\tau \in \types} X_\tau \times \{\tau\}$,
	such that for all $\type{x}{\sigma}, \type{y}{\sigma} \in \?X$,
	if $\type{x}{\sigma} \dep \type{y}{\sigma}$,
	then $\type{x}{\sigma} = \type{y}{\tau}$.

	Crucially, we do not ask for "this relation@dependency" to be transitive---in some examples the "dependency relation" will be an equivalence relation, but not always (\Cref{ex:last_letter_is_a_if_big_diff}), and
	this non-transitivity is actually an important feature, motivated amongst other by the
	"syntactic congruence" and \Cref{coro:syntactic-congruence-is-syntactic-dependency}.

	A \AP""dependent set"" is a "$\types$-typed set" together with a
	"dependency relation" over it. A \AP""closed subset"" of a "dependent set" $\langle \?X, \dep \rangle$ is a subset $C \subseteq \?X$ such that for all
	$x, x' \in \?X$, if $x \dep x'$ then $x \in C \iff x' \in C$.\footnote{In other
	words, $C$ is a union of equivalence classes of the transitive closure of $\dep$.}
\end{definition}
\begin{figure}[htb]
	\begin{center}
		\includegraphics[width=\linewidth]{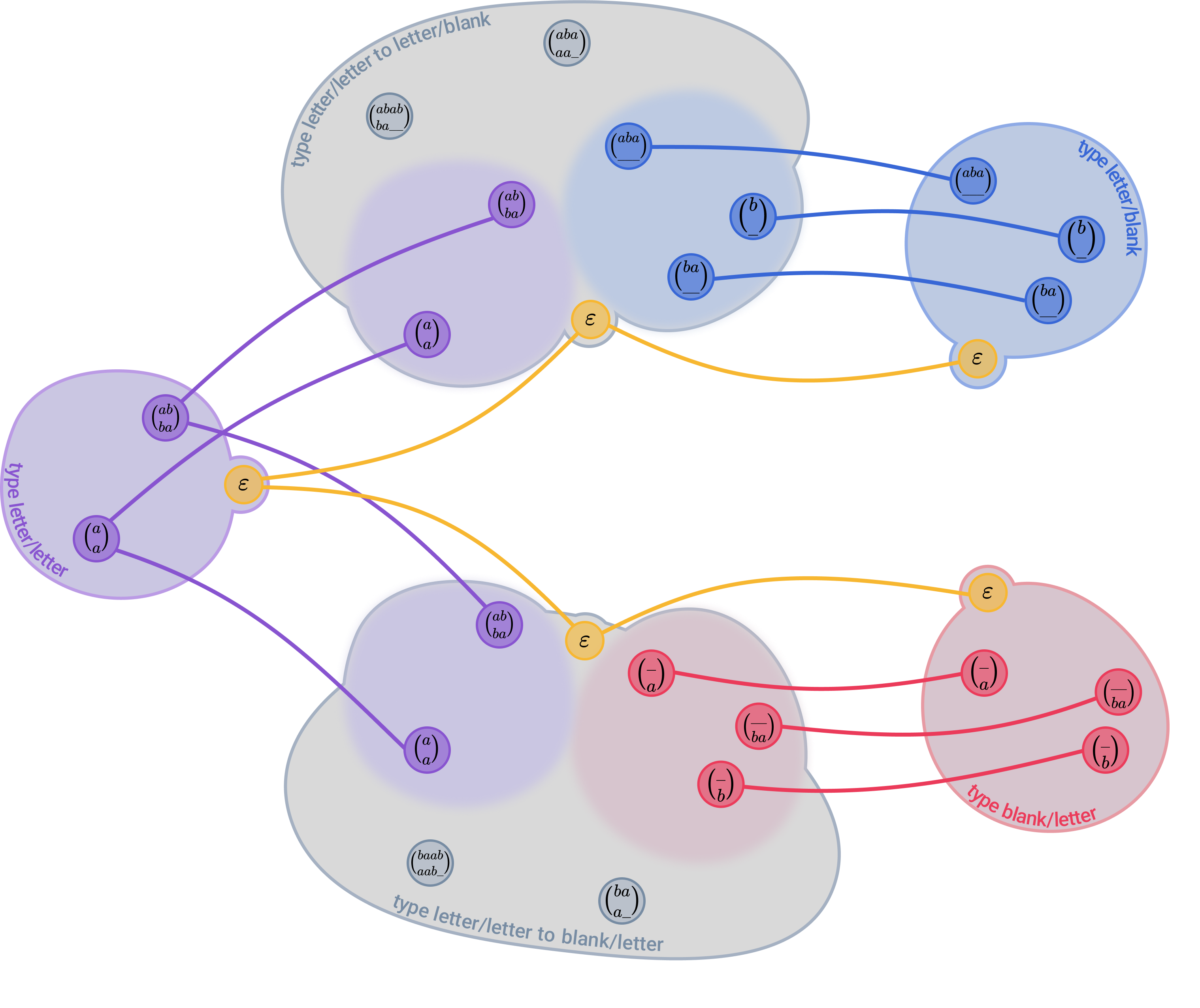}
	\end{center}
	\caption{
		\label{fig:free_algebras}
		Representation of the "dependent set" $\Sync\Sigma$ of "synchronous words".
		Coloured edges represent the "dependency relation", and self-loops are not drawn. 
	} 
\end{figure}
\begin{example}
	Given a finite alphabet $\Sigma$, let $\intro*\Sync{\Sigma}$ be\footnote{The index refers to the arity of the relations we are considering: here we focus on binary relations, but all constructions can be generalized to higher arities.} the "dependent set" of ""synchronous words"" defined by:
	\vspace{-1.25em}
	\begin{multicols}{2}
	\begin{itemize}
		\item $(\Sync{\Sigma})_{\ll} \defeq (\Sigma\times \Sigma)^*$,
		\item $(\Sync{\Sigma})_{\ll\to\lb} \defeq (\Sigma\times \Sigma)^*(\Sigma\times \pad)^*$,
		\item $(\Sync{\Sigma})_{\lb} \defeq (\Sigma\times \pad)^*$,
		\columnbreak
		\vfill
		\item $(\Sync{\Sigma})_{\ll\to\bl} \defeq (\Sigma\times \Sigma)^*(\pad\times \Sigma)^*$,
		\item $(\Sync{\Sigma})_{\bl} \defeq (\pad\times \Sigma)^*$.
	\end{itemize}
	\end{multicols}
	\vspace{-1em}
	Moreover, $\dep$ is the reflexive and symmetric closure of the relation that
	identifies $\type{u}{\ll}$ with $\type{u}{\ll\to\beta}$ for all
	$u \in (\Sigma\times\Sigma^*)$ and $\beta \in \{\lb,\bl\}$, 
	and $\type{u}{\ll\to\lb}$ with $\type{u}{\lb}$
	for $u \in (\Sigma\times\pad)^*$,
	and $\type{u}{\ll\to\bl}$ with $\type{u}{\bl}$
	for $u \in (\pad\times\Sigma)^*$.
	This structure is depicted in \Cref{fig:free_algebras}.
\end{example}


Given a "relation" $\+R \subseteq \Sigma^* \times \Sigma^*$, we denote by
$\AP\intro*\proj{\+R} = \{\type{(u,v)}{\tau} \mid \type{(u,v)}{\tau} \in \Sync{\Sigma} \text{ and } (u,v) \in \+R\}$
the "closed subset" of $\Sync{\Sigma}$ induced by $\+R$.

\begin{fact}
	The map $\+R \mapsto \proj{\+R}$ is a bijection between
	"relations" and "closed subsets" of $\Sync{\Sigma}$.
\end{fact}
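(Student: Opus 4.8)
The claim is that $\+R \mapsto \proj{\+R}$ is a bijection between relations $\+R \subseteq \Sigma^*\times\Sigma^*$ and closed subsets of $\Sync{\Sigma}$. The plan is to exhibit an explicit inverse and check it works on both sides. First I would recall that, by the remarks preceding the statement, elements of $\WellFormed$ are in natural bijection with $\Sigma^*\times\Sigma^*$, and that each pair $(u,v)$ corresponds — after padding — to a well-formed word which has a well-defined letter-type at its first and last position, hence a well-defined type $\tau \in \types$ (with the empty word treated specially). Concretely: for a nonempty pair, the type is determined by the first pair of symbols (a genuine pair, so letter-type $\ll$, unless $(u,v)$ is $(\varepsilon, v)$ or $(u,\varepsilon)$) and the last pair of symbols; for $(\varepsilon,\varepsilon)$ the padded word is empty and sits in every $(\Sync{\Sigma})_\tau$. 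The key combinatorial observation is that for each pair $(u,v)$, the set of types $\tau$ with $\type{(u,v)}{\tau} \in \Sync{\Sigma}$ is exactly one equivalence class of the transitive closure of $\dep$: if $|u|=|v|$ it is $\{\ll, \ll\to\lb, \ll\to\bl\}$ when $u,v$ are both nonempty (identifying $\type{u}{\ll}$ with both $\type{u}{\ll\to\lb}$ and $\type{u}{\ll\to\bl}$), if $|u|>|v|$ it is $\{\ll\to\lb, \lb\}$ (resp.\ $\{\ll\to\bl,\bl\}$ if $|u|<|v|$), and for $(\varepsilon,\varepsilon)$ it is all of $\types$.

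Given this, I would argue as follows. The map is well-defined into closed subsets: $\proj{\+R}$ contains $\type{(u,v)}{\tau}$ iff $(u,v)\in\+R$ and $\type{(u,v)}{\tau}\in\Sync{\Sigma}$, so membership in $\proj{\+R}$ depends only on the underlying pair $(u,v)$, which is constant along $\dep$-edges (any edge $\type{x}{\sigma}\dep\type{y}{\sigma}$ in $\Sync{\Sigma}$ has, by the definition of $\dep$, the same underlying word $x=y$); hence $\proj{\+R}$ is closed. For injectivity: from $\proj{\+R}$ we recover $\+R$ as $\{(u,v) \mid \exists \tau,\ \type{(u,v)}{\tau}\in\proj{\+R}\}$, since every pair $(u,v)$ has at least one type $\tau$ with $\type{(u,v)}{\tau}\in\Sync{\Sigma}$. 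So $\proj{\+R}=\proj{\+R'}$ forces $\+R=\+R'$. For surjectivity: given a closed subset $C\subseteq\Sync{\Sigma}$, define $\+R_C \defeq \{(u,v) \mid \exists\tau,\ \type{(u,v)}{\tau}\in C\}$; I must check $\proj{\+R_C}=C$. The inclusion $C\subseteq\proj{\+R_C}$ is immediate. For the reverse, if $\type{(u,v)}{\tau}\in\proj{\+R_C}$ then $(u,v)\in\+R_C$, so some $\type{(u,v)}{\tau'}\in C$; but $\type{(u,v)}{\tau}$ and $\type{(u,v)}{\tau'}$ lie in the same equivalence class of the transitive closure of $\dep$ (by the combinatorial observation above), and $C$, being closed, is a union of such classes, so $\type{(u,v)}{\tau}\in C$.

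The only genuinely delicate point — and the one I would state as a small lemma before the main argument — is precisely that combinatorial observation: for a fixed pair $(u,v)$, the collection of types realizing $\type{(u,v)}{\tau}\in\Sync{\Sigma}$ is a single $\dep^*$-class. This requires a short case analysis on the sign of $|u|-|v|$ together with whether $u$ or $v$ is empty, reading off the definitions of the five components $(\Sync{\Sigma})_\tau$ and of $\dep$ given in the preceding example. Everything else is bookkeeping: the two candidate maps are mutually inverse essentially by construction once one knows that "the underlying pair" is a complete invariant of a $\dep^*$-class in $\Sync{\Sigma}$. I do not expect any real obstacle beyond carefully handling the empty-word corner case $(\varepsilon,\varepsilon)$, whose type is unconstrained — which is exactly the phenomenon the paper flagged as motivating the dependency relation in the first place.
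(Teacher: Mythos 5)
Your proposal is correct and follows essentially the same route as the paper, which simply exhibits the inverse map $C \mapsto \{(u,v) \mid \type{(u,v)}{\tau} \in C \text{ for some } \tau\}$ and notes that the two composites are identities; you additionally spell out the key point that the types realizing a fixed pair $(u,v)$ form a single class of the transitive closure of $\dep$, which is exactly what makes closedness the right condition. One minor slip in your case analysis: when $|u|>|v|>0$ the encoding lies in $(\Sigma\times\Sigma)^+(\Sigma\times\pad)^+$, so the only realizable type is $\ll\to\lb$ (the set $\{\ll\to\lb,\ \lb\}$ occurs only when $v=\varepsilon\neq u$) --- but this does not affect the argument, since all you use is that any two realizable types of the same pair are related by the transitive closure of $\dep$.
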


\begin{proof}
	Let $f$ be the function which maps a closed subset $C$ of $\Sync{\Sigma}$ to
	$\{(u,v) \in \Sigma^* \times \Sigma^* \mid \type{(u,v)}{\tau} \in C \text{ for some $\tau \in \types$} \}$. It then follows that $f\circ \proj{-}$ (resp.~$\proj{f(-)}$) is the identity
	on subsets of $\Sigma^* \times \Sigma^*$ (resp.~"closed subsets" of $\Sync{\Sigma}$).
\end{proof}

\subsection{Synchronous Algebras}

One key property of "types" is that some of them can be concatenated to produce other types.
We say that two types $\sigma, \tau \in \types$ are \AP""compatible""
when there exists non-empty words $u,v \in \WellFormed$ of "type" $\sigma$
and $\tau$, respectively, such that $uv$ is "well-formed".
Said otherwise, $\alpha \to \beta$ is "compatible" with $\beta' \to \gamma$
if either $\beta = \beta'$ or $\beta = \ll$---indeed, for this last case
note that "eg" the concatenation of $\pair{aaa}{aaa}$ of type $\ll$ with
$\pair{\pad\pad}{aa}$ of type $\bl$ is "well-formed".
Lastly, if $\alpha\to \beta$ is "compatible with" $\beta'\to\gamma$,
we define their product as \AP $(\alpha\to\beta)\intro*\concattype(\beta'\to\gamma) \defeq \alpha \to \gamma$.
Note that this partial operation is associative, in the following sense: for $\rho,\sigma,\tau \in \types$, $(\rho\concattype\sigma) \concattype \tau$ is well-defined if and only if
$\rho\concattype(\sigma\concattype \tau)$ is well-defined, in which case both "types" are equal.
This implies that the notion of "compatibility" of types can be unambiguously
lifted to finite lists of "types" $\tau_1,\hdots,\tau_n$.

\begin{definition}
	\label{def:synchronous-algebra}
	A \AP""synchronous algebra"" $\langle \?A, \cdot, \dep \rangle$ consists of a "dependent set" $\langle \?A, \dep \rangle$ together
	with a partial binary operation $\cdot$ on $\?A$, called \AP""product"" such that:
	\begin{itemize}
		\item for $\type{x}{\sigma}, \type{y}{\tau} \in \?A$,
			$\type{x}{\sigma} \cdot \type{y}{\tau}$ is defined "iff" $\sigma$ and $\tau$
			are "compatible",
		\item \emph{associativity:} for all $\type{x}{\rho}, \type{y}{\sigma}, \type{z}{\tau} \in \?A$, if $\rho, \sigma, \tau$ are "compatible":
		\[
			(\type{x}{\rho} \cdot \type{y}{\sigma}) \cdot \type{z}{\tau} 
			= \type{x}{\rho} \cdot (\type{y}{\sigma} \cdot \type{z}{\tau}),
		\]
		\item \emph{``monotonicity'':} for all $\type{x}{\sigma}, \type{x'}{\sigma'}, \type{y}{\tau} \in \?A$, if $\type{x}{\sigma} \dep \type{x'}{\sigma'}$ and
		both $\sigma,\tau$ and $\sigma', \tau$ are "compatible", then
		$\type{x}{\sigma}\cdot \type{y}{\tau} \dep \type{x'}{\sigma'}\cdot \type{y}{\tau}$,
		and dually if $\tau,\sigma$ and $\tau, \sigma'$ are "compatible", then 
		$\type{y}{\tau}\cdot \type{x}{\sigma} \dep \type{y}{\tau}\cdot \type{x'}{\sigma'}$,
		\item \emph{units:} for each type $\tau$ there is an element $\type{1}{\tau} \in \?A$ 
			such that for any $\type{x}{\sigma} \in \?A$, then
			$\type{1}{\tau} \cdot \type{x}{\sigma} \dep \type{x}{\sigma}$
			if $\tau$ and $\sigma$ are "compatible",
			and $\type{x}{\sigma} \cdot \type{1}{\tau} \dep \type{x}{\sigma}$
			if $\sigma$ and $\tau$ are "compatible",
			and moreover, $\type{1}{\ll\to\beta} = \type{1}{\ll}\cdot \type{1}{\beta}$
			for $\beta \in \{\lb,\bl\}$.
	\end{itemize}
\end{definition}

Note in particular that for any type $\tau \in \{\ll,\lb,\bl\}$,
then $\type{1}{\tau} \cdot \type{x}{\tau} \dep \type{x}{\tau}$
but since $\type{1}{\tau} \cdot \type{x}{\tau}$ has "type" $\tau$ and $\dep$ is a
"dependency relation", then $\type{1}{\tau} \cdot \type{x}{\tau} = \type{x}{\tau}$.
This implies in particular that restricting $\langle \?A, \cdot \rangle$
to a type $\ll$, $\lb$ or $\bl$ yields a monoid.
These are called the three \AP""underlying monoids"" of $\?A$.
The canonical example of "synchronous algebras" is "synchronous words" $\Sync\Sigma$ under 
concatenation. Its "underlying monoids" are $(\Sigma\times\Sigma)^*$,
$(\Sigma\times\{\pad\})^*$ and $(\{\pad\}\times\Sigma)^*$.

\begin{fact}
	\label{fact:closed-subset-units}
	Any "closed subset" of $\?A$ either contains all units, or none of them.
\end{fact}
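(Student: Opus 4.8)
The plan is to reduce the statement to the single observation that all five units $\type{1}{\tau}$, $\tau \in \types$, lie in one equivalence class of the transitive closure of $\dep$; once this is established the fact is immediate, because a "closed subset" is by definition a union of such classes, so it either contains a given class entirely or is disjoint from it.

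To prove the observation, I would first note that the type $\ll$ (that is, $\ll\to\ll$) is "compatible with" $\lb$ (that is, $\lb\to\lb$): we are in the case ``$\beta=\ll$'' of the definition of "compatibility", and the product $\ll\concattype\lb$ equals $\ll\to\lb$. Hence $\type{1}{\ll}\cdot\type{1}{\lb}$ is defined, has type $\ll\to\lb$, and by the last clause of \Cref{def:synchronous-algebra} it equals $\type{1}{\ll\to\lb}$. Now apply the two halves of the units axiom to this product: taking $\tau=\ll$, $x=\type{1}{\lb}$, $\sigma=\lb$ gives $\type{1}{\ll}\cdot\type{1}{\lb}\dep\type{1}{\lb}$, and taking $x=\type{1}{\ll}$, $\sigma=\ll$, $\tau=\lb$ gives $\type{1}{\ll}\cdot\type{1}{\lb}\dep\type{1}{\ll}$; together, $\type{1}{\lb}\dep\type{1}{\ll\to\lb}\dep\type{1}{\ll}$. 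The identical reasoning with $\bl$ in place of $\lb$ yields $\type{1}{\bl}\dep\type{1}{\ll\to\bl}\dep\type{1}{\ll}$.

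Putting these together produces the chain
\[
	\type{1}{\lb}\dep\type{1}{\ll\to\lb}\dep\type{1}{\ll}\dep\type{1}{\ll\to\bl}\dep\type{1}{\bl},
\]
so all units are connected under the transitive closure of $\dep$. Hence a "closed subset" $C$ containing one unit contains this whole class, and therefore all units; otherwise it contains none. I do not foresee a genuine obstacle here: the only step demanding a moment's care is checking that $\ll$ is "compatible with" both $\lb$ and $\bl$ from the relevant sides, since this is exactly what licenses invoking both halves of the units axiom for the products $\type{1}{\ll}\cdot\type{1}{\lb}$ and $\type{1}{\ll}\cdot\type{1}{\bl}$, and the extra identities $\type{1}{\ll\to\beta}=\type{1}{\ll}\cdot\type{1}{\beta}$ built into \Cref{def:synchronous-algebra}.
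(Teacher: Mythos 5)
Your proof is correct and follows essentially the same route as the paper: both derive $\type{1}{\lb}\dep\type{1}{\ll\to\lb}\dep\type{1}{\ll}$ from the identity $\type{1}{\ll\to\lb}=\type{1}{\ll}\cdot\type{1}{\lb}$ together with the two halves of the units axiom, repeat with $\bl$ in place of $\lb$, and conclude via the fact that a "closed subset" is a union of classes of the transitive closure of $\dep$. You merely spell out the compatibility checks that the paper leaves implicit.
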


\begin{proof}
	From $\type{1}{\ll \to \lb} = \type{1}{\ll}\cdot \type{1}{\lb}$ we have
	$\type{1}{\ll} \dep \type{1}{\ll \to \lb}$
	and $\type{1}{\ll \to \lb} \dep \type{1}{\lb}$.
	By symmetry between $\lb$ and $\bl$, we also have
	$\type{1}{\ll} \dep \type{1}{\ll\to\bl}$ and $\type{1}{\ll\to\bl} \dep \type{1}{\bl}$.
	Hence, if a "closed subset" of $\?A$ contains at least one unit, then it must contain them all.
\end{proof}

Note that the "product" induces a monoid left (resp.~right) action of the
"underlying monoid" $\?A_{\ll}$ (resp.~$\?A_{\lb}$) on the set $\?A_{\ll\to\lb}$.
Moreover, $\type{x}{\ll} \mapsto \type{x}{\ll} \cdot \type{1}{\lb}$
identifies any element of "type" $\ll$ with an element of "type" $\ll\to\lb$.
Over $\Sync{\Sigma}$, these identifications are injective, but it need not be the
case in general. Note also that in general,
$\type{x}{\ll} \cdot \type{1}{\ll\to\lb} = \type{x}{\ll} \cdot \type{1}{\ll}
\cdot \type{1}{\lb} = \type{x}{\ll} \cdot \type{1}{\lb}$.

\begin{remark}
	There exists a monad over the category of "dependent sets" whose
	"Eilenberg-Moore algebras" exactly correspond to
	"synchronous algebras", see \Cref{apdx:monads}.
\end{remark}

\AP""Morphisms of synchronous algebras"" are defined naturally as
maps that preserve the "type", units, the "product" and the "dependency relation".

\subparagraph*{Free algebras}
$\Sync\Sigma$ is free in the sense that
for any "synchronous algebra" $\?A$, there is a natural bijection between
"synchronous algebra morphisms" $\Sync\Sigma \to \?A$
and "maps of typed sets" $\SigmaPair \to \?A$.
Said otherwise, "synchronous algebra morphisms" are uniquely defined by
their value on $\SigmaPair$.

\subsection{Recognizability}

Given a "synchronous algebra" $\?A$, a "morphism@@sync" $\phi\colon \Sync{\Sigma} \to \?A$
and a "closed subset" $\Acc \subseteq \?A$ called ``accepting set'',
we say that $\langle \phi, \?A, \Acc \rangle$
\AP""recognizes@@sync"" a relation $\+R \subseteq \Sigma^* \times \Sigma^*$
when $\proj{\+R} = \phi^{-1}[\Acc]$.
We extend the notion of "recognizability@@sync" to $\langle \phi, \?A\rangle$
or to simply $\?A$ by existential quantification over the missing elements in the tuple $\langle \phi, \?A, \Acc \rangle$.

\subparagraph*{Synchronous algebra induced by a monoid}
A monoid morphism $\phi\colon (\SigmaPair)^* \to M$
naturally ""induces@@sync"" a "synchronous algebra morphism"
\AP$\intro*\inducedmor{\phi} \colon \Sync\Sigma \to \intro*\inducedalg{M}$,
where:
\begin{itemize}
	\item $\inducedalg{M}$ has for every "type" $\tau$ a copy of $M$,
	and $\dep$ is $\{(\type{x}{\sigma}, \type{x}{\tau}) \mid x \in M, \sigma, \tau \in \types\}$,
	\item for all $\type{x}{\sigma}, \type{y}{\tau} \in \inducedalg{M}$
	with "compatible type", $\type{x}{\sigma} \cdot \type{y}{\tau} \defeq \type{(x\cdot y)}{\sigma\concattype\tau}$,
	\item $\inducedmor{\phi}\pair{a}{b} \defeq \type{\bigl(\phi\pair{a}{b}\bigr)}{\ll}$,
		$\inducedmor{\phi}\pair{a}{\pad} \defeq \type{\bigl(\phi\pair{a}{\pad}\bigr)}{\lb}$,
		and $\inducedmor{\phi}\pair{\pad}{a} \defeq \type{\bigl(\phi\pair{\pad}{a}\bigr)}{\bl}$.
\end{itemize}
The algebra simply duplicates $M$ as many times as needed and identifies
two elements together when they originated from the same element of $M$.
\begin{fact}
	\label{fact:induced-morphism}
	If $\phi$ recognizes $\+R$ for some "relation" $\+R \subseteq \Sigma^* \times \Sigma^*$ seen as a language over $\SigmaPair$, then $\inducedmor{\phi}$ "recognizes@@sync" $\+R$.
\end{fact}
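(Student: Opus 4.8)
The plan is to choose the obvious accepting set on $\inducedalg{M}$ and to reduce the whole statement to a single computational identity. Since $\phi$ "recognizes@@sync" $\+R$ seen as a language over $\SigmaPair$, fix $P \subseteq M$ with $\phi^{-1}[P]$ equal to the language of words over $\SigmaPair$ encoding a pair in $\+R$; this language is automatically contained in $\WellFormed$. I would take as accepting set $\Acc \defeq \{\type{x}{\tau} \mid x \in P,\ \tau \in \types\} \subseteq \inducedalg{M}$, namely $\Acc$ picks out, in every "type", the elements of $\inducedalg{M}$ whose $M$-value lies in $P$. Since the "dependency relation" of $\inducedalg{M}$ relates $\type{x}{\sigma}$ with $\type{y}{\tau}$ exactly when $x = y$, the set $\Acc$ is a "closed subset", hence an admissible accepting set; and since $\inducedmor{\phi}$ is already known, from the construction preceding the statement, to be a well-defined "morphism@@sync" $\Sync\Sigma \to \inducedalg{M}$, it remains only to show $\proj{\+R} = \inducedmor{\phi}^{-1}[\Acc]$.

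The heart of the argument is the identity $\inducedmor{\phi}(\type{w}{\tau}) = \type{\phi(w)}{\tau}$, for every synchronous word $\type{w}{\tau} \in \Sync\Sigma$, where on the right $w$ is read as an ordinary word over $\SigmaPair$ (every synchronous word is literally a "well-formed word"). I would prove this in two steps. For a simple "type" $\tau \in \{\ll, \lb, \bl\}$, the "underlying monoid" $(\Sync\Sigma)_{\tau}$ is the free monoid over the corresponding sub-alphabet of $\SigmaPair$, and by the definition of $\inducedmor{\phi}$ its restriction to $(\Sync\Sigma)_{\tau}$ is precisely the monoid morphism into the $\tau$-copy of $M$ that sends each generator $a$ to $\type{\phi(a)}{\tau}$; freeness of the free monoid then gives $\inducedmor{\phi}(\type{w}{\tau}) = \type{\phi(w)}{\tau}$. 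For a mixed "type", say $\tau = \ll\to\lb$ (and symmetrically $\ll\to\bl$), write the synchronous word uniquely as $w = uv$ with $\type{u}{\ll}, \type{v}{\lb} \in \Sync\Sigma$; one checks from the axioms that $\type{w}{\tau} = \type{u}{\ll} \cdot \type{v}{\lb}$, so applying the morphism and the case just treated gives $\inducedmor{\phi}(\type{w}{\tau}) = \type{\phi(u)}{\ll} \cdot \type{\phi(v)}{\lb} = \type{\phi(u)\phi(v)}{\tau} = \type{\phi(uv)}{\tau} = \type{\phi(w)}{\tau}$; the degenerate case $w = \varepsilon$ (that is, $u = v = \varepsilon$) is included, so the units of $\Sync\Sigma$ and of $\inducedalg{M}$ correspond under $\inducedmor{\phi}$ as they must.

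Granting the identity, the conclusion is immediate. For any $\type{w}{\tau} \in \Sync\Sigma$ we have $\type{w}{\tau} \in \inducedmor{\phi}^{-1}[\Acc]$ iff $\type{\phi(w)}{\tau} \in \Acc$ iff $\phi(w) \in P$ iff $w \in \phi^{-1}[P]$ iff the pair encoded by $w$ belongs to $\+R$; and by the definition of $\proj{\+R}$ this last condition is exactly $\type{w}{\tau} \in \proj{\+R}$. Hence $\proj{\+R} = \inducedmor{\phi}^{-1}[\Acc]$, which means that $\langle \inducedmor{\phi}, \inducedalg{M}, \Acc \rangle$ "recognizes@@sync" $\+R$, and in particular so does $\inducedmor{\phi}$.

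There is no conceptual difficulty; what requires care is the threefold role of a synchronous word — as a typed element of $\Sync\Sigma$, as a plain word over $\SigmaPair$ fed to $\phi$, and as the encoding of a pair in $\Sigma^* \times \Sigma^*$ — together with the fact that for the mixed "types" $\ll\to\lb$ and $\ll\to\bl$ a synchronous word need not decompose into letters of that "type", which is exactly why the identity above is proved in two steps instead of by a naive one-letter-at-a-time induction. The unit axiom $\type{1}{\ll\to\beta} = \type{1}{\ll} \cdot \type{1}{\beta}$ is what keeps this bookkeeping consistent.
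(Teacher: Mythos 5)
Your proof is correct: the paper states this as a \emph{Fact} and omits the verification entirely, and your argument is exactly the routine unfolding of the definitions that the author leaves implicit (choosing the accepting set $\{\type{x}{\tau} \mid x \in P,\ \tau \in \types\}$, checking it is closed, and establishing $\inducedmor{\phi}(\type{w}{\tau}) = \type{\phi(w)}{\tau}$ via freeness on the pure types and the decomposition $w = uv$ on the mixed ones). Your closing remark about the threefold reading of a synchronous word correctly identifies the only point where care is needed.
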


\subparagraph*{Consolidation of a synchronous algebra}
Given a "synchronous algebra morphism" $\phi\colon \Sync\Sigma \to \?A$,
define its \AP""consolidation""\footnote{Named by analogy with Tilson's construction \cite[\S 3, p.~102]{tilson_categories_1987}.} as the semigroup morphism
$\intro*\consol{\phi}\colon (\SigmaPair)^* \to \reintro*\consol{\?A}$, where
$\consol{\?A}$ is the monoid obtained from
$\disunion{\?A}$ by first merging units,
by adding a zero (denoted by $0$), and extending $\cdot$
to be a total function by letting all missing products equal $0$,
and $\consol{\phi}$ sends a word $u \in (\SigmaPair)^*$ to%
\vspace{-1.25em}
\begin{multicols}{2}%
	\begin{itemize}%
		\item $0$ if $u$ is not "well-formed",
		\item $\phi(\type{u}{\ll})$ if $u \in (\Sigma\times \Sigma)^*$,
		\item $\phi(\type{u}{\lb})$ if $u \in (\Sigma\times \pad)^+$,
		\item $\phi(\type{u}{\bl})$ if $u \in (\pad\times \Sigma)^+$,
		\item $\phi(\type{u}{\ll\to\lb})$ if $u \in (\Sigma\times \Sigma)^+(\Sigma\times \pad)^+$,
		\item $\phi(\type{u}{\ll\to\bl})$ if $u \in (\Sigma\times \Sigma)^+(\pad\times \Sigma)^+$.
	\end{itemize}
\end{multicols}
\vspace{-1.25em}
Note that this operation disregards the "dependency relation" of $\?A$.
\begin{fact}
	\label{fact:consolidation}
	If $\phi$ "recognizes@@sync" some "relation" $\proj{\+R}$,
	then $\consol{\phi}$ recognizes $\+R$, when seen as
	a language over $\SigmaPair$.
\end{fact}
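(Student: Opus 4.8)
The plan is to transport the accepting set through the "consolidation". Fix a "closed subset" $\Acc \subseteq \?A$ witnessing that $\phi$ "recognizes@@sync" $\proj{\+R}$, so $\proj{\+R} = \phi^{-1}[\Acc]$; recall that $\+R$, seen as a language over $\SigmaPair$, is identified via the natural bijection $\WellFormed \cong \Sigma^* \times \Sigma^*$ with $\{w \in \WellFormed \mid \text{the pair corresponding to } w \text{ lies in } \+R\}$, and is in particular contained in $\WellFormed$. Let $q \colon \disunion{\?A} \to \consol{\?A}$ be the canonical map underlying the consolidation: it merges all units into the identity of $\consol{\?A}$, is injective on non-units, and has image $\consol{\?A} \setminus \{0\}$. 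I would take the accepting set for $\consol{\phi}$ to be $P \defeq q[\Acc]$, so that what remains is to prove $\consol{\phi}^{-1}[P] = \+R$.

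The first step — and, I expect, the only genuinely delicate one — is to establish the identity $q^{-1}[P] = \Acc$; only the inclusion $\subseteq$ requires an argument. Away from the units it is immediate from injectivity of $q$ on non-units, and for the units it is exactly \Cref{fact:closed-subset-units} that does the work: a "closed subset" of $\?A$ contains either all units or none of them, while $q$ sends every unit of $\disunion{\?A}$ to the identity of $\consol{\?A}$ and no non-unit there; hence the identification of units neither introduces a spurious element into $q^{-1}[P]$ nor removes a legitimate one. Note also $0 \notin P$, since $0 \notin \mathrm{im}(q) \supseteq P$.

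With this in hand, $\consol{\phi}^{-1}[P]$ is computed by a routine definition chase. If $w \in (\SigmaPair)^*$ is not "well-formed", then $\consol{\phi}(w) = 0 \notin P$, and also $w \notin \+R$ since $\+R$ (as a language) is contained in $\WellFormed$; so neither side contains $w$. If $w$ is "well-formed", the clauses of the definition of $\consol{\phi}$ dealing with well-formed inputs partition $\WellFormed$, so $w$ acquires a well-defined "type" $\tau_w$ with $\type{w}{\tau_w} \in \Sync{\Sigma}$, and $\consol{\phi}(w) = q\bigl(\phi(\type{w}{\tau_w})\bigr)$. Writing $(u,v)$ for the pair corresponding to $w$,
\[
	\consol{\phi}(w) \in P
	\iff \phi(\type{w}{\tau_w}) \in \Acc
	\iff \type{w}{\tau_w} \in \proj{\+R}
	\iff (u,v) \in \+R
	\iff w \in \+R,
\]
using successively $q^{-1}[P] = \Acc$, the hypothesis $\proj{\+R} = \phi^{-1}[\Acc]$, the definition $\proj{\+R} = \{\type{(u,v)}{\tau} \mid \type{(u,v)}{\tau} \in \Sync{\Sigma} \text{ and } (u,v) \in \+R\}$ together with $\type{w}{\tau_w} \in \Sync{\Sigma}$, and the bijection $\WellFormed \cong \Sigma^* \times \Sigma^*$. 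Hence $\consol{\phi}^{-1}[P] = \+R$, as required. (If it was not already recorded when $\consol{\phi}$ was defined, one should moreover check that $\consol{\phi}$ is a monoid morphism; this is routine, using that for "well-formed" $w_1, w_2$ the word $w_1 w_2$ is "well-formed" exactly when $\tau_{w_1}$ and $\tau_{w_2}$ are "compatible", in which case $\type{(w_1 w_2)}{\tau_{w_1} \concattype \tau_{w_2}} = \type{w_1}{\tau_{w_1}} \cdot \type{w_2}{\tau_{w_2}}$ in $\Sync{\Sigma}$, so that $\phi$ being a "morphism@@sync" concludes.)
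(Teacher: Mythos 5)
Your proof is correct; the paper states this as a Fact and omits the argument entirely, and what you give is exactly the intended definition-chase. You also correctly isolate the one genuinely non-trivial point — that pulling the accepting set back along the unit-merging map $q$ does not enlarge it — and resolve it with precisely the right tool, namely \Cref{fact:closed-subset-units}.
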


The following result follows from \Cref{fact:induced-morphism,fact:consolidation,fact:synchronous-is-regular}.
\begin{proposition}
	\label{prop:synchronous-iff-finite}
	A "relation" is "synchronous" if and only if it is "recognized@@sync"
	by a finite "synchronous algebra".
\end{proposition}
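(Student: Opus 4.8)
The plan is to read this equivalence off the two constructions connecting monoids and synchronous algebras that were just introduced — the algebra $\inducedalg{M}$ induced by a monoid and the consolidation $\consol{\?A}$ of a synchronous algebra — with \Cref{fact:synchronous-is-regular} serving as the bridge to the world of languages. Recall that \Cref{fact:synchronous-is-regular}, together with the facts that $\WellFormed$ is itself regular and that a relation $\+R$ viewed as a language over $\SigmaPair$ always satisfies $\+R \subseteq \WellFormed$ (the subtlety highlighted in \Cref{rk:semantic}), tells us that $\+R$ is synchronous if and only if $\+R$ is a regular language over $\SigmaPair$, equivalently if and only if $\+R$, seen as a language over $\SigmaPair$, is recognized by a morphism $\phi\colon (\SigmaPair)^* \to M$ into a \emph{finite} monoid $M$. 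So it suffices to transfer recognizability by a finite monoid over $\SigmaPair$ to recognizability of $\+R$ itself by a finite synchronous algebra, and back.

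For the ``only if'' direction, I would start from a synchronous relation $\+R$, use \Cref{fact:synchronous-is-regular} to obtain a finite monoid $M$ and a morphism $\phi\colon (\SigmaPair)^* \to M$ recognizing $\+R$ as a language over $\SigmaPair$, and then invoke \Cref{fact:induced-morphism}: the induced morphism $\inducedmor{\phi}\colon \Sync\Sigma \to \inducedalg{M}$ recognizes $\+R$. Since $\inducedalg{M}$ consists of one copy of $M$ for each of the five types in $\types$, it is finite, so $\+R$ is recognized by a finite synchronous algebra.

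For the ``if'' direction, I would take a finite synchronous algebra $\?A$, a morphism $\phi\colon \Sync\Sigma \to \?A$ and an accepting set $\Acc$ recognizing $\+R$, and apply \Cref{fact:consolidation}: the consolidation $\consol{\phi}\colon (\SigmaPair)^* \to \consol{\?A}$ recognizes $\+R$ seen as a language over $\SigmaPair$. Since $\consol{\?A}$ is obtained from the finite set $\disunion \?A$ by merging units and adjoining a single zero, it is a finite monoid, so $\+R$ is regular over $\SigmaPair$, and \Cref{fact:synchronous-is-regular} then gives that $\+R$ is synchronous.

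I expect no real obstacle here: the proposition is essentially a repackaging of \Cref{fact:induced-morphism,fact:consolidation} with \Cref{fact:synchronous-is-regular}. The only points requiring (entirely routine) care are the two finiteness observations — $\inducedalg{M}$ is finite when $M$ is, and $\consol{\?A}$ is finite when $\?A$ is — and the classical equivalence between a language over $\SigmaPair$ being regular and being recognized by a morphism into a finite monoid, via its syntactic monoid. It is also worth double-checking that intersecting with the regular language $\WellFormed$ is harmless on the relation side, which is precisely what makes ``regular over $\SigmaPair$'' and ``synchronous'' coincide.
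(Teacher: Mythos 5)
Your proposal is correct and follows exactly the route the paper takes: the paper derives this proposition directly from \Cref{fact:induced-morphism,fact:consolidation,fact:synchronous-is-regular}, using the induced algebra $\inducedalg{M}$ for one direction and the consolidation $\consol{\?A}$ for the other, just as you do. The finiteness observations you flag are indeed the only (routine) checks needed.
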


Let us continue with a slightly less trivial example of "algebra@synchronous algebra".

\begin{example}[Group relations: {\Cref{ex:group-languages}, cont'd.}]
	\label{ex:algebra-Zpq}
	Fix $p,q \in \N_{>0}$. Let $\?Z_{p,q}$ denote the algebra whose "underlying monoids" are:
	\begin{itemize}
		\item the trivial monoid $(0,+)$ for "type" $\ll$,
		\item the cyclic monoid $(\Z/p\Z,+)$ for "type" $\lb$,
		\item the cyclic monoid $(\Z/q\Z,+)$ for "type" $\bl$.
	\end{itemize}
	Moreover, the sets $Z_{\ll\to\lb}$ and $Z_{\ll\to\bl}$ are defined
	as $\Z/p\Z$ and $\Z/q\Z$, respectively.
	The product is addition---we identify $\type{0}{\ll}$ with the zero of $\Z/p\Z$ and
	of $\Z/q\Z$. We denote by $\bar k$ the equivalence class of $k\in \Z$ in
	$\Z/n\Z$ when $n$ is clear from context.
	The "dependency relation" identifies (1) all units together and (2) $\type{x}{\sigma}$ with $\type{1}{\tau} \cdot \type{x}{\sigma}$ and $\type{x}{\sigma} \cdot \type{1}{\tau}$ when the "types" are "compatible".
	
	Let $\phi\colon \Sync\Sigma \to \?Z_{p,q}$ be the "synchronous algebra morphism"
	defined by
	\[
		\phi \pair{a}{b} \defeq  \type{\bar 0}{\ll}, 
		\quad
		\phi \pair{a}{\pad} \defeq \type{\bar 1}{\lb},
		\quad
		\phi \pair{\pad}{a} \defeq \type{\bar 1}{\bl}
		\quad\text{and}\quad
		\phi(\type{\varepsilon}{\tau}) \defeq \type{\bar 0}{\tau}
		\text{ for $\tau \in \types$}.
	\]
	This "morphism@@sync" recognizes any "relation" of the form 
	\begin{align*}
		\+R^{I,J} \defeq \big\{
			(u,v) \;\big\vert\; & |u| > |v| \text{ and } (|u| - |v| \bmod{p}) \in I, \text{ or} \\
			& |u| < |v| \text{ and } (|v| - |u| \bmod{q}) \in J.
		\hphantom{\text{ or}}\big\},
	\end{align*}
	where $I \subseteq \Z/p\Z$ and $J \subseteq \Z/q\Z$ are such that $\bar 0 \not\in I$
	and $\bar 0 \not\in J$. This last condition is necessary because the accepting set
	has to be a "closed subset" of $\?Z_{p,q}$: if $\bar 0$ was in $I$, then we would need
	$\bar 0 \in J$, but also to add $\type{\bar 0}{\ll}$ to the accepting set: this would "recognize@@sync"
	\begin{align*}
		\big\{
			(u,v) \;\big\vert\; & |u| > |v| \text{ and } (|u| - |v| \bmod{p}) \in I, \text{ or} \\
			& |u| < |v| \text{ and } (|v| - |u| \bmod{q}) \in J, \text{ or } |u| = |v| \big\}.
	\end{align*}
	Note also that all "relations" $\+R^{I,J}$ with $\bar 0 \not\in I$
	and $\bar 0 \not\in J$ are "group relations": letting $G$ be the group $\Z/p\Z \times \Z/q\Z$,
	$\+R$ can be written as $\WellFormed\cap \psi^{-1}[I \times \{0\} \cup \{0\} \times J]$ where $\psi \colon (\SigmaPair)^* \to G$ is the monoid morphism defined by $\psi\pair{a}{b} \defeq 
	(\bar 0, \bar 0)$, $\psi\pair{a}{\pad} \defeq (\bar 1, \bar 0)$ and $\psi\pair{\pad}{a} \defeq
	(\bar 0, \bar 1)$.
\end{example}

\subsection{Syntactic Morphisms \& Algebras}
\label{sec:syntactic-morphism}

\AP\phantomintro{\SyntSAM}\phantomintro{\SyntSA}\phantomintro{syntactic synchronous algebra morphism}\phantomintro{syntactic synchronous algebra}\phantomintro(sync){Syntactic morphism theorem}
\vspace{-2em}
\begin{restatable}[\reintro(sync){Syntactic morphism theorem}]{lemma}{syntacticmorphismtheorem}
	\label{lem:syntactic-morphism-theorem}
	For each relation $\+R$, there exists a surjective "synchronous algebra morphism"
	\[\reintro*\SyntSAM{\+R}\colon \Sync{\Sigma} \surj \reintro*\SyntSA{\+R}\]
	that "recognizes@@sync" $\+R$ and is such that for any other
	surjective "synchronous algebra morphism"
	$\phi\colon \Sync{\Sigma} \surj \?B$
	"recognizing@@sync" $\+R$, there exists a
	"synchronous algebra morphism" $\psi \colon \?B \surj \SyntSA{\+R}$
	such that the diagram
	\begin{center}
		\begin{tikzcd}[ampersand replacement=\&]
			\Sync{\Sigma} 
				\arrow[twoheadrightarrow]{r}{\SyntSAM{\+R}}
				\arrow[twoheadrightarrow, swap]{dr}{\phi}
			\& \SyntSA{\+R} \\
			\& \?B, \arrow[twoheadrightarrow, swap]{u}{\psi}
		\end{tikzcd}
	\end{center}
	commutes.
	The objects $\SyntSAM{\+R}$ and $\SyntSA{\+R}$ are called the \reintro{syntactic synchronous algebra morphism}
	and \reintro{syntactic synchronous algebra} of $\+R$, respectively.
	Moreover, these objects are unique up to "isomorphisms" of the "algebra@@sync".
\end{restatable}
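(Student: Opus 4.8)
The plan is to mimic the classical construction of the syntactic monoid, but carried out inside the category of synchronous algebras, being careful that the dependency relation is threaded through everything.

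First I would define a \emph{syntactic congruence} on $\Sync{\Sigma}$ relative to $\+R$. Recall that $\proj{\+R}$ is a closed subset of $\Sync{\Sigma}$. For two synchronous words $\type{u}{\sigma}, \type{v}{\tau} \in \Sync{\Sigma}$ I would declare $\type{u}{\sigma} \congr{\+R} \type{v}{\tau}$ when, for all ``contexts'' obtained by multiplying on the left and right by elements of $\Sync{\Sigma}$ of appropriate types (and, crucially, also passing through the dependency relation — this is where the surprising notion of residual alluded to in \Cref{def:residuals} enters), membership in $\proj{\+R}$ is preserved. Concretely the congruence must identify $\type{u}{\sigma}$ and $\type{v}{\tau}$ exactly when plugging them into any unary ``polynomial'' built from $\Sync{\Sigma}$ yields elements that land in $\proj{\+R}$ simultaneously, where a polynomial is allowed to use both the product and a step along $\dep$. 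One then checks that $\congr{\+R}$ is a congruence for the product (compatible with compatibility of types), that it is coarser than every congruence whose quotient recognizes $\+R$, and — the delicate point — that the quotient carries a well-defined dependency relation, namely the image of $\dep$ under the quotient map, and that this image is still reflexive and symmetric and satisfies the axioms of a dependency relation (in particular that two dependent elements of the \emph{same} type become equal). The key lemma here is that $\proj{\+R}$ being a closed subset forces the saturation to behave well with respect to $\dep$.

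Next I would set $\SyntSA{\+R} \defeq \quotient{\Sync{\Sigma}}{\congr{\+R}}$ with product, units and dependency relation inherited from $\Sync{\Sigma}$, and let $\SyntSAM{\+R}$ be the quotient map. Verifying that $\SyntSA{\+R}$ is a synchronous algebra in the sense of \Cref{def:synchronous-algebra} is a routine but lengthy check of the four axioms (product defined iff types compatible, associativity, monotonicity of $\dep$ under product, units), each of which descends from the corresponding property on $\Sync{\Sigma}$ because $\congr{\+R}$ is a congruence compatible with $\dep$. That $\SyntSAM{\+R}$ recognizes $\+R$ amounts to checking $\proj{\+R}$ is a union of $\congr{\+R}$-classes, which holds by construction since $\congr{\+R}$ refines the relation ``agree on membership in $\proj{\+R}$''. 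For the universal property: given any surjective $\phi\colon \Sync{\Sigma} \surj \?B$ recognizing $\+R$ via an accepting closed subset $\Acc$ of $\?B$, the kernel congruence of $\phi$ together with the closedness of $\phi^{-1}[\Acc] = \proj{\+R}$ shows that $\ker\phi$ refines $\congr{\+R}$; hence $\SyntSAM{\+R}$ factors through $\phi$, giving the required surjection $\psi\colon \?B \surj \SyntSA{\+R}$, which is automatically a morphism of synchronous algebras (it preserves type, product, units, and — using that $\phi$ is surjective and preserves $\dep$ — the dependency relation). Uniqueness up to isomorphism is the standard diagram chase: two algebras with the same universal property receive mutually inverse comparison morphisms.

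I expect the main obstacle to be the treatment of the dependency relation in the definition of the congruence and in the proof that it is the \emph{coarsest} one: because $\dep$ is not transitive, the naive ``Myhill–Nerode''-style definition using only products does not give a congruence whose quotient is a synchronous algebra — one must incorporate single $\dep$-steps into the notion of context, which is precisely why the residuals of \Cref{def:residuals} take their unusual form. Getting this definition exactly right, and then showing both that the resulting relation is a genuine congruence compatible with $\dep$ (so the quotient is a synchronous algebra) and that it is coarser than every recognizing congruence, is the technical heart of the proof; the remaining verifications are bookkeeping analogous to the monoid case.
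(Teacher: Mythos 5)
Your overall skeleton (define a syntactic congruence on $\Sync\Sigma$, quotient, check the universal property, standard uniqueness chase) is the paper's, but two of your specific design choices are wrong, and the second one would actually break the universal property. First, the paper's "syntactic congruence" $\congr{\+R}$ is the \emph{plain} product-context definition: $\type{u}{\sigma}\congr{\+R}\type{v}{\tau}$ iff for every two-sided (or one-sided) context in which \emph{both} insertions are defined, membership in $\proj{\+R}$ agrees. No $\dep$-steps are built into the contexts. That this relation is coarser than $\dep$ is not imposed; it is a \emph{consequence} of $\proj{\+R}$ being a "closed subset" together with the monotonicity axiom. Your claim that the naive Myhill--Nerode definition "does not give a congruence whose quotient is a synchronous algebra" is unsubstantiated and contradicts what the paper does; the $\congr{C}$-saturation in \Cref{def:residuals} is needed for the pseudovariety characterization (\Cref{lemma:characterization-pseudovarieties-syncrel}), not for this theorem.

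Second, and more seriously: you take the dependency relation on the quotient to be "the image of $\dep$ under the quotient map". The paper instead defines $\equivclass{x}{\congr{\+R}} \dep \equivclass{y}{\congr{\+R}}$ iff $x \congr{\+R} y$ — i.e.\ the dependency relation of $\SyntSA{\+R}$ \emph{is} the syntactic congruence itself (this is \Cref{coro:syntactic-congruence-is-syntactic-dependency}). This coarsening is essential for the factorization step. Given $\phi\colon \Sync\Sigma \surj \?B$ recognizing $\+R$ via a closed $\Bcc$, the induced map $\psi\colon \?B \to \SyntSA{\+R}$ must send $\dep_{\?B}$ into $\dep_{\SyntSA{\+R}}$; but $\dep_{\?B}$ may be strictly coarser than the image of $\dep_{\Sync\Sigma}$ under $\phi$ (nothing in the axioms prevents $\?B$ from declaring dependent two elements with no dependent preimages). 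What saves the argument in the paper is that $\psi(b) \dep \psi(b')$ only requires $b,b'$ to have $\congr{\+R}$-related preimages, and any $\dep_{\?B}$-related pair does, because $\Bcc$ is closed and products respect $\dep$. With your "image of $\dep$" definition, $\psi$ would in general fail to preserve the dependency relation, so the universal property would not hold. Relatedly, your factorization argument only checks that $\ker\phi$ refines $\congr{\+R}$ on same-type elements; the cross-type content of the congruence (which is exactly the dependency data) is where the actual work lies.
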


\begin{corollary}[of \Cref{prop:synchronous-iff-finite,lem:syntactic-morphism-theorem}]
	A "relation" is "synchronous" if and only if its "syntactic synchronous algebra"
	is finite.
\end{corollary}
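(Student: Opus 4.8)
The plan is to chain the two cited results, using the universal property of the "syntactic synchronous algebra" to transport finiteness in both directions. The backward implication is immediate: if $\SyntSA{\+R}$ is finite, then since the "syntactic synchronous algebra morphism" $\SyntSAM{\+R}\colon \Sync{\Sigma}\surj \SyntSA{\+R}$ "recognizes@@sync" $\+R$ by \Cref{lem:syntactic-morphism-theorem}, the relation $\+R$ is "recognized@@sync" by a finite "synchronous algebra", and \Cref{prop:synchronous-iff-finite} concludes that $\+R$ is "synchronous".

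For the forward implication, assume $\+R$ is "synchronous". By \Cref{prop:synchronous-iff-finite} there exist a finite "synchronous algebra" $\?A$, a "morphism@@sync" $\phi\colon \Sync{\Sigma}\to \?A$, and a "closed subset" $\Acc$ such that $\langle \phi, \?A, \Acc\rangle$ "recognizes@@sync" $\+R$, that is, $\proj{\+R} = \phi^{-1}[\Acc]$. The one subtlety is that \Cref{lem:syntactic-morphism-theorem} quantifies only over \emph{surjective} morphisms, whereas $\phi$ need not be surjective. I would therefore first replace $\?A$ by the image $\?B \defeq \phi[\Sync{\Sigma}]$, which is a finite "synchronous algebra" (the image of a "morphism@@sync" is closed under the "product" and the "dependency relation"); corestricting gives a surjection $\phi\colon \Sync{\Sigma}\surj \?B$ that still "recognizes@@sync" $\+R$, now with accepting set $\Acc\cap \?B$, which is again a "closed subset" since $\?B$ inherits its "dependency relation" from $\?A$ and $\phi^{-1}[\Acc\cap\?B] = \phi^{-1}[\Acc]$.

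Now the universal property of \Cref{lem:syntactic-morphism-theorem}, applied to the surjective morphism $\phi\colon \Sync{\Sigma}\surj \?B$ "recognizing@@sync" $\+R$, yields a surjective "morphism@@sync" $\psi\colon \?B\surj \SyntSA{\+R}$. Thus $\SyntSA{\+R}$ is a surjective image of the finite set $\?B$, and is therefore finite, as required. The only step demanding any care is this reduction to a surjective recognizing morphism in the forward direction; the remainder is a direct appeal to the two quoted statements.
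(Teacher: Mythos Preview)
Your proposal is correct and is precisely the argument the paper leaves implicit: the corollary is stated without proof, as an immediate consequence of \Cref{prop:synchronous-iff-finite} and \Cref{lem:syntactic-morphism-theorem}, and your chaining of the two---including the care you take to restrict to the image so that the universal property of \Cref{lem:syntactic-morphism-theorem} applies to a \emph{surjective} recognizing morphism---is exactly what is intended.
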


The proof of \Cref{lem:syntactic-morphism-theorem}---see \Cref{apdx-proof:lem:syntactic-morphism-theorem}---relies,
as in the case of monoids, on the notion of congruence.

Given a "synchronous algebra" $\langle \?A, \dep, \cdot \rangle$,
a \AP""congruence@@sync"" is any reflexive, symmetric relation $\asympbar$ over $\?A$ which is
coarser than $\dep$, and which is \AP""locally transitive"",
meaning that for all $\type{x}{\sigma}, \type{x'}{\sigma},
\type{y}{\tau}, \type{y'}{\tau} \in \?X$,
if $\type{x'}{\sigma} \asympbar \type{x}{\sigma}$,
$\type{x}{\sigma} \asympbar \type{y}{\tau}$ and
$\type{y}{\tau} \asympbar \type{y'}{\tau}$,
then $\type{x'}{\sigma} \asympbar \type{y'}{\tau}$.\footnote{In particular, it implies that
$\asympbar$ is transitive when restricted to elements of the same type.}

The \AP""quotient structure"" $\intro*\quotient{\?A}{\asympbar}$
of $\?A$ by a "congruence@@sync" $\asympbar$ is defined as follows:
\begin{itemize}
	\item its underlying "typed set" consists of the
		equivalence classes of $\?A$ under the equivalence relation
		$\{(\type{x}{\sigma}, \type{y}{\sigma}) \mid \type{x}{\sigma} \asympbar \type{y}{\sigma}\}$,
		such a class being abusively denoted by \AP$\intro*\equivclass{x}{\asympbar}$,
	\item its "product" is the "product" induced by $\?A$, in the sense
		that $\equivclass{x}{\asympbar}\cdot \equivclass{y}{\asympbar} \defeq \equivclass{xy}{\asympbar}$, and
	\item its "dependency relation" is the relation induced by $\asympbar$,
		"ie" $\equivclass{x}{\asympbar}\dep \equivclass{y}{\asympbar}$ whenever $x \asympbar y$,
	\item its units are defined as the equivalence classes of the units of $\?A$.
\end{itemize}
Moreover, $x \mapsto \equivclass{x}{\asympbar}$ defines a surjective
"morphism of synchronous algebras" from $\?A$ to $\quotient{\?A}{\asympbar}$.

Given a "synchronous algebra" $\langle \?A, \dep, \cdot \rangle$
and a "closed subset" $C \subseteq \?A$, 
we define a "congruence@@sync" \AP$\intro*\congr{C}$, called
\AP""syntactic congruence"" of $C$ over $\?A$ by letting
$\type{a}{\sigma} \congr{C} \type{b}{\tau}$ when for all $x, y \in \?A$
\begin{itemize}
	\item if both
		$x \type{a}{\sigma} y$
		and $x \type{b}{\tau} y$ are defined,
		then $x \type{a}{\sigma} y \in C$
		"iff" $x \type{b}{\tau} y \in C$, and
	\item if both
		$x \type{a}{\sigma}$
		and $x \type{b}{\tau}$ are defined,
		then $x \type{a}{\sigma} \in C$
		"iff" $x \type{b}{\tau} \in C$, and
	\item if both $\type{a}{\sigma} y$
		and $\type{b}{\tau} y$ are defined,
		then $\type{a}{\sigma} y \in C$
		"iff" $\type{b}{\tau} y \in C$.
\end{itemize}
It is routine to check that the "syntactic congruence" is indeed a "congruence@@sync".
For instance, to prove that $\congr{C}$ is coarser than $\dep$, observe that
if $\type{a}{\sigma} \dep \type{b}{\tau}$, then
for all $x, y$ "st" both $x \type{a}{\sigma} y$
and $x \type{b}{\tau} y$ are defined, then $x \type{a}{\sigma} y \dep x \type{b}{\tau} y$,
and since $C$ is a "closed subset" of $\?A$, $x \type{a}{\sigma} y \in C$ "iff"
$x \type{b}{\tau} y \in C$. The other two conditions are proven in the same fashion.
Note however that while the relation is "locally transitive", it is not
transitive in general.

When $\+R \subseteq \Sigma^* \times\Sigma^*$ is a "relation", we abuse
the notation and write $\congr{\+R}$ to denote
the "syntactic congruence" $\congr{\proj{\+R}}$ of $\proj{\+R}$
in $\Sync\Sigma$.
The existence of the "syntactic morphism@@sync" then follows from
the next proposition, proven in \Cref{apdx-proof:lem:syntactic-morphism-theorem}.

\begin{restatable}{proposition}{propquotientbysyntacticcongruenceissyntacticalgebra}
	\label{prop:quotient-by-syntactic-congruence-is-syntactic-algebra}
	Let $\phi\colon \Sync\Sigma \surj \?A$ be a surjective "synchronous algebra morphism"
	that "recognizes@@sync" $\+R$, say $\proj{\+R} = \phi^{-1}[\Acc]$
	for some "closed subset" $\Acc \subseteq \?A$, then 
	\begin{center}
		\begin{tabular}{rccc}
			$\quotient{\phi}{\congr{\Acc}}\colon$
			& $\Sync\Sigma$
			& $\surj$
			& $\quotient{\?A}{\congr{\Acc}}$\\
			& $u$
			& $\mapsto$
			& $\equivclass{\phi(u)}{\congr{\Acc}}$
		\end{tabular}		
	\end{center}
	is the "syntactic morphism@@sync" of $\+R$.
\end{restatable}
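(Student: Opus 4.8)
The plan is to verify that $\quotient{\?A}{\congr{\Acc}}$ together with the induced morphism $\quotient{\phi}{\congr{\Acc}}$ satisfies the universal property stated in \Cref{lem:syntactic-morphism-theorem}, and then invoke the uniqueness-up-to-isomorphism part of that lemma. Concretely, I would proceed in three stages. \emph{Stage 1: it recognizes $\+R$.} One must show $\proj{\+R} = (\quotient{\phi}{\congr{\Acc}})^{-1}[\equivclass{\Acc}{\congr{\Acc}}]$, where $\equivclass{\Acc}{\congr{\Acc}}$ denotes the image of $\Acc$ in the quotient. The inclusion from left to right is immediate since $\proj{\+R} = \phi^{-1}[\Acc]$ and the quotient map $\?A \surj \quotient{\?A}{\congr{\Acc}}$ sends $\Acc$ into $\equivclass{\Acc}{\congr{\Acc}}$. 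For the converse, suppose $\phi(u) \congr{\Acc} a$ for some $a \in \Acc$; applying the defining condition of $\congr{\Acc}$ with empty context (i.e.\ taking $x$ and $y$ to be appropriate units, using that units exist for every compatible type) gives $\phi(u) \in \Acc \iff a \in \Acc$, hence $\phi(u) \in \Acc$ and $u \in \proj{\+R}$. Here one needs to be slightly careful: the ``empty context'' case of the three bullet conditions in the definition of $\congr{C}$ is subsumed by choosing $x = \type{1}{\rho}$, $y = \type{1}{\rho'}$ for suitable types, so that $x \type{a}{\sigma} y \dep \type{a}{\sigma}$, combined with the fact that $\Acc$ is a \kl{closed subset}.

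\emph{Stage 2: minimality via the universal property.} Let $\psi'\colon \Sync\Sigma \surj \?B$ be any surjective \kl{morphism of synchronous algebras} recognizing $\+R$, say $\proj{\+R} = \psi'^{-1}[\Bcc]$ for a \kl{closed subset} $\Bcc \subseteq \?B$. I need to factor $\quotient{\phi}{\congr{\Acc}}$ through $\psi'$. The key claim is: if $\psi'(u) = \psi'(v)$ then $\phi(u) \congr{\Acc} \phi(v)$; more generally, since the syntactic congruence depends only on the recognized subset and not on $\phi$, one shows that $\ker \psi'$ refines $\congr{\+R}$ when the latter is pulled back along $\phi$. Spelling this out: take $u, v \in \Sync\Sigma$ with $\psi'(u) = \psi'(v)$; for any context $x, y$ (which, by surjectivity of $\psi'$ and $\phi$ and the free generation of $\Sync\Sigma$, may be taken to be images of \kl{synchronous words}), $x u y \in \proj{\+R} \iff \psi'(xuy) \in \Bcc \iff \psi'(xvy) \in \Bcc \iff xvy \in \proj{\+R}$, using $\psi'(xuy) = \psi'(x)\psi'(u)\psi'(y) = \psi'(x)\psi'(v)\psi'(y) = \psi'(xvy)$; and similarly for the one-sided contexts. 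Translating through $\phi$ and the definition of $\congr{\Acc}$ (remembering that $\proj{\+R} = \phi^{-1}[\Acc]$, so membership in $\Acc$ of $\phi(xuy)$ mirrors membership of $xuy$ in $\proj{\+R}$) yields $\phi(u) \congr{\Acc} \phi(v)$. Hence the map $\psi \colon \?B \to \quotient{\?A}{\congr{\Acc}}$ sending $\psi'(u) \mapsto \equivclass{\phi(u)}{\congr{\Acc}}$ is well-defined, surjective, and makes the triangle commute; checking it is a \kl{synchronous algebra morphism} (preserves \kl{type}, units, \kl{product}, \kl{dependency}) is routine since it is defined on generators compatibly with $\phi$ and $\psi'$, which themselves are morphisms.

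\emph{Stage 3: conclude.} By Stages 1 and 2, $\quotient{\phi}{\congr{\Acc}}$ satisfies exactly the defining property of the \kl{syntactic morphism@@sync} in \Cref{lem:syntactic-morphism-theorem}; by the uniqueness clause of that lemma, it \emph{is} the syntactic morphism of $\+R$ (up to isomorphism of the algebra), which is the statement. The main obstacle I anticipate is not any single deep step but rather the bookkeeping around the \kl{dependency relation} and the partiality of the \kl{product}: one must consistently restrict attention to contexts $x, y$ of types \kl{compatible} with the element being substituted, handle the one-sided and two-sided context cases of the definition of $\congr{C}$ separately, and repeatedly use that accepting/recognizing subsets are \kl{closed subsets} so that $\dep$-related elements cannot be separated by membership. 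A secondary subtlety is justifying that it suffices to test the congruence conditions on contexts that are genuine images of \kl{synchronous words} under $\phi$ (rather than arbitrary elements of $\?A$): this follows from the surjectivity of $\phi$, but must be stated, since the definition of $\congr{C}$ quantifies over all $x, y \in \?A$.
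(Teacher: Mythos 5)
Your plan is correct and follows essentially the same route as the paper: establish that $\quotient{\phi}{\congr{\Acc}}$ recognizes $\+R$ (using a unit context and closedness of $\Acc$ to handle the "empty context", exactly as needed), then factor it through any other surjective recognizer $\psi'$ via the map $\psi'(u)\mapsto\equivclass{\phi(u)}{\congr{\Acc}}$, whose well-definedness rests on transporting $\congr{\+R}$ along the surjection $\phi$. The only place where "routine" undersells the work is the preservation of the "dependency relation" by the factoring map, which is where the paper's proof spends most of its effort: one must show that $\psi'(u)\dep\psi'(v)$ (not merely $\psi'(u)=\psi'(v)$) already forces $\phi(u)\congr{\Acc}\phi(v)$, via the same closed-subset argument you correctly identify among your anticipated obstacles.
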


\begin{corollary}
	\label{coro:syntactic-congruence-is-syntactic-dependency}
	In the "syntactic synchronous algebra" $\SyntSA{\+R}$, the "syntactic
	congruence" $\congr{\Acc}$ and the "dependency relation" $\dep$
	coincide. 
\end{corollary}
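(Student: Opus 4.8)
The plan is to apply \Cref{prop:quotient-by-syntactic-congruence-is-syntactic-algebra} to the "syntactic morphism@@sync" $\SyntSAM{\+R}$ itself. Write $\Acc \subseteq \SyntSA{\+R}$ for the accepting "closed subset" for which $\SyntSAM{\+R}$ "recognizes@@sync" $\+R$, and let $q \colon \SyntSA{\+R} \surj \quotient{\SyntSA{\+R}}{\congr{\Acc}}$ be the canonical quotient "morphism@@sync", $x \mapsto \equivclass{x}{\congr{\Acc}}$. Instantiating \Cref{prop:quotient-by-syntactic-congruence-is-syntactic-algebra} with $\phi \defeq \SyntSAM{\+R}$ and $\?A \defeq \SyntSA{\+R}$ shows that $q \compos \SyntSAM{\+R}$ is, again, a "syntactic morphism@@sync" of $\+R$.

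Since $\SyntSAM{\+R}$ is by definition also a "syntactic morphism@@sync" of $\+R$, applying the universal property of \Cref{lem:syntactic-morphism-theorem} in both directions — and using that $\SyntSAM{\+R}$ and $q \compos \SyntSAM{\+R}$ are surjective, hence right-cancellable — forces the canonical map $q$ to be an "isomorphism" of "synchronous algebras". In particular each component of $q$ is injective, so $\equivclass{x}{\congr{\Acc}} = \equivclass{y}{\congr{\Acc}}$ already forces $\type{x}{\tau} = \type{y}{\tau}$ whenever $\type{x}{\tau}, \type{y}{\tau} \in \SyntSA{\+R}$: hence $\congr{\Acc}$ restricted to any fixed "type" is the equality relation, and every $\congr{\Acc}$-class is a singleton $\{x\}$.

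The corollary then follows by unwinding the "quotient structure". On the one hand, being an "isomorphism", $q$ both preserves and reflects the "dependency relation", so $\type{a}{\sigma} \dep \type{b}{\tau}$ holds in $\SyntSA{\+R}$ "iff" $q(a) \dep q(b)$ holds in $\quotient{\SyntSA{\+R}}{\congr{\Acc}}$. On the other hand, by construction the "dependency relation" of $\quotient{\SyntSA{\+R}}{\congr{\Acc}}$ is the one induced by $\congr{\Acc}$; since the classes are singletons, this amounts to saying that $\equivclass{a}{\congr{\Acc}} \dep \equivclass{b}{\congr{\Acc}}$ "iff" $\type{a}{\sigma} \congr{\Acc} \type{b}{\tau}$. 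Chaining the two equivalences shows that $\dep$ and $\congr{\Acc}$ coincide on $\SyntSA{\+R}$. (One inclusion is in any case automatic, since every "congruence@@sync" is coarser than $\dep$.)

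I do not foresee a real obstacle: once \Cref{prop:quotient-by-syntactic-congruence-is-syntactic-algebra} is available, this is pure diagram-chasing. The points needing a little care are (i) checking that the comparison map furnished by the universal property of \Cref{lem:syntactic-morphism-theorem} is compatible with the two projections from $\Sync{\Sigma}$, so that it can legitimately be identified with $q$, and (ii) unwinding the definition of the "dependency relation" on a "quotient structure" — without the singleton observation one would instead invoke the fact that $\congr{\Acc}$ is "locally transitive" to pass from "$\congr{\Acc}$-related representatives" to "$\congr{\Acc}$-related elements". Alternatively, one can avoid \Cref{prop:quotient-by-syntactic-congruence-is-syntactic-algebra} altogether and argue directly, on $\SyntSA{\+R} = \quotient{\Sync\Sigma}{\congr{\+R}}$, that $\congr{\Acc}$ is contained in $\dep$: lifting contexts back along the surjection $\SyntSAM{\+R}$, the hypothesis $\type{a}{\sigma} \congr{\Acc} \type{b}{\tau}$ (with $a = \equivclass{u}{\congr{\+R}}$ and $b = \equivclass{v}{\congr{\+R}}$) unfolds to precisely the statement that $u$ and $v$ are $\congr{\+R}$-related, that is, $a \dep b$.
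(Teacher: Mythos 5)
Your proposal is correct and follows essentially the same route as the paper: apply \Cref{prop:quotient-by-syntactic-congruence-is-syntactic-algebra} to $\SyntSAM{\+R}$ itself, conclude via uniqueness of the syntactic morphism that the canonical map $x \mapsto \equivclass{x}{\congr{\Acc}}$ is an isomorphism, and then chain the fact that isomorphisms preserve and reflect $\dep$ with the definition of the dependency relation on the quotient. Your extra observation that the $\congr{\Acc}$-classes are singletons is a harmless elaboration of the paper's terser ``by definition'' step.
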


\begin{proof}
	By \Cref{prop:quotient-by-syntactic-congruence-is-syntactic-algebra}
	applied to the "syntactic morphism@@sync",
	$x \mapsto \equivclass{x}{\congr{\Acc}}$
	is an isomorphism from $\SyntSA{\+R}$ to $\quotient{\SyntSA{\+R}}{\congr{\Acc}}$.
	Hence, $\equivclass{x}{\congr{\Acc}} \dep \equivclass{y}{\congr{\Acc}}$
	in $\quotient{\SyntSA{\+R}}{\congr{\Acc}}$
	"iff" $x \dep y$ in $\SyntSA{\+R}$, for all $x,y \in \SyntSA{\+R}$.
	But then, the "dependency relation" $\dep$ of
	$\quotient{\SyntSA{\+R}}{\congr{\Acc}}$ is, by definition,
	such that $\equivclass{x}{\congr{\Acc}} \dep \equivclass{y}{\congr{\Acc}}$
	"iff" $x \congr{\Acc} y$.
	Putting both equivalences together, we get that
	$x \congr{\Acc} y$ "iff" $x \dep y$ for all $x,y \in \SyntSA{\+R}$.
\end{proof}

We provide in \Cref{ex:last_letter_is_a_if_big_diff}
a simple example of "syntactic synchronous algebra" whose "dependency relation"
is not an equivalence relation.  



\subparagraph*{Boolean operations}
\AP Given two "synchronous algebras" $\?A$ and $\?B$, define their \emph{Cartesian product}
$\?A \times \?B$ by taking, for each type $\tau$, the Cartesian product $A_\tau \times B_\tau$. 
Units, "product" are defined naturally, and the "dependency relation" is defined
by taking the conjunction over each component.
Then $\negrel \+R$ is "recognized@@sync" by $\?A$, and
$\+R \cup \+S$ and $\+R \cap \+S$ are "recognized@@sync" by $\?A \times \?B$. 


\section{The Lifting Theorem \& Pseudovarieties}
\label{sec:lifting-theorem}

\subsection{Elementary Formulation}

\begin{example}[Group relations: {\Cref{ex:algebra-Zpq} cont'd}]
	\label{ex:charac-group-relation-monoids}
	We want to decide when the relation
	\begin{align*}
		\+R^{I,J} \defeq \big\{
			(u,v) \;\big\vert\; & |u| > |v| \text{ and } (|u| - |v| \bmod{p}) \in I, \text{ or} \\
			& |u| < |v| \text{ and } (|v| - |u| \bmod{q}) \in J.
		\hphantom{\text{ or}}\big\}	
	\end{align*}
	from \Cref{ex:algebra-Zpq} is a \AP"group relation".
	By definition this happens if and only if there exists a finite group $G$,
	together with a monoid morphism $\phi\colon (\SigmaPair)^* \to G$ and a subset
	$\Acc \subseteq G$ "st" $\forall u \in \WellFormed$, $u\in \+R^{I,J}$ "iff" $\phi(u) \in \Acc$.
	We claim:
	\begin{equation}
		\+R^{I,J} \text{ is a "group relation"}\quad\text{"iff"}\quad
		\big(
			\bar 0 \not\in I \text{ and } \bar 0 \not\in J
		\big).
		\tag{{\Large\adforn{10}}}
	\end{equation}

	The right-to-left implication was shown in \Cref{ex:algebra-Zpq}.
	We prove the implication from left to right:
	let $n$ be the order of $G$ so that $x^n = 1$ for all $x\in G$. In particular, we have:
	$\phi\bigl(\pair{a}{\pad}^{pqn}\bigr) = 1 = \phi\bigl(\pair{a}{a}^{pqn}\bigr)$.
	Since $\phi\bigl(\pair{a}{a}^{pqn}\bigr) \not\in \+R^{I,J}$, it follows that 
	$\pair{a}{\pad}^{pqn} \not\in \+R^{I,J}$
	"ie" $\bar 0 \not\in I$. Also, $\bar 0 \not\in J$ by symmetry, which concludes the proof.
\end{example}

\AP\phantomintro{Lifting Theorem}
Even more generally, we can decide if a "relation" $\+R$ is a "group relation"
by simply looking at the "syntactic synchronous algebra" of $\+R$.

\liftingtheoremmonoids

See the proof in \Cref{apdx-proof:thm:lifting-theorem-monoids}.

\begin{remark}
	In light of \Cref{thm:lifting-theorem-monoids},
	one can wonder whether the notion of "synchronous algebra" is necessary to
	characterize "$\+V$-relations", or if it is enough to look at the languages corresponding to
	the "underlying monoids".
	Said otherwise, is the membership of $\+R$ in the class of "$\+V$-relations" uniquely
	determined by the regular languages $\+R \cap (\Sigma\times\Sigma)^*$,
	$\+R \cap (\Sigma\times\{\pad\})^*$ and $\+R \cap (\{\pad\}\times\Sigma)^*$?
	Unsurprisingly, "synchronous algebras" are indeed necessary, as 
	there are relations $\+R$ such that:
	\begin{equation}
		\label{eq:naive-characterization} \tag{\adforn{12}}
		\proj{\+R} \cap (\Sigma\times \Sigma)^* \in \+V_{\Sigma\times \Sigma}, \quad
		\proj{\+R} \cap (\Sigma\times \pad)^* \in \+V_{\Sigma\times \pad}\quad\text{and}\quad
		\proj{\+R} \cap (\pad\times \Sigma)^* \in \+V_{\pad\times \Sigma},
	\end{equation}
	but $\+R$ is \emph{not} a "$\+V$-relation". This can happen even if
	$\+V$ is the "$\ast$-pseudovariety@@reglang" of all regular languages: for instance for the
	"relation"
	\[
		\+R \defeq \{(u,v) \mid |u| > |v| > 0 \text{ and } |u| - |v| \text{ is prime}\}.
	\]
	Notice that there is a subtle but crucially important difference between
	\eqref{eq:naive-characterization} and the second item of the "Lifting Theorem":
	while the "underlying monoids" of a "synchronous algebra" $\?A$ "recognizing@@sync" $\+R$
	only accept words of the form $(\Sigma\times \Sigma)^*$, $(\Sigma\times \pad)^*$
	or $(\pad\times \Sigma)^*$, elements of $(\Sigma\times \Sigma)^+(\Sigma\times \pad)^+$
	or $(\Sigma\times \Sigma)^+(\pad\times \Sigma)^+$ influence the "underlying monoids" of $\?A$
	via the axioms of "synchronous algebras".
\end{remark}

Also, note that the existence the "Lifting Theorem" follows from the careful
definition of "synchronous algebras": more naive definitions of these algebras
simply cannot characterize "$\+V$-relations", see \Cref{apdx:counterexample}.

From \Cref{thm:lifting-theorem-monoids}
and the implicit fact that all our constructions are effective,
we obtain a decidability (meta-)result for
"$\+V$-relations".
\begin{corollary}
	\label{coro:decidability}
	The class of "$\+V$-relations" has decidable membership if, and only if, $\+V$ has decidable membership.
\end{corollary}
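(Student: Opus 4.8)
The plan is to combine the Lifting Theorem (\Cref{thm:lifting-theorem-monoids}) with the effectiveness of the syntactic-algebra construction (invoked just above the statement) and with the standard dictionary between a $\ast$-pseudovariety of regular languages $\+V$ and the pseudovariety of monoids $\B{V}$ it corresponds to: a regular language $L$ over $A$ lies in $\+V_A$ exactly when its (computable) syntactic monoid $\mathrm{Synt}_A(L)$ lies in $\B{V}$, and a finite monoid $M$ lies in $\B{V}$ exactly when all of the finitely many languages $L_m := \pi_M^{-1}(m)\subseteq M^*$ lie in $\+V_M$, where $\pi_M\colon M^*\to M$ is evaluation (here $\mathrm{Synt}_M(L_m)$ divides $M$, and conversely $M$ embeds into $\prod_{m\in M}\mathrm{Synt}_M(L_m)$). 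Thus $\+V$ has decidable membership exactly when $\B{V}$ does, and it suffices to prove the corollary with $\B{V}$ in place of $\+V$.

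First, assume $\B{V}$ has decidable membership. Given a synchronous relation $\+R$, I compute its syntactic synchronous algebra $\SyntSA{\+R}$: it is finite because $\+R$ is synchronous (\Cref{prop:synchronous-iff-finite,lem:syntactic-morphism-theorem}), and computable by unwinding \Cref{sec:synchronous-algebras} — recognize $\+R$ as a regular language over $\SigmaPair$ by a finite monoid $M$, build $\inducedalg{M}$ and $\inducedmor{\phi}$ (\Cref{fact:induced-morphism}), corestrict onto the finite image, and quotient by the syntactic congruence of the accepting set (\Cref{prop:quotient-by-syntactic-congruence-is-syntactic-algebra}), all effective on finite data. Reading off the three underlying monoids of $\SyntSA{\+R}$ and applying the equivalence between the first and the third item of \Cref{thm:lifting-theorem-monoids}, $\+R$ is a $\+V$-relation exactly when all three lie in $\B{V}$, which is now decidable.

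Conversely, assume $\+V$-relation membership is decidable; I decide $M\in\B{V}$ for a given finite monoid $M$. Work over the alphabet $\Sigma := M$, let $\phi\colon(\SigmaPair)^*\to M$ be the monoid morphism with $\phi\pair{a}{\pad}:=a$ and $\phi\pair{a}{b}:=1=:\phi\pair{\pad}{a}$, and for each $m\in M$ let $\+R_m$ be the relation recognized by $\inducedmor{\phi}$ with accepting set $\{\type{x}{\tau}\mid x=m\}$ — concretely, the synchronous relation of pairs $(u,v)$ with $|u|>|v|$ and $u_{|v|+1}\cdots u_{|u|}=m$ in $M$ (together with all pairs with $|u|\le|v|$ when $m=1$), whose automaton is computed from $M$. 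I claim $\+R_m$ is a $\+V$-relation exactly when $L_m\in\+V_M$. Indeed $\inducedmor{\phi}$ recognizes $\+R_m$ by construction, so by \Cref{prop:quotient-by-syntactic-congruence-is-syntactic-algebra}, applied to $\inducedmor{\phi}$ viewed as the surjection $\Sync{\Sigma}\surj\inducedmor{\phi}(\Sync{\Sigma})$, the algebra $\SyntSA{\+R_m}$ is the quotient of that image by the syntactic congruence of $\{\type{x}{\tau}\mid x=m\}$; and since $\phi$ sends every proper letter and every letter of $\{\pad\}\times\Sigma$ to $1$, that image has trivial underlying monoid in types $\ll$ and $\bl$ and underlying monoid $M$ in type $\lb$, while a routine inspection of the syntactic congruence — whose type-$\lb$ left- and right-contexts realize all of $M$ — shows that on type $\lb$ it coincides with the syntactic congruence of $L_m$. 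Hence the three underlying monoids of $\SyntSA{\+R_m}$ are the trivial monoid, the trivial monoid, and $\mathrm{Synt}_M(L_m)$, so by \Cref{thm:lifting-theorem-monoids} and the dictionary above $\+R_m$ is a $\+V$-relation exactly when $\mathrm{Synt}_M(L_m)\in\B{V}$, that is, exactly when $L_m\in\+V_M$. Running the oracle on the finitely many $\+R_m$ and recalling that $M\in\B{V}$ holds exactly when $L_m\in\+V_M$ for every $m$, I decide $M\in\B{V}$.

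The ``if'' direction is bookkeeping once effectiveness is granted. The delicate step is the ``only if'' direction, which needs synchronous relations whose syntactic synchronous algebras have prescribed underlying monoids; the obvious encodings fail — for instance $\{(u,\varepsilon)\mid u\in L\}$ forces two of the three underlying monoids to be the two-element monoid $\{1,0\}$ (with $0$ absorbing) rather than the trivial monoid, so for a $\B{V}$ not containing it (say the groups) \emph{no} such relation is ever a $\+V$-relation. The remedy is the ``spread-out'' family $\+R_m$ above, obtained from a morphism that collapses two of the three letter-types to the unit; after that, checking the underlying monoids is a careful but routine syntactic-congruence computation, in the spirit of the worked examples of \Cref{sec:synchronous-algebras}.
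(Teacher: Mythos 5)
Your proposal is correct. The paper itself offers no real proof of this corollary: it is stated as an immediate consequence of \Cref{thm:lifting-theorem-monoids} ``and the implicit fact that all our constructions are effective''. Your forward direction (compute $\SyntSA{\+R}$ via $\inducedalg{M}$ and \Cref{prop:quotient-by-syntactic-congruence-is-syntactic-algebra}, read off the three "underlying monoids", test membership in $\B{V}$) is exactly the argument the paper is gesturing at, modulo the small point that the image of $\inducedmor{\phi}$ is not a "closed subset" of $\inducedalg{M}$, so the corestriction is a surjection onto the image equipped with the restricted "dependency relation" rather than onto a "subalgebra@@sync" in the paper's sense --- harmless, since \Cref{prop:quotient-by-syntactic-congruence-is-syntactic-algebra} only needs a surjective recognizing "morphism@@sync". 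Where you genuinely go beyond the paper is the converse direction: the paper never exhibits a reduction from $\B{V}$-membership to "$\+V$-relation" membership, and your family $\+R_m$ over $\Sigma = M$, with the morphism collapsing the $\ll$- and $\bl$-letters to the unit so that $\SyntSA{\+R_m}$ has underlying monoids $(1,\,\mathrm{Synt}_M(L_m),\,1)$, is a correct and non-obvious construction (I checked that the type-$\lb$ contexts in $\Sync\Sigma$ realize all of $M$ on both sides, so the restriction of $\congr{\+R_m}$ to type $\lb$ is indeed the syntactic congruence of $L_m$, and that cross-type dependencies cannot merge same-type elements). Your observation that the naive encoding $\{(u,\varepsilon)\mid u\in L\}$ fails because it forces a $\{1,0\}$ "underlying monoid" is a worthwhile caveat the paper does not record; it shows the ``only if'' half of the corollary is not quite as automatic as the paper's phrasing suggests.
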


For instance, a "relation" is a "group relation" if, and only if, all
"underlying monoids" of its "syntactic synchronous algebra" are
groups. 

\subsection{Pseudovarieties of Synchronous Relations}
\label{sec:varieties}

We introduce the notion of "pseudovariety of synchronous algebras" 
and "$\ast$-pseudovariety of synchronous relations". We show an "Eilenberg correspondence@@sync" between these two notions. We then reformulate the "Lifting Theorem"
to show that any "Eilenberg correspondence@@lang" between monoids and regular languages
lifts to an "Eilenberg correspondence@@sync" between "synchronous algebras" and "synchronous relations".

Say that a "synchronous algebra" $\?A$ is a \AP""quotient@@sync"" of $\?B$
when there exists a surjective "synchronous algebra morphism" from $\?B$ to $\?A$.
A ""subalgebra@@sync"" of $\?B$ is any "closed subset" of $\?B$ closed under "product"
and containing the units.
We then say that "synchronous algebra" $\?A$ \AP""divides@@sync"" $\?B$
when $\?A$ is a "quotient@@sync" of a "subalgebra@@sync" of $\?B$.

Observe that $\Sync\Sigma$ admits the following property:
elements of type $\ll\to\lb$ and $\ll\to\bl$ are generated by the "underlying monoids".
Since "syntactic synchronous algebras" are homomorphic images of $\Sync\Sigma$, they also
satisfy this property. In general, we say that a "synchronous algebra" $\?A$ is \AP""locally 
generated@@sync"" if every element of type $\ll\to\lb$ (resp.~$\ll\to\bl$)
can be written as the product of an element of type $\ll$ with an element of type $\lb$ (resp.~$\bl$).

A \AP""pseudovariety of synchronous algebras"" is any class $\B{V}$
of "locally generated@@sync" finite "synchronous algebras" closed under
\begin{itemize}
	\item \emph{finite product:} if $\?A, \?B \in \B{V}$ then $\?A \times \?B \in \B{V}$,
	\item \emph{division:} if some finite "locally generated" algebra $\?A$ "divides@@sync" $\?B$ for some $\?B \in \B{V}$, then $\?A \in \B{V}$.
\end{itemize}

Because of \Cref{lem:syntactic-morphism-theorem}, a "synchronous relation" is "recognized@@sync"
by a finite synchronous algebra of a "pseudovariety@@syncalg" $\B{V}$ "iff"
its "syntactic synchronous algebra" belongs to $\B{V}$.

A \AP""$\ast$-pseudovariety of synchronous relations"" is a function $\+V \colon \Sigma \mapsto \+V_\Sigma$
such that for any finite alphabet $\Sigma$, $\+V_\Sigma$ is a set of "synchronous relations" over 
$\Sigma$ such that $\+V$ is closed under
\begin{itemize}
	\itemAP ""Boolean combinations""\emph{:} if $\+R, \+S \in \+V_{\Sigma}$, then
		$\negrel\+R$, $\+R \cup \+S$ and $\+R \cap \+S$ belong to $\+V_{\Sigma}$ too,
	\itemAP ""Syntactic derivatives""\emph{:} if $\+R \in \+V_{\Sigma}$, then any "relation"
	"recognized@@sync" by the "syntactic synchronous algebra morphism" of $\+R$ also belongs
	to $\+V_{\Sigma}$.
	\itemAP ""Inverse morphisms""\emph{:} if $\phi\colon \Sync\Gamma \to \Sync\Sigma$ is
		a "synchronous algebra morphism" and $\+R \in \+V_{\Sigma}$ then
		$\phi^{-1}[\+R] \in \+V_{\Gamma}$. 
\end{itemize}

To recover a more traditional definition (of the form ``closure under Boolean operations, residuals\footnote{Also called ``quotient'' "eg" in \cite[\S III.1.3, p.~39]{pin_mathematical_2022}, or ``polynomial derivative'' in \cite[\S 4, p.~19]{bojanczyk_recognisable_2015}.} and inverse morphisms''), we need to properly define what are the residuals of a "relation". It 
turns out that the answer is quite surprising and less trivial than what one would expect.

\begin{definition}[Residuals]
	\label{def:residuals}
	Let $\?A$ be a "synchronous algebra", $\type{x}{\sigma} \in \?A$,
	and $C \subseteq \?A$ be a "closed subset".
	The \emph{left residual} and \emph{right residual} of $C$ by $\type{x}{\sigma}$ are defined by
	$\phantomintro{\residual}$
	\begin{align*}
		\reintro*\residual[\sigma]{x} C & \defeq
		\bigl\{
			\type{y}{\tau} \in \?A \mid
				\exists \type{y'}{\tau'} \congr{C} \type{y}{\tau},\;
				\type{x}{\sigma} \type{y'}{\tau'} \in C
		\bigr\}, \text{ and} \\
		C\reintro*\residual[\sigma]{x} & \defeq
		\bigl\{
			\type{y}{\tau} \in \?A \mid
				\exists \type{y'}{\tau'} \congr{C} \type{y}{\tau},\;
				\type{y'}{\tau'}\type{x}{\sigma} \in C
		\bigr\},
	\end{align*}
	respectively. We refer indiscriminately to both these notions as \AP""residuals"".
	We extend these notions to sets, by letting
	$\residual{X}{}C \defeq \bigcup_{x\in X}\residual{x}{}C$
	and $C\residual{X}{} \defeq \bigcup_{x\in X} C\residual{x}{}$.
\end{definition}

For the sake of readability, we will sometimes drop the "type" of elements when dealing
with "residuals".
It is routine to check that "residuals" are always "closed subsets" (since $\congr{C}$ is coarser than the "dependency relation"), or that $(\residual{x} C)\residual{y} =
\residual{x} (C\residual{y})$.
Equivalently, $C\residual[\sigma]{x}$ can be defined as the smallest "closed subset"
containing the ``naive residual''
$\bigl\{
	\type{y}{\tau} \in \?A \mid
		\type{y}{\tau}\type{x}{\sigma} \in C
\bigr\}$.
This latter set is always contained in $C\residual[\sigma]{x}$ (by reflexivity of $\congr{C}$),
and moreover, if it is empty, then so is $C\residual[\sigma]{x}$.

As an example, consider the relation $\+R$ from \Cref{ex:last_letter_is_a_if_big_diff}.
Then the ``naive right residual'' of $\proj{\+R}$ by $\type{\pair{a}{\pad}}{\lb}$
consists of $\type{\varepsilon}{\ll}$ and all elements of type $\lb$ and $\ll\to\lb$.
But it does not contain any element of type $\bl$ or $\ll\to\bl$ because such elements cannot be concatenated with $\type{\pair{a}{\pad}}{\lb}$ on the right.
Yet, the "residual" 
$\proj{\+R} \residual[\lb]{\pair{a}{\pad}}$ contains all elements of type $\bl$ (and also $\ll\to\bl$): for instance, $\type{\pair{\pad}{a}}{\bl} \in
\proj{\+R} \residual[\lb]{\pair{a}{\pad}}$ since $\type{\pair{\pad}{a}}{\bl} \congr{\+R} \type{\pair{a}{\pad}}{\lb}$
and $\type{\pair{a}{\pad}}{\lb} \type{\pair{a}{\pad}}{\lb} \in \+R$.

On the other hand, in the "algebra@@sync" $\Sync{a}$ consider the relation
$\+S = (aa)^*\times a(aa)^*$.
Then $\proj{\+S}\residual[\ll]{\pair{a}{a}}$ is empty since its 
``naive residual'' $\{\type{y}{\tau} \in \Sync{a} \mid \type{y}{\tau}\cdot\pair{a}{a} \in \+S\}$
is empty. Indeed, for $\type{y}{\tau}\cdot\type{\pair{a}{a}}{\ll}$ to
be well-defined, one needs $\tau$ to be $\ll$, "ie" $y$ encodes a pair of
two words $(u,v)$ of the same length. But then $(ua, va) \not\in \+S$.

\begin{restatable}{lemma}{lemmaCharacterizationPseudovar}
	\label{lemma:characterization-pseudovarieties-syncrel}
	A class $\+V\colon \Sigma \mapsto \+V_\Sigma$ is a "$\ast$-pseudovariety of synchronous relations" if, and only if, it is closed under "Boolean combinations", "residuals" and
	"inverse morphisms".
\end{restatable}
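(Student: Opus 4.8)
The plan is to prove the equivalence by showing that each of the three closure properties in the definition of a \kl{$\ast$-pseudovariety of synchronous relations} --- closure under \kl{Boolean combinations}, \kl{Syntactic derivatives}, \kl{Inverse morphisms} --- can be translated into the triple (\kl{Boolean combinations}, \kl{residuals}, \kl{inverse morphisms}), and conversely. Closure under \kl{Boolean combinations} and under \kl{inverse morphisms} appears verbatim on both sides, so the entire content is the equivalence, \emph{modulo} Boolean operations and inverse morphisms, between closure under \kl{Syntactic derivatives} and closure under \kl{residuals}. I would state this as the crux: a class closed under \kl{Boolean combinations} and \kl{inverse morphisms} is closed under \kl{Syntactic derivatives} if and only if it is closed under \kl{residuals}.

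For the direction ``\kl{Syntactic derivatives} $\Rightarrow$ \kl{residuals}'', I would first observe that by \Cref{coro:syntactic-congruence-is-syntactic-dependency} the \kl{syntactic congruence} $\congr{\Acc}$ of a \kl{relation} coincides with the \kl{dependency relation} of its \kl{syntactic synchronous algebra}, so a \kl{residual} $\residual[\sigma]{x}\proj{\+R}$ (or $\proj{\+R}\residual[\sigma]{x}$), computed inside $\SyntSA{\+R}$, is exactly the preimage under $\SyntSAM{\+R}$ of a \kl{closed subset} of $\SyntSA{\+R}$; hence the \kl{relation} it defines is \kl{recognized@@sync} by $\SyntSAM{\+R}$, i.e.\ is a syntactic derivative of $\+R$. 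One subtlety here: the set $\residual[\sigma]{x}C$ \emph{is} a \kl{closed subset} (remarked in the text just after \Cref{def:residuals}, using that $\congr{C}$ is coarser than $\dep$), which is exactly what makes it the preimage of an accepting set. For arbitrary $\type{x}{\sigma}$ in the algebra rather than a letter, I would push the computation to a morphism $\Sync\Gamma \to \Sync\Sigma$ realising $x$ as (the image of) a word, reducing general residuals to ``prepend/append a fixed word'' and then invoking \kl{inverse morphisms} together with syntactic derivatives; this is the standard trick, adapted to the typed setting.

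For the converse ``\kl{residuals} $\Rightarrow$ \kl{Syntactic derivatives}'', the key point is that every \kl{closed subset} $C$ of $\SyntSA{\+R}$ is obtained from $\Acc$ by a finite Boolean combination of \kl{residuals}: since $\SyntSA{\+R}$ is finite and $\SyntSAM{\+R}$ is surjective and \kl{recognizes@@sync} $\+R$, each singleton-up-to-$\congr{\Acc}$ class, and then each \kl{closed subset}, is separated by iterated left/right residuals of $\Acc$ by elements of the algebra --- this is the synchronous-algebra analogue of the classical fact that the syntactic-congruence classes of a regular language are Boolean combinations of its left/right quotients. Realising elements of $\SyntSA{\+R}$ by words via a morphism $\Sync\Gamma\to\Sync\Sigma$ again lets one replace ``residual by an algebra element'' with ``residual by a word'' composed with \kl{inverse morphisms}, so a class closed under \kl{Boolean combinations}, \kl{residuals} and \kl{inverse morphisms} contains every syntactic derivative of every member.

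The main obstacle I expect is bookkeeping around the \kl{dependency relation} and the typed/partial structure: unlike for monoids, products are only defined between \kl{compatible} \kl{types}, so the ``naive residual'' $\{\type{y}{\tau}\mid \type{y}{\tau}\type{x}{\sigma}\in C\}$ can miss entire types, and the correct object is its \kl{closed}-closure (exactly the asymmetry flagged by the example around \Cref{ex:last_letter_is_a_if_big_diff} in the text). One must check throughout that the sets produced are genuine \kl{closed subsets}, that residuation commutes with $\SyntSAM{\+R}$, and that the auxiliary morphisms $\Sync\Gamma\to\Sync\Sigma$ used to ``name'' algebra elements by words exist and are \kl{synchronous algebra morphisms} --- which holds because $\Sync\Gamma$ is free on $\Gamma^2_{\pad}$. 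Once these typed technicalities are in place, the argument is a faithful transcription of the classical Eilenberg-variety proof (see e.g.\ \cite{pin_mathematical_2022}), and I would organise the write-up as: (i) residuals of $\proj{\+R}$ in $\SyntSA{\+R}$ are preimages of closed subsets; (ii) closed subsets of $\SyntSA{\+R}$ are Boolean combinations of residuals of $\Acc$; (iii) reduce residuals by algebra elements to residuals by words plus inverse morphisms; (iv) assemble the two implications.
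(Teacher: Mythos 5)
Your proposal is correct and follows essentially the same route as the paper: the whole content is the equivalence, modulo Boolean operations and inverse morphisms, between closure under syntactic derivatives and closure under residuals, proved via (a) residuation commuting with preimages along the surjective syntactic morphism (the paper's \Cref{prop:inverse-morphism-preserve-residuals,prop:inverse-morphism-preserve-congruence}) and (b) every closed subset of $\SyntSA{\+R}$ being a Boolean combination of left/right residuals of the accepting set, which hinges exactly on \Cref{coro:syntactic-congruence-is-syntactic-dependency} as you identify. The only superfluous step is your detour through auxiliary morphisms $\Sync\Gamma \to \Sync\Sigma$ to name algebra elements by words: since residuals of a relation are by definition taken by elements of $\Sync\Sigma$ (i.e.\ typed words) and the syntactic morphism is surjective, the commutation lemma alone already converts residuals by algebra elements into residuals by words, which is how the paper proceeds.
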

See the proof in \Cref{apdx-proof:lemma:characterization-pseudovarieties-syncrel}.

Let \AP$\B{V} \intro*\corrAR \+V$ denote the map (called \emph{correspondence}) that takes a 
"pseudovariety of synchronous algebras" and maps it to
\[\+V\colon \Sigma \mapsto \{\+R \subseteq \Sigma^*\times \Sigma^* \mid \SyntSA{\+R} \in \B{V}\}.\]

Dually, let \AP$\+V \intro*\corrRA \B{V}$ denote the \emph{correspondence} that takes
a "$\ast$-pseudo\-variety of synchronous relations" $\+V$
and maps it to the "pseudovariety of synchronous algebras" "generated@@var" by
all $\SyntSA{\+R}$ for some $\+R \in \+V_{\Sigma}$.
Here, the ""pseudovariety generated"" by a class $C$
of finite "locally generated" "synchronous algebras"
is the smallest "pseudovariety@@syncalg" containing
all finite "locally generated" "algebras@@sync" of $C$,
or equivalently,\footnote{The proof is straightforward,
see "eg" \cite[Proposition XI.1.1, p.~190]{pin_mathematical_2022} for a proof in the context of semigroups.} the class of all finite "locally generated" "synchronous algebras" 
that "divide@@sync" a finite product of "algebras@@sync" of $C$.\footnote{Note that ``being "locally generated"'' is not preserved by taking "subalgebras", but this is not an issue: we restrict the construction to (finite) "locally generated" "algebras@@sync".}

\AP\phantomintro(sync){Eilenberg correspondence theorem}\vspace{-1em}
\begin{restatable}[\reintro{An Eilenberg theorem for synchronous relations}]{lemma}{thmeilenberg}
	\label{lem:eilenberg-sy}
	The correspondences $\B{V} \corrAR \+V$ and $\+V \corrRA \B{V}$ define
	mutually inverse bijections between "pseudovarieties of
	synchronous algebras" and "$\ast$-pseudovarieties of synchronous relations".
\end{restatable}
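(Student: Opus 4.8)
The plan is to follow the classical Eilenberg strategy, adapted to the typed/dependent setting. The statement asserts that the two correspondences $\B{V} \corrAR \+V$ and $\+V \corrRA \B{V}$ are mutually inverse bijections, so I would prove the two round-trips separately: starting from a "pseudovariety of synchronous algebras" $\B{V}$, pass to $\+V$ and back, and check we recover $\B{V}$; and starting from a "$\ast$-pseudovariety of synchronous relations" $\+V$, pass to $\B{V}$ and back, and check we recover $\+V$. A preliminary step is to verify that both correspondences are well-defined, i.e. that $\B{V} \corrAR \+V$ really produces a "$\ast$-pseudovariety of synchronous relations": closure under "Boolean combinations" follows from the Cartesian product construction (the paragraph on Boolean operations) together with the fact that syntactic algebras divide any recognizing algebra; closure under "Syntactic derivatives" is immediate since if $\SyntSA{\+R} \in \B{V}$ and $\+S$ is recognized by $\SyntSAM{\+R}$ then $\SyntSA{\+S}$ divides $\SyntSA{\+R}$, hence lies in $\B{V}$; and closure under "Inverse morphisms" requires checking that if $\phi\colon \Sync\Gamma \to \Sync\Sigma$ and $\SyntSA{\+R}\in\B{V}$, then $\SyntSA{\phi^{-1}[\+R]}$ divides $\SyntSA{\+R}$, using that $\SyntSAM{\+R}\compos\phi$ recognizes $\phi^{-1}[\+R]$ and factors through its syntactic morphism. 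Dually, $\+V\corrRA\B{V}$ lands in "pseudovarieties of synchronous algebras" essentially by construction, since we take the "pseudovariety generated" by the relevant syntactic algebras.

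For the round-trip $\B{V} \mapsto \+V \mapsto \B{V}'$: by definition $\+V_\Sigma = \{\+R \mid \SyntSA{\+R}\in\B{V}\}$, and $\B{V}'$ is the pseudovariety "generated@@var" by all $\SyntSA{\+R}$ with $\+R\in\+V_\Sigma$. The inclusion $\B{V}'\subseteq\B{V}$ is clear: each generator $\SyntSA{\+R}$ with $\+R\in\+V_\Sigma$ is in $\B{V}$ by definition of $\+V$, and $\B{V}$ is closed under finite products and division, hence contains the generated pseudovariety. For the reverse inclusion $\B{V}\subseteq\B{V}'$, I would take an arbitrary $\?A\in\B{V}$; since $\?A$ is finite and "locally generated", it is a "quotient@@sync" of $\Sync\Sigma$ for a large enough alphabet $\Sigma$ (using the freeness of $\Sync\Sigma$ and local generation to hit the $\ll\to\lb$ and $\ll\to\bl$ parts). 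Pick a surjective morphism $\phi\colon \Sync\Sigma\surj\?A$; then $\?A$ recognizes, for each "closed subset" $C$ of $\?A$, the relation $\+R_C$ with $\proj{\+R_C}=\phi^{-1}[C]$, and by \Cref{lem:syntactic-morphism-theorem} $\SyntSA{\+R_C}$ divides $\?A$, hence lies in $\B{V}$, hence $\+R_C\in\+V_\Sigma$. The standard trick is then to recover $\?A$ (up to "division@@sync") as a subalgebra of a finite product of the $\SyntSA{\+R_C}$'s: the joint map $\prod_C \SyntSAM{\+R_C}$ separates elements of $\?A$ that are not "dependency"-related — here one must be careful that "congruences@@sync" are only "locally transitive", so the relevant claim is that the intersection of the syntactic congruences $\bigcap_C \congr{C}$ equals $\dep$ on $\?A$, which should follow from the definition of the "syntactic congruence" by taking $C$ ranging over enough closed subsets (including principal ones). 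This exhibits $\?A$ as a "quotient@@sync" of a "subalgebra@@sync" of $\prod_C \SyntSA{\+R_C}$, so $\?A\in\B{V}'$.

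For the round-trip $\+V\mapsto\B{V}\mapsto\+V'$: here $\B{V}$ is generated by all $\SyntSA{\+R}$, $\+R\in\+V$, and $\+V'_\Sigma = \{\+S\mid \SyntSA{\+S}\in\B{V}\}$. The inclusion $\+V\subseteq\+V'$ is immediate: if $\+R\in\+V_\Sigma$ then $\SyntSA{\+R}$ is a generator of $\B{V}$, so $\SyntSA{\+R}\in\B{V}$, hence $\+R\in\+V'_\Sigma$. For $\+V'\subseteq\+V$, take $\+S$ with $\SyntSA{\+S}\in\B{V}$; by the "equivalently" clause in the definition of "pseudovariety generated", $\SyntSA{\+S}$ "divides@@sync" a finite product $\SyntSA{\+R_1}\times\cdots\times\SyntSA{\+R_n}$ with each $\+R_i\in\+V$ (over possibly various alphabets). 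The task is then to show $\+S\in\+V_\Sigma$ using closure of $\+V$ under "Boolean combinations", "residuals" and "inverse morphisms" — here I would invoke \Cref{lemma:characterization-pseudovarieties-syncrel} to work with that more concrete axiom set. The argument mirrors the monoid case: a "division@@sync" $\SyntSA{\+S} \prec \prod_i\SyntSA{\+R_i}$ is witnessed by a "subalgebra@@sync" and a surjection; one expresses $\proj{\+S}$ as a finite Boolean combination of inverse images, under suitable morphisms $\Sync\Sigma\to\Sync{\Sigma_i}$, of relations obtained from the $\+R_i$ by "Syntactic derivatives" and "residuals" (the "residuals" handling the passage to the subalgebra, the "inverse morphisms" handling the product projections, after first coding $\SigmaPair$ letters appropriately).

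The main obstacle, and the place where this differs genuinely from the classical proof, is the bookkeeping around the "dependency relation" and the non-transitivity of "congruences@@sync". Concretely: (i) in the round-trip $\B{V}\mapsto\B{V}'$, one must verify that the family of syntactic congruences of principal closed subsets intersects down exactly to $\dep$ — naively one would want transitivity here, and the "locally transitive" substitute has to be handled by choosing closed subsets carefully; (ii) in the round-trip $\+V\mapsto\+V'$, the "residual" operation as defined in \Cref{def:residuals} is itself "dependency"-aware (it saturates under $\congr{C}$), and one must check that the Boolean-combination-of-residuals-and-inverse-images decomposition of $\proj{\+S}$ respects closed subsets throughout — in particular that the "subalgebra@@sync" arising in the "division@@sync" is a "closed subset", which is guaranteed by the definition of "subalgebra@@sync" but must be used explicitly when transporting accepting sets. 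Everything else is a routine transcription of the standard Eilenberg argument (see e.g. \cite[\S XIII]{pin_mathematical_2022}), with "locally generated" algebras playing the role of monoids and $\Sync\Sigma$ the role of the free monoid.
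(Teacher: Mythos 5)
Your overall strategy coincides with the paper's: first check that $\B{V} \corrAR \+V$ lands in $\ast$-pseudovarieties of relations (Boolean combinations via products, syntactic derivatives and inverse morphisms via division of syntactic algebras), then prove the two round-trip equalities, whose nontrivial halves are $\B{V} \subseteq \B{V}'$ and $\+V' \subseteq \+V$. Your second round-trip is essentially the paper's argument: the paper writes $\proj{\+S}$ as a union over the accepting tuples of intersections of $\psi_i^{-1}[\SyntSAM{\+S_i}^{-1}[x_i]]$, where the $\psi_i\colon \Sync\Sigma \to \Sync{\Gamma_i}$ are obtained by lifting through the surjective syntactic morphisms exactly as you sketch; it uses the closure under syntactic derivatives directly rather than detouring through residuals and \Cref{lemma:characterization-pseudovarieties-syncrel}, which is marginally cleaner but not a substantive difference.

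The genuine gap is in $\B{V} \subseteq \B{V}'$. You index the product by the closed subsets $C$ of $\?A$ and reduce the well-definedness of the quotient map onto $\?A$ to the claim that $\bigcap_C \congr{C}$ equals $\dep$ on $\?A$, which you only assert ``should follow'' from the definitions. This is precisely the crux, and it is doubtful in the stated generality: with the empty context, a closed subset can only distinguish elements lying in different classes of the \emph{transitive closure} of $\dep$, so two distinct same-type elements $a \neq b$ linked by a chain $a \dep c \dep b$ through another type can only be separated by a multiplicative context $(x,y)$ for which $xay$ and $xby$ fall in different closure classes --- and nothing in the axioms guarantees such a context exists for an arbitrary finite locally generated algebra (one can arrange all nontrivial products of a given type to collapse onto a single element, so that no context separates $a$ from $b$ while $a \neq b$). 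The paper sidesteps this entirely: it indexes the product by the elements $\type{x}{\tau} \in \?A$ and takes the syntactic algebras of the preimages $\phi^{-1}[\type{x}{\tau}]$; then for every $u$ the tuple $\Psi(u)$ has a unique coordinate lying in its accepting set, namely the one indexed by $\phi(u)$, so $\Psi(u)$ determines $\phi(u)$ outright and no separation lemma is needed. To repair your plan you would need either to prove your intersection claim (which I do not believe holds for all algebras in $\B{V}$) or to switch to the paper's element-indexed construction.
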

See the proof in \Cref{apdx-proof:lem:eilenberg-sy}.

As consequence of \Cref{lem:eilenberg-sy}, if
$\+V$ is a "$\ast$-pseudovariety of synchronous relations"
and $\B{V}$ is a "pseudovariety of synchronous algebras",
we write \AP$\+V \intro*\corr \B{V}$
to mean that either $\+V \corrRA \B{V}$ or, equivalently, $\B{V} \corrAR \+V$.
This relation is called an \AP""Eilenberg-Schützenberger correspondence@@sync"".

\begin{proposition}
	If~~$\B{V}$ is a "pseudovariety of monoids", then \AP$\phantomintro{\projA}$
	\begin{align*}
		\reintro*\projA{\B{V}} & \defeq
		\{\?A \text{ "locally generated@@sync" finite "synchronous algebra"} \\
		& \qquad\qquad \text{ "st" all "underlying monoids" of $\?A$ are in $\B{V}$}\}
	\end{align*}
	is a "pseudovariety of synchronous algebras". Similarly,
	if~~$\+V$ is an "$\ast$-pseudovariety of regular languages", then
	the class of "$\+V$-relations", namely
	\[
		\intro*\projL{\+V} \colon
		\Sigma \mapsto \{\+R \subseteq \Sigma^* \times \Sigma^* \mid \exists L \in \+V_{\SigmaPair},\, \proj{\+R} = L \cap \WellFormed \},
	\]
	is a "$\ast$-pseudovariety of synchronous relations".
\end{proposition}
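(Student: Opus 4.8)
The plan is to prove the two halves of the proposition separately, since each amounts to checking a short list of closure properties against the relevant definition.

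\textbf{The algebraic side.} First I would show that $\projA{\B{V}}$ is a "pseudovariety of synchronous algebras". By construction every member is a "locally generated@@sync" finite "synchronous algebra", so it remains to verify closure under finite products and division. For products, the three "underlying monoids" of $\?A \times \?B$ are (isomorphic to) the products of the corresponding "underlying monoids" of $\?A$ and $\?B$ — this is immediate from the definition of the "Cartesian product@cat" of "synchronous algebras", which is taken componentwise on each "type" — and "pseudovarieties of monoids" are closed under finite products, so $\?A \times \?B \in \projA{\B{V}}$. For division, suppose a finite "locally generated" $\?A$ "divides@@sync" some $\?B \in \projA{\B{V}}$; then $\?A$ is a "quotient@@sync" of a "subalgebra@@sync" $\?C$ of $\?B$. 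The key observation is that each "underlying monoid" of $\?A$ is then a quotient of the corresponding "underlying monoid" of $\?C$, which in turn is a submonoid of the corresponding "underlying monoid" of $\?B$ (here one uses that a "subalgebra@@sync" is a "closed subset" closed under "product" and containing the units, so its restriction to a single type $\ll$, $\lb$ or $\bl$ is genuinely a submonoid, and that a surjective "morphism@@sync" restricts to a surjective monoid morphism on each type). Hence each "underlying monoid" of $\?A$ "divides@monoid" the corresponding one of $\?B$, which lies in $\B{V}$; since $\B{V}$ is closed under "monoid division", all three "underlying monoids" of $\?A$ are in $\B{V}$, so $\?A \in \projA{\B{V}}$.

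\textbf{The language side.} For $\projL{\+V}$, I would use \Cref{lemma:characterization-pseudovarieties-syncrel}: it suffices to check that $\projL{\+V}$ is closed under "Boolean combinations", "residuals" and "inverse morphisms", and that each $\projL{\+V}_\Sigma$ consists of "synchronous relations" (the latter is clear since $\+V_{\SigmaPair}$ consists of regular languages and "synchronous relations" are exactly the regular languages over $\SigmaPair$ by \Cref{fact:synchronous-is-regular}). For "Boolean combinations": if $\proj{\+R} = L \cap \WellFormed$ and $\proj{\+S} = L' \cap \WellFormed$ with $L, L' \in \+V_{\SigmaPair}$, then $\negrel\+R$ corresponds to $(\SigmaPair)^* \setminus L$ intersected with $\WellFormed$, and $\+R \cup \+S$, $\+R \cap \+S$ to $L \cup L'$, $L \cap L'$; since "$*$-pseudovarieties@reglang" are closed under Boolean operations over a fixed alphabet, we stay in $\+V_{\SigmaPair}$. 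For "inverse morphisms": a "synchronous algebra morphism" $\phi\colon \Sync\Gamma \to \Sync\Sigma$ is determined by its values on $\SigmaPair[\Gamma]$, which are "synchronous words" over $\Sigma$; I would take the "consolidation" $\consol\phi$, or more directly observe that $\phi$ restricted to $\WellFormed[\Gamma]$ is computed by a length-preserving-on-tapes letter-to-word substitution, and argue that the preimage of a "$\+V$-relation" is again obtained by intersecting $\WellFormed[\Gamma]$ with a language obtained from $L$ by an "inverse morphism" of the corresponding monoid — using that "$*$-pseudovarieties of regular languages" are closed under inverse monoid morphisms. The subtle point — and the step I expect to be the main obstacle — is closure under "residuals", because the definition of "residuals" (\Cref{def:residuals}) is deliberately non-naive: it passes through the "syntactic congruence" $\congr{C}$ and saturates to a "closed subset". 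I would handle this by first translating "Syntactic derivatives" and "residuals" back to the monoid world via the Eilenberg-style correspondence underlying \Cref{thm:lifting-theorem-monoids}: since $\+V$ "corresponds@EilenbergSg" to a "pseudovariety of monoids" $\B{V}$, the class of "$\+V$-relations" over $\Sigma$ equals $\{\+R \mid \SyntSA{\+R} \in \projA{\B{V}}\}$ by the "Lifting Theorem". Given that reformulation, closure under "Syntactic derivatives" is immediate (the "syntactic synchronous algebra" is unchanged, up to the accepting "closed subset", when passing to a relation recognized by the same "syntactic morphism@@sync", and all such relations share the same "underlying monoids"), and closure under "residuals" follows because a "residual" of $\proj{\+R}$ is recognized by a "quotient@@sync" of $\SyntSA{\+R}$ — so its "syntactic synchronous algebra" "divides@@sync" $\SyntSA{\+R}$, whence its "underlying monoids" "divide@monoid" those of $\SyntSA{\+R}$, which are in $\B{V}$.

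\textbf{Assembling the argument.} Concretely I would present the proof in the order: (i) the product and division closure of $\projA{\B{V}}$, reducing each to the corresponding monoid-level fact via the componentwise structure of "synchronous algebras"; (ii) invoke \Cref{thm:lifting-theorem-monoids} to re-describe $\projL{\+V}_\Sigma$ as $\{\+R \mid \SyntSA{\+R} \in \projA{\B{V}}\}$; (iii) deduce closure of $\projL{\+V}$ under "Boolean combinations" (using closure of $\projA{\B{V}}$ under "Cartesian product@cat", via the "Cartesian product" recognizing $\+R\cup\+S$ and $\+R\cap\+S$ and any algebra recognizing $\negrel\+R$ as well), under "Syntactic derivatives" (syntactic algebra preserved), and under "residuals" (syntactic algebra "divides@@sync" the old one, so $\projA{\B{V}}$ membership is inherited by closure under "division@@sync"); and (iv) conclude by \Cref{lemma:characterization-pseudovarieties-syncrel}. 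The only genuinely delicate verification is that the non-naive "residual" construction still yields a relation whose "syntactic synchronous algebra" "divides@@sync" the original — but this is exactly the payoff of the care taken in \Cref{def:residuals} and can be extracted from the proof of \Cref{lemma:characterization-pseudovarieties-syncrel}, which already establishes that "residuals" are "recognized@@sync" by "quotient@@sync"s of the "syntactic synchronous algebra".
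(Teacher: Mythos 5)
Your proof is correct and follows essentially the same route as the paper: the paper likewise treats the first half as a direct (and, in its words, straightforward) verification, and derives the second half from the first together with the Lifting Theorem (\Cref{thm:lifting-theorem-monoids}) and the Eilenberg correspondence (\Cref{lem:eilenberg-sy}). The only difference is that your steps (iii)--(iv) re-verify by hand closure properties that the already-established direction of \Cref{lem:eilenberg-sy} --- namely that $\B{W} \corrAR \+W$ always yields a $*$-pseudovariety of synchronous relations whenever $\B{W}$ is a pseudovariety of synchronous algebras --- gives for free once $\projA{\B{V}}$ is known to be such a pseudovariety and $\projL{\+V}$ is identified with $\{\+R \mid \SyntSA{\+R} \in \projA{\B{V}}\}$.
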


\begin{proof}
	The first point is straightforward. The second one follows from it and \Cref{lem:eilenberg-sy,thm:lifting-theorem-monoids}.
\end{proof}

\AP\phantomintro{Lifting Theorem: Pseudovariety Formulation}
Finally, \Cref{thm:lifting-theorem-monoids} can be elegantly rephrased
by saying that correspondences between "pseudovarieties of monoids"
and "$\ast$-pseudovarieties of regular languages" lift to correspondences
between "pseudovarieties of synchronous algebras" and
"$\ast$-pseudovarieties of synchronous relations".

\begin{theorem}[\reintro{Lifting Theorem: Pseudovariety Formulation}]
	\label{thm:lifting-theorem-monoids-pseudovarieties}
	If, in the "Eilenberg correspondence@@lang"
	between "pseudovarieties of mon\-oids" and "$\ast$-pseudovarieties of regular languages"
	we have~$\+V \corr \B{V}$,
	then in the "Eilenberg correspondence@@sync"
	between the "pseudovariety of synchronous algebras" $\projA{\B{V}}$ and
	the "$\ast$-pseudovariety of synchronous relations" $\projL{\+V}$,
	we have~$\projL{\+V} \corr \projA{\B{V}}$.
\end{theorem}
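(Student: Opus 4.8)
The plan is to deduce this statement essentially by unwinding definitions and invoking \Cref{thm:lifting-theorem-monoids} and \Cref{lem:eilenberg-sy}. Fix a "pseudovariety of monoids" $\B{V}$ and an "$\ast$-pseudovariety of regular languages" $\+V$ with $\+V \corr \B{V}$ in the classical "Eilenberg correspondence@@lang". By the preceding proposition, $\projA{\B{V}}$ is a "pseudovariety of synchronous algebras" and $\projL{\+V}$ is a "$\ast$-pseudovariety of synchronous relations", so both sides of the desired correspondence are well-defined objects of the bijection in \Cref{lem:eilenberg-sy}. It therefore suffices to show $\projA{\B{V}} \corrAR \projL{\+V}$, i.e.\ that for every alphabet $\Sigma$,
\[
	\projL{\+V}_\Sigma = \{\+R \subseteq \Sigma^*\times\Sigma^* \mid \SyntSA{\+R} \in \projA{\B{V}}\}.
\]

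First I would spell out the right-hand side: by definition of $\projA{\B{V}}$, a "synchronous relation" $\+R$ satisfies $\SyntSA{\+R} \in \projA{\B{V}}$ exactly when $\SyntSA{\+R}$ is "locally generated@@sync" (automatic, being a homomorphic image of $\Sync\Sigma$) and all its "underlying monoids" lie in $\B{V}$. This is precisely item~(3) of \Cref{thm:lifting-theorem-monoids}. The left-hand side, $\projL{\+V}_\Sigma$, is by definition the set of relations $\+R$ with $\proj{\+R} = L \cap \WellFormed$ for some $L \in \+V_{\SigmaPair}$, which is exactly the notion of "$\+V$-relation", i.e.\ item~(1) of \Cref{thm:lifting-theorem-monoids}. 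Since $\+V \corr \B{V}$ means $\+V$ "corresponds@@EilenbergSg" to $\B{V}$, the hypothesis of \Cref{thm:lifting-theorem-monoids} is met, and the equivalence $(1)\Leftrightarrow(3)$ gives the set equality above for each $\Sigma$. (A minor point to address: \Cref{thm:lifting-theorem-monoids} is stated for arbitrary relations $\+R$, while here we range over "synchronous relations"; but any $\+V$-relation is in particular synchronous, and conversely the right-hand side only contains relations with finite syntactic algebra, hence synchronous by the corollary to \Cref{lem:syntactic-morphism-theorem}, so the restriction is harmless.)

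Having established $\projA{\B{V}} \corrAR \projL{\+V}$, I would then invoke \Cref{lem:eilenberg-sy}: since $\corrAR$ and $\corrRA$ are mutually inverse bijections, $\projA{\B{V}} \corrAR \projL{\+V}$ is equivalent to $\projL{\+V} \corrRA \projA{\B{V}}$, and both are summarized by writing $\projL{\+V} \corr \projA{\B{V}}$, which is the claim. I do not expect any serious obstacle here: the real content has already been packaged into \Cref{thm:lifting-theorem-monoids} (the hard analytic/algebraic work) and \Cref{lem:eilenberg-sy} (the bijection). The only thing requiring a little care is checking that the two descriptions of ``$\+V$-relation'' — one via $\projL{\+V}$ as a function of alphabets, one via the syntactic-algebra membership condition defining $\corrAR$ — genuinely coincide on the nose for every $\Sigma$, which is the translation performed in the second paragraph. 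Everything else is bookkeeping about which direction of the bijection one is reading.
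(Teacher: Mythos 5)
Your proposal is correct and follows exactly the route the paper intends: the paper gives no separate proof of this theorem, presenting it as an "elegant rephrasing" of \Cref{thm:lifting-theorem-monoids} obtained by unwinding the definition of $\corrAR$ (so that the set equality for each $\Sigma$ is precisely the equivalence $(1)\Leftrightarrow(3)$ of the elementary formulation) and then invoking the mutual-inverse property from \Cref{lem:eilenberg-sy}. Your side remarks — that well-definedness of both sides comes from the preceding proposition, and that the finiteness/synchronicity bookkeeping is harmless — are exactly the right details to check.
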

\section{Discussion}
\label{sec:discussion}

A natural next step is to generalize \Cref{quest:V-relations} by replacing
$\WellFormed$ by a fixed regular language $\Omega$.

\begin{question}
	Given a class of regular languages $\+V$, can we characterize (and decide) all
	languages of the form $L \cap \Omega$ for some $L \in \+V$?
\end{question}

We claim that the construction of "synchronous algebras"
can be generalized for any $\Omega$, giving rise to the notion of ``path algebras''.
The "lifting theorem for monoids" can be shown to hold for some $\Omega$, including "well-formed words" for $n$-ary relations with $n\geq 3$, and that it cannot effectively hold for all $\Omega$.

A natural next step would then be to study the relationship between ``path algebras'' and
Figueira \& Libkin's $L$-controlled relations \cite[\S 3]{figueira_synchronizing_2015},
see also \cite{descotte_resynchronizing_2018}.

Lastly, it would be interesting to extend the results on algebras to automata: for instance, can
we adapt our proof to show the existence of a minimal \emph{synchronous} automaton for each relation?

\clearpage
\appendix
\section{Monads Everywhere!}
\label{apdx:monads}

We denote by \AP$\intro*\Set[\+S]$ and $\intro*\Pos[\+S]$ the categories of $\+S$-typed sets
and $\+S$-partially ordered sets---note that in this model,
each type is equipped with its own order and
that elements of different types cannot be compared.
Similarly, let \AP$\intro*\Dep[\+S]$ be the category of $\+S$-"dependent sets".

We claim that "synchronous algebras" correspond to "Eilenberg-Moore algebras" of some monad over the category $\Dep[\types]$. For the sake of readability, we represent the underlying
"typed set" of a 
$\types$-"dependent set"
\[\?X = \langle X_{\ll},\, X_{\ll\to\lb},\, X_{\lb},\, X_{\ll\to\bl},\, X_{\bl} \rangle\]
as follows:
\PictureTypedSet{X_{\ll}}{X_{\ll\to\lb}}{X_{\lb}}{X_{\ll\to\bl}}{X_{\bl}.}

We define the \emph{synchronous monad}
\AP$\intro*\MonadSync$ over $\Dep[\types]$ as the functor which maps 
\PictureTypedSet{A}{B}{C}{D}{E}
equipped with a dependency relation $\dep$ to the "dependent set"
\PictureTypedSet{A^*}{A^*BC^* \cup A^*C^*}{C^*}{A^*DE^* \cup A^*E^*}{E^*,}
and two words are "dependent" if their domain are isomorphic
and their letters are pairwise dependent.
The unit and free multiplication are defined naturally.

Note in particular that all five empty words are "mutually dependent",
and that "synchronous words" $\Sync\Sigma$ correspond to applying
$\MonadSync$ to
\PictureTypedSet{\Sigma\times\Sigma}{\emptyset}{\Sigma\times\{\pad\}}{\emptyset}{\{\pad\}\times\Sigma,}
equipped with equality.
Moreover, "synchronous algebras" exactly correspond to $\MonadSync$-algebras.


A systemic approach to algebraic language theory was proposed by Bojańczyk using the formalism of monads \cite{bojanczyk_recognisable_2015}, for monads over finitely typed sets $\Set$.
Urbat, Adámek, Chen \& Milius then extended these results to capture monads over varieties of typed (ordered) algebras \cite{urbat_eilenberg_2017}.
Lastly, Blumensath extended those results to monads over the category of typed posets $\Pos$
when the set $\+S$ of types is infinite \cite{blumensath_algebraic_2021}. 

Observe that the category of "dependent sets" is not captured by any of the results above since
the "dependency relation" can compare elements of different types, contrary to typed posets \& co. 
\section{Characterizing Induced Classes of Recognizable Relations}
\label{apdx:relations}
\label{apdx:relations-recognizable}

In this section, we prove a simple characterization of 
"recognizable relations" whose languages are ``simple''.

\begin{proposition}
	Let $\mathbb{V}$ be a pseudovariety of finite monoids
	and $\mathcal{V}$ be the corresponding pseudovariety of regular languages.
	Given a relation $\mathcal{R} \subseteq \Sigma^* \times \Sigma^*$, the following
	are equivalent:
	\begin{enumerate}
		\item $\mathcal{R}$ is recognized by a monoid in $\B{V}$, \textit{i.e.}
			there is a monoid morphism $\phi\colon \Sigma^*\times\Sigma^* \to M$
			for some monoid $M \in \mathbb{V}$ such that $\mathcal{R} = \phi^{-1}[\textrm{Acc}]$, and
		\item There exists $n \in \mathbb{N}$ and $K_1,\hdots,K_n,L_1,\hdots,L_n$
			in $\mathcal{V}$ such that $\mathcal{R} = \bigcup_{i=1}^n K_i \times L_i$,
	\end{enumerate}
	in which case we say that $\mathcal{R}$ is $\mathcal{V}$-recognizable.
\end{proposition}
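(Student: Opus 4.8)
The plan is to prove the two implications directly, the single load-bearing observation being that any monoid morphism out of the \emph{direct product} $\Sigma^*\times\Sigma^*$ splits as a product of its two ``coordinate'' morphisms. Concretely, from $\phi\colon \Sigma^*\times\Sigma^*\to M$ one sets $\phi_1(u)\defeq\phi(u,\varepsilon)$ and $\phi_2(v)\defeq\phi(\varepsilon,v)$, which are monoid morphisms $\Sigma^*\to M$, and since $(u,v)=(u,\varepsilon)\cdot(\varepsilon,v)$ in the product monoid we get $\phi(u,v)=\phi_1(u)\cdot\phi_2(v)$. Throughout I will use the defining property of the Eilenberg correspondence: a regular language lies in $\mathcal V$ if and only if it is recognized by a monoid in $\mathbb V$, and that $\mathbb V$, being a "pseudovariety of monoids", is closed under finite products and under division (submonoids and quotients) and contains the trivial monoid.

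For $(2)\Rightarrow(1)$: since each $K_i\in\mathcal V$ (resp.\ $L_i\in\mathcal V$) there is a morphism $\alpha_i\colon\Sigma^*\to M_i$ recognizing $K_i$ (resp.\ $\beta_i\colon\Sigma^*\to N_i$ recognizing $L_i$) with $M_i,N_i\in\mathbb V$. The map $(u,v)\mapsto(\alpha_i(u),\beta_i(v))$ from $\Sigma^*\times\Sigma^*$ to $M_i\times N_i$ recognizes $K_i\times L_i$, with accepting set $\alpha_i(K_i)\times\beta_i(L_i)$. Bundling these into the single morphism $\phi\colon\Sigma^*\times\Sigma^*\to\prod_{i=1}^n(M_i\times N_i)$, whose codomain lies in $\mathbb V$ by closure under finite products, one recovers $\mathcal R=\bigcup_i K_i\times L_i$ as $\phi^{-1}[\textrm{Acc}]$ where $\textrm{Acc}$ is the union over $i$ of the $i$-th cylinder over $\alpha_i(K_i)\times\beta_i(L_i)$. (If $n=0$ then $\mathcal R=\emptyset$ is recognized by the trivial monoid.)

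For $(1)\Rightarrow(2)$: with $\phi_1,\phi_2$ as above and $\mathcal R=\phi^{-1}[\textrm{Acc}]$, partition $\Sigma^*\times\Sigma^*$ according to the value of the pair $(\phi_1(u),\phi_2(v))\in M\times M$, which ranges over a finite set; this yields
\[
	\mathcal R \;=\; \bigcup_{\substack{(m,m')\in M\times M \\ m\cdot m'\in\textrm{Acc}}} \phi_1^{-1}(m)\times\phi_2^{-1}(m').
\]
This is a finite union, and each block $\phi_1^{-1}(m)$ (resp.\ $\phi_2^{-1}(m')$) is recognized by $\phi_1$ (resp.\ $\phi_2$) with codomain $M\in\mathbb V$, hence belongs to $\mathcal V$; taking the $K_i$'s and $L_i$'s to be these blocks gives the desired decomposition. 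There is no serious obstacle here: the proof is entirely standard once the factorization $\phi(u,v)=\phi_1(u)\phi_2(v)$ is in place. The only points requiring care are (i) checking that $\phi_1,\phi_2$ are genuine monoid morphisms and that $(u,\varepsilon)$ and $(\varepsilon,v)$ indeed commute and multiply to $(u,v)$, and (ii) the bookkeeping of accepting sets when amalgamating finitely many morphisms into one and when translating the finite union back into the statement; degenerate cases ($\textrm{Acc}=\emptyset$, $n=0$) are handled in a line using that $\emptyset$ and $\Sigma^*$ lie in $\mathcal V$.
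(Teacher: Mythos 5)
Your proof is correct and follows essentially the same route as the paper's: the $(1)\Rightarrow(2)$ direction uses the factorization $\phi(u,v)=\phi(u,\varepsilon)\cdot\phi(\varepsilon,v)$ to partition $\mathcal R$ into finitely many blocks $\phi_1^{-1}(m)\times\phi_2^{-1}(m')$, and the $(2)\Rightarrow(1)$ direction bundles recognizing morphisms for the $K_i$ and $L_i$ into a single morphism to a finite product of monoids in $\mathbb V$. The only cosmetic difference is that the paper takes the $M_i,N_i$ to be syntactic monoids, whereas you take arbitrary recognizers in $\mathbb V$; this changes nothing.
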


The statement of the result when $\B{V}$ is the pseudovariety of all regular languages is folklore,
see "eg" \cite[Corollary II.2.20, p.~254]{sakarovitch2009elements},
and the proof of the proposition is easy.

\begin{proof}
	\proofcase{$(1) \Rightarrow (2)$.} 
	By definition:
	\[
		\+R = \bigcup_{z \in \Acc} \phi^{-1}[z].
	\]
	Observe then that $\phi(u,v) = \phi((u,\varepsilon)\cdot(\varepsilon,v)) = \phi((u,\varepsilon))\cdot \phi((\varepsilon,v))$ for all $u,v \in \Sigma^* \times \Sigma^*$, and hence:
	\[
		\+R = \bigcup_{\substack{x,y \in M\\ \text{"st" } x\cdot y \in \Acc}}
		\underbrace{\{u \in \Sigma^* \mid \phi(u, \varepsilon) = x\}}_{\defeq \;K_x}
		\times \underbrace{\{v \in \Sigma^* \mid \phi(\varepsilon, v) = y\}}_{\defeq \;L_y}.
	\]
	Since $M$ is finite, the union is finite, and moreover, each $K_x$ and $L_y$ is
	recognized by $M$, and hence belong to $\+V$.

	\proofcase{$(2) \Rightarrow (1)$.} If $\mathcal{R} = \bigcup_{i=1}^n K_i \times L_i$ 
		where all languages belong to $\+V$, then let $M_i,N_i \in \B{V}$
		be their syntactic monoid, $\zeta_i,\eta_i$ be their
		syntactic morphism, and $\Acc_i,\Bcc_i$ be their accepting sets. Consider the monoid morphism
		\begin{center}
		\begin{tabular}{ccc}
			$\Sigma^* \times \Sigma^*$ & $\to$ & $\prod_{i}(M_i \times N_i)$ \\
			$(u,v)$ & $\mapsto$ & $\bigl( \zeta_i(u), \eta_i(v) \bigr)_{i}$.
		\end{tabular}
		\end{center}
		Then $\+R$ is the preimage by this morphism of
		\[
			\bigcup_{i=1}^n \bigl(
			\hdots \times (M_{i-1}\times N_{i-1}) \times
			(\Acc_i \times \Bcc_i) \times (M_{i+1}\times N_{i+1}) \times \hdots \bigr).
		\]
		The conclusion follows from the fact that $\B{V}$ is closed under (finite) products.
\end{proof}
\section{Pointers to Notions of Algebraic Language Theory}
\label{apdx:pointers-pin}

\begin{itemize}
	\itemAP ""idempotent"" elements:
		see \cite[\S II.1.2, p.~14]{pin_mathematical_2022};
	\itemAP semigroup division and ""monoid division"":
		see \cite[\S II.3.3, p.~21]{pin_mathematical_2022};
	\itemAP ""pseudovarieties of semigroups"" and ""pseudovarieties of monoids"":
		see \cite[\S XI.1, p.~189]{pin_mathematical_2022} under the name ``variety'';
	\itemAP ""$\ast $-pseudovarieties of regular languages""
		see \cite[\S XIII.3, p.~226]{pin_mathematical_2022};
	\itemAP ""$+$-pseudovarieties of regular languages"":
		see \cite[\S XIII.4, ``Eilenberg’s $+$-varieties'', p.~229]{pin_mathematical_2022};
	\itemAP ""Eilenberg correspondence@@lang"":
		see \cite[Theorem XIII.4.10, p.~228]{pin_mathematical_2022};
	\itemAP profinite topology and the $\intro*\profSg{-}$ notation:
		see \cite[\S X.2, p.~178]{pin_mathematical_2022};
	\itemAP ""profinite equality"" and the $\intro*\profeq$ notation:
		see \cite[\S XI.3, p.~193]{pin_mathematical_2022} under
		the name ``profinite identity for semigroups/monoids'';
	\itemAP ""profinite implication"" and the $\intro*\profimp$ notation:
		see \cite[\S XIII.1, p.~223]{pin_mathematical_2022};
	\itemAP ""satisfiability@@profeq"" of profinite equality:
		see \cite[\S XI.3, p.~193]{pin_mathematical_2022};
	\itemAP equations ""defining@@profeq"" a pseudovariety:
		see \cite[\S XI.3.3, p.~194]{pin_mathematical_2022}.
\end{itemize}
\section{Details on Synchronous Algebras}

\subsection{Proof of the Syntactic Morphism Theorem}
\label{apdx-proof:lem:syntactic-morphism-theorem}

\syntacticmorphismtheorem*

\begin{proof}[Proof of \Cref{lem:syntactic-morphism-theorem}.]
	\proofcase{Uniqueness.} Consider two potential "syntactic morphisms@@sync",
	say $\eta_1\colon \Sync\Sigma \surj \?A_1$ and $\eta_2\colon \Sync\Sigma \surj \?A_2$.
	Then by the universal property of $\eta_1$ (resp.~$\eta_2$), there exists
	$\psi_1\colon \?A_2 \surj \?A_1$ and $\psi_2\colon \?A_1 \surj \?A_2$ such that
	$\eta_1 = \psi_1 \circ \eta_2$ and $\eta_2 = \psi_2 \circ \eta_1$. Overall, it implies that
	the following digram commutes
	\begin{center}
		\begin{tikzcd}
			& \?A_1 \\
			\Sync\Sigma
				\ar[ur, "\eta_1", twoheadrightarrow]
				\ar[r, "\eta_2" description, twoheadrightarrow]
				\ar[dr, "\eta_1" swap, twoheadrightarrow]
			& \?A_2 \ar[u, "\psi_1" swap, twoheadrightarrow] \\
			& \?A_1, \ar[u, "\psi_2" swap, twoheadrightarrow]
		\end{tikzcd}
	\end{center}
	and so $\psi_1\circ \psi_2 \circ \eta_1 = \eta_1$. Since $\eta_1$
	is surjective, and hence right-cancellative, $\psi_1\circ \psi_2 = \id_{\?A_1}$.
	Symmetrically, $\psi_2 \circ \psi_1 = \id_{\?A_2}$, showing that
	$\psi_1$ and $\psi_2$ are mutually inverse "isomorphisms" of "synchronous algebras".

	\proofcase{Existence.} It follows from \Cref{prop:quotient-by-syntactic-congruence-is-syntactic-algebra}---which we will prove
	just afterwards---applied to the identity morphism $\id{}\colon \Sync\Sigma \surj \Sync\Sigma$,
	which "recognizes@@sync" $\+R$ since $\proj{\+R} = \id^{-1}[\proj{\+R}]$ and $\proj{\+R}$
	is "closed".
\end{proof}

\propquotientbysyntacticcongruenceissyntacticalgebra*

\begin{proof}
	We claim that $\quotient{\phi}{\congr{\Acc}}$
	is \emph{a}, and hence \emph{the},
	"syntactic synchronous algebra morphism" of $\+R$.
	First, if $\phi(u) \congr{\Acc} \phi(v)$ 
	then in particular $\phi(u) \in \Acc$ "iff" $\phi(v) \in \Acc$.
	It follows that the preimage of
	$\Acc' \defeq \equivclass{\Acc}{\congr{\Acc}}$ by $\quotient{\phi}{\congr{\Acc}}$
	is indeed $\proj{\+R}$. Moreover $\Acc'$ is a "closed subset" of
	$\quotient{\?A}{\congr{\Acc}}$ since $\Acc$ is a "closed subset" of $\?A$.
	Hence, $\quotient{\phi}{\congr{\Acc}}$ is a surjective "morphism@@sync"
	which "recognizes@@sync" $\+R$.

	Then, given another surjective "morphism@@sync" $\psi\colon \Sync\Sigma \surj \?B$
	which "recognizes@@sync" $\+R$, say $\proj{\+R} = \psi^{-1}[\Bcc]$, then for
	each $\type{b}{\sigma} \in \?B$, there exists $\type{u}{\sigma} \in \Sync\Sigma$ such that
	$\psi(\type{u}{\sigma}) = \type{b}{\sigma}$. This defines a map
	$\chi\colon \?B \to \quotient{\?A}{\congr{\Acc}}$ which
	sends $\type{b}{\sigma}$ to $\quotient{\phi}{\congr{\Acc}}(\type{u}{\sigma})$.

	We claim that $\quotient{\phi}{\congr{\Acc}} = \chi \circ \psi$,
	meaning that the following diagram commutes
	\begin{center}
		\begin{tikzcd}[ampersand replacement=\&]
			\& \?A \arrow{d}{\quotient{-}{\congr{\Acc}}} \\
			\Sync{\Sigma} 
				\arrow[twoheadrightarrow, swap]{ur}{\phi}
				\arrow[twoheadrightarrow]{r}{\quotient{\phi}{\congr{\Acc}}}
				\arrow[twoheadrightarrow, swap]{dr}{\psi}
			\& \SyntSA{\+R} \\
			\& \?B. \arrow[swap]{u}{\chi} \\
		\end{tikzcd}
	\end{center}
	Indeed, for
	$\type{u}{\sigma} \in \Sync\Sigma$, $\chi(\psi(\type{u}{\sigma}))$ equals
	$\quotient{\phi}{\congr{\Acc}}(\type{v}{\sigma})$ for some $\type{v}{\sigma} \in \Sync\Sigma$ 
	such that $\psi(\type{v}{\sigma}) = \psi(\type{u}{\sigma})$.
	This in turns implies that $\type{u}{\sigma} \congr{\+R} \type{v}{\sigma}$ since
	for all $x, y \in \Sync\Sigma$, if $xuy$ and $xvy$ are well-defined, 
	then so are $\psi(x)\psi(u)\psi(v)$ and $\psi(x)\psi(v)\psi(y)$,
	and both elements are equal, so one belongs to $\Bcc$ "iff" the other does.
	It follows that $xuy \in \+R$ "iff" $xvy \in \+R$, and hence
	$\type{u}{\sigma} \congr{\+R} \type{v}{\sigma}$.
	By surjectivity of $\phi$, it follows that
	$\phi(\type{u}{\sigma}) \congr{\Acc} \phi(\type{v}{\sigma})$,
	"ie" $\quotient{\phi}{\congr{\Acc}}(\type{u}{\sigma}) = \quotient{\phi}{\congr{\Acc}}(\type{v}{\sigma})$.
	And hence $\chi(\psi(u)) = \quotient{\phi}{\congr{\Acc}}(\type{u}{\sigma})$.

	We now show that $\chi$ is a "morphism@@sync".
	From $\quotient{\phi}{\congr{\Acc}} = \chi \circ \psi$ it follows that
	the map $\chi$ preserves the "product"\footnote{See "eg" \cite[Lemma 3.2, p.~10]{bojanczyk_recognisable_2015}.} and is surjective.
	Lastly, we claim that it preserves the "dependency relation". Indeed,
	by surjectivity of $\psi$, it boils down to proving that
	for all $\type{u}{\sigma}, \type{v}{\tau} \in \Sync\Sigma$, if $\psi(u) \dep \psi(v)$ then
	$\chi(\psi(u)) \dep \chi(\psi(v))$, namely $\quotient{\phi}{\congr{\Acc}}(\type{u}{\sigma}) \dep
	\quotient{\phi}{\congr{\Acc}}(\type{v}{\tau})$, which
	rewrites as $\phi(\type{u}{\sigma}) \congr{\Acc} \phi(\type{v}{\tau})$.
	So pick $x, y$ such that both $xuy$ and $xvy$ are well-defined. Then
	$\psi(x)\psi(u)\psi(y)$ and $\psi(u)\psi(v)\psi(y)$ have the same "type"
	as $xuy$ and $xvy$, respectively, so they are well-defined,
	but since $\psi(u) \dep \psi(v)$, then $\psi(x)\psi(u)\psi(y) \dep \psi(x)\psi(u)\psi(y)$
	and since $\Bcc$ is a "closed subset", one of these elements belongs in
	it "iff" the other ones does too, from which it follows
	that $xuv \in \+R$ "iff" $xvy \in \+R$ "ie" $\type{u}{\sigma} \congr{\+R} \type{v}{\tau}$
	and hence $\phi(\type{u}{\sigma}) \congr{\Acc} \phi(\type{v}{\tau})$.
	Overall, this proves that $\chi$ is a surjective "synchronous algebra morphism",
	and concludes our proof.
\end{proof}

\subsection{A Syntactic Synchronous Algebra}
\label{apdx-ex:last_letter_is_a_if_big_diff}

\begin{example}
	\label{ex:last_letter_is_a_if_big_diff}
	We now provide an example of "syntactic synchronous algebra" whose "dependency relation"
	is not an equivalence relation.
	Consider the relation
	\[
		\+R = \pair{\Sigma}{\Sigma}^*\bigl[
			\pair{\Sigma}{\pad} + \pair{\Sigma}{\pad}^+ \pair{a}{\pad}
			+ \pair{\pad}{\Sigma} + \pair{\pad}{\Sigma}^+ \pair{\pad}{a}
		\bigr],
	\]
	where $\Sigma = \{a,b\}$.
	In other words, a pair $(u,v)$ belongs to $\+R$ if either:
	the length difference between $u$ and $v$ is one, or 
	it is at least two and the longer words ends with an `$a$'.
	We are going to compute the "syntactic congruence" $\congr{\+R}$ of $\proj{\+R}$ in $\Sync\Sigma$.

	\begin{figure}[htbp]
		\begin{center}
			\includegraphics[width=.6\linewidth]{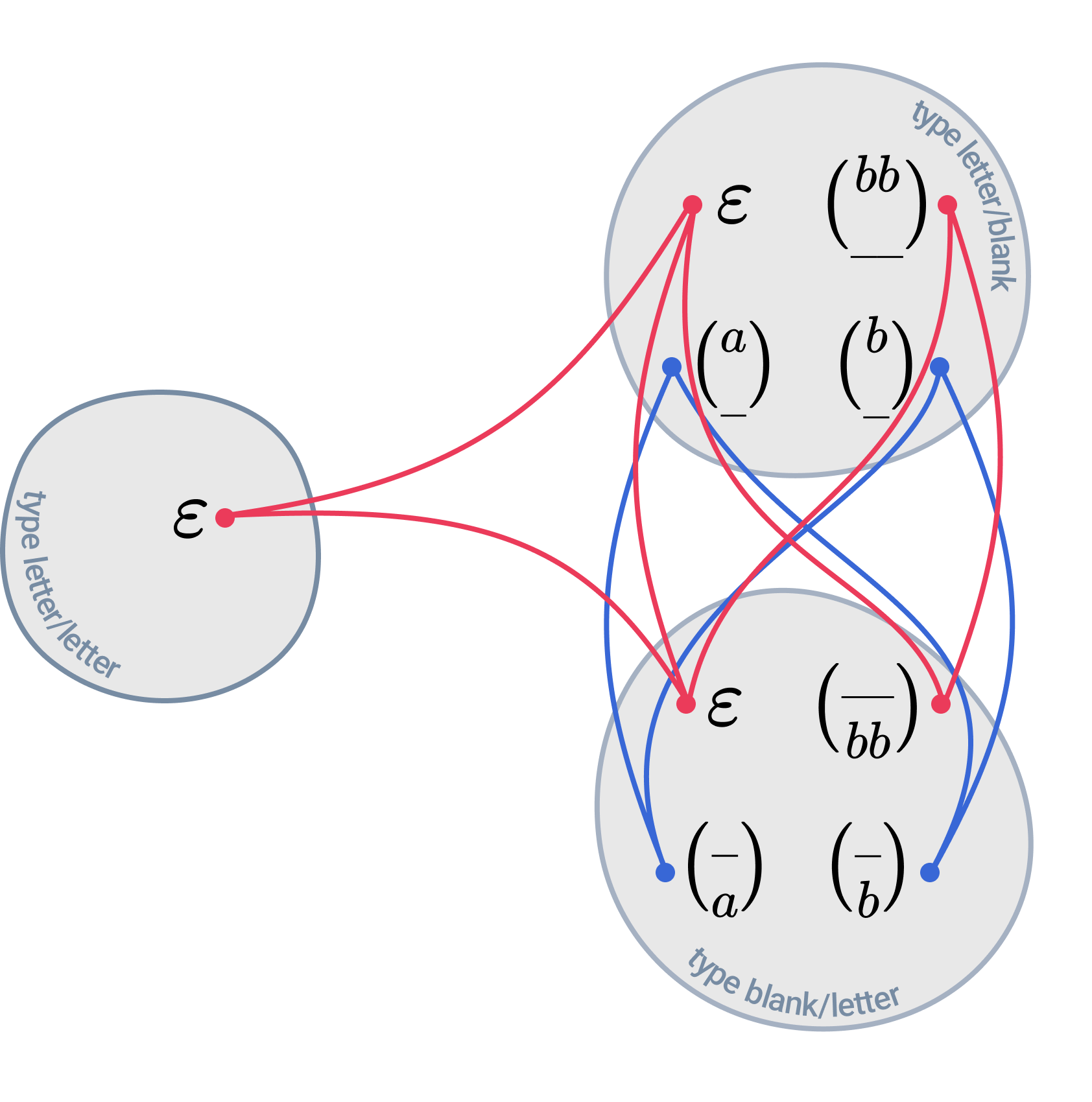}
		\end{center}
		\caption{
			\label{fig:example_syntactic_algebra}
			The three "underlying monoids" of the "syntactic synchronous algebra" of the relation
			$\+R$ of \Cref{ex:last_letter_is_a_if_big_diff}, together with its "dependency relation". The two equivalence classes of its transitive closure are drawn in red and blue.
		} 
	\end{figure}

	Recall that the restriction of $\congr{\+R}$ to a single "type" is
	an equivalence relation. For $\ll$, there is a single equivalence class,
	denoted by/identified with $\type{\varepsilon}{\ll}$.
	For "type" $\lb$, we claim that for any $u \in \Sigma^*$:
	\begin{multicols}{2}
	\begin{itemize}
		\item $\type{\pair{u}{\pad}}{\lb} \congr{\+R} \type{\pair{a}{\pad}}{\lb}$ "iff" $u \in \Sigma^* a$;
		\item $\type{\pair{u}{\pad}}{\lb} \congr{\+R} \type{\pair{b}{\pad}}{\lb}$ "iff" $u = b$;
		\item $\type{\varepsilon}{\lb}$ is alone in its equivalence class;
		\columnbreak
		\item all elements of the form $\type{\pair{u}{\pad}}{\lb}$ where $u$ is a word of length at least 2 whose last letter is a `$b$' are pairwise equivalent, and the class is identified with $\type{\pair{bb}{\pad}}{\lb}$.
	\end{itemize}
	\end{multicols}
	Since $\+R$ is invariant under $(u,v) \mapsto (v,u)$, the situation is symmetric for "type" $\bl$. Moreover, $\+R$ is invariant under adding/removing prefixes of type $\ll$,
	so "types" $\ll \to \lb$ and $\ll\to\bl$ also have four elements each.

	In the end, we obtain the "synchronous algebra" drawn in \Cref{fig:example_syntactic_algebra} (elements of "type" $\ll\to\lb$ and $\ll\to\bl$ are omitted for the sake of simplicity).
	Note that the "dependency relation" is not transitive:
	for instance $\type{\pair{a}{\pad}}{\lb} \congr{\+R} \type{\pair{\pad}{b}}{\bl}$
	and $\type{\pair{\pad}{b}}{\bl} \congr{\+R} \type{\pair{b}{\pad}}{\lb}$
	but $\type{\pair{a}{\pad}}{\lb} \not\congr{\+R} \type{\pair{b}{\pad}}{\lb}$.

Given two elements of distinct "type", we want to determine
when they are "dependent". For the sake of simplicity,
we will focus on types $\ll, \lb$ and $\bl$.

\proofcase{Types $\ll$ and $\lb$.}
Since "dependent" elements must either both belong to $\+R$ or both to $\negrel\+R$,
we have $\type{\varepsilon}{\ll} \not\congr{\+R} \type{\pair{a}{\pad}}{\lb}$
and $\type{\varepsilon}{\ll} \not\congr{\+R} \type{\pair{b}{\pad}}{\lb}$.
Because $\proj{\+R}$ is a "closed subset",
then $\type{\varepsilon}{\ll} \congr{\+R} \type{\varepsilon}{\lb}$.
Moreover, $\type{\varepsilon}{\ll} \not\congr{\+R} \type{\pair{bb}{\pad}}{\lb}$ since
$\type{\varepsilon}{\ll} \type{\pair{b}{\pad}}{\lb} \in \+R$
but $\type{\pair{bb}{\pad}}{\lb} \type{\pair{b}{\pad}}{\lb} \not\in \+R$.

\proofcase{Types $\lb$ and $\bl$.} Again, using the fact that "dependent" elements must either both belong to $\+R$ or both to $\negrel\+R$, we have
\begin{align*}
	\type{\pair{a}{\pad}}{\lb} & \not\congr{\+R} \type{\varepsilon}{\bl},
	& \type{\pair{a}{\pad}}{\lb} & \not\congr{\+R} \type{\pair{\pad\pad}{bb}}{\bl},
	& \type{\pair{b}{\pad}}{\lb} & \not\congr{\+R} \type{\varepsilon}{\bl},
	& \type{\pair{b}{\pad}}{\lb} & \not\congr{\+R} \type{\pair{\pad\pad}{bb}}{\bl},\\
	\type{\varepsilon}{\lb} & \not\congr{\+R} \type{\pair{\pad}{a}}{\bl},
	& \type{\varepsilon}{\lb} & \not\congr{\+R} \type{\pair{\pad}{b}}{\bl},
	& \type{\pair{bb}{\pad\pad}}{\lb} & \not\congr{\+R} \type{\pair{\pad}{a}}{\bl},
	& \type{\pair{bb}{\pad\pad}}{\lb} & \not\congr{\+R} \type{\pair{\pad}{b}}{\bl}.
\end{align*}
We claim that $\type{\pair{a}{\pad}}{\lb} \congr{\+R} \type{\pair{\pad}{a}}{\bl}$. Note that there is no,
$\type{y}{\tau}$ be such that
$\type{\pair{a}{\pad}}{\lb}\type{y}{\tau}$
and $\type{\pair{\pad}{a}}{\bl}\type{y}{\tau}$ are "well-formed".
So, let $\type{x}{\sigma}$ be such that $\type{x}{\sigma}\type{\pair{a}{\pad}}{\lb}$
and $\type{x}{\sigma}\type{\pair{\pad}{a}}{\bl}$ are "well-formed":
$\sigma$ must be of type $\ll$, but then $\+R$ is invariant under removing prefixes of type $\ll$, and so $\type{x}{\sigma}\type{\pair{a}{\pad}}{\lb} \in \+R$ and 
$\type{x}{\sigma}\type{\pair{\pad}{a}}{\bl} \in \+R$. And hence,  $\type{\pair{a}{\pad}}{\lb} \congr{\+R} \type{\pair{\pad}{a}}{\bl}$. Similarly,
\[
	\type{\pair{a}{\pad}}{\lb} \congr{\+R} \type{\pair{\pad}{b}}{\bl},\quad 
	\type{\pair{b}{\pad}}{\lb} \congr{\+R} \type{\pair{\pad}{a}}{\bl}\quad\text{and}\quad
	\type{\pair{b}{\pad}}{\lb} \congr{\+R} \type{\pair{\pad}{b}}{\bl}.\quad 
\]
The dual argument (using now the fact that both sides do \emph{not} belong to $\+R$) shows that
\[
	\type{\varepsilon}{\lb} \congr{\+R} \type{\varepsilon}{\bl},\quad
	\type{\varepsilon}{\lb} \congr{\+R} \type{\pair{\pad\pad}{bb}}{\bl},\quad
	\type{\pair{bb}{\pad\pad}}{\lb} \congr{\+R} \type{\varepsilon}{\bl}\quad\text{and}\quad
	\type{\pair{bb}{\pad\pad}}{\lb} \congr{\+R} \type{\pair{\pad\pad}{bb}}{\bl}.
\]

The case of types $\ll$ and $\lb$ follows by symmetry.
\end{example}
\section{Details on Pseudovarieties}

\subsection{Proof of the Lifting Theorem}
\label{apdx-proof:thm:lifting-theorem-monoids}

\liftingtheoremmonoids*

\begin{proof}
	\proofcase{(1) $\Rightarrow$ (2).} Since $\+R$ is a "$\+V$-relation", there exists
	$\+L \in \+V_{\SigmaPair}$ such that $\proj{\+R} = \+L \cap \WellFormed$.
	Hence, there exists a morphism of monoids $\phi\colon (\SigmaPair)^* \to M$ such 
	that $M \in \B{V}$ and $\+L = \phi^{-1}[\Acc]$ for some $\Acc \subseteq M$.
	It follows that $\proj{\+R} = \+L \cap \WellFormed[\Sigma]$ rewrites as
	``for all $u \in \WellFormed$, $\phi(u) \in \Acc$ "iff" $u \in \proj{\+R}$''.
	Letting $\inducedalg{M}$ be the "synchronous algebra induced by the monoid" $M$, define
	$\psi\colon \Sync\Sigma \to \inducedalg{M}$ by $\psi(\type{u}{\sigma}) \defeq
	\type{(\phi(u))}{\sigma}$ for $\type{u}{\sigma} \in \Sync\Sigma$.
	Let $\Acc' \defeq \{\type{x}{\sigma} \mid x \in \Acc \land \sigma \in \types\}$.
	We claim that $\psi^{-1}[\Acc'] = \proj{\+R}$. Indeed, for $\type{u}{\sigma} \in \Sync\Sigma$,
	$\type{u}{\sigma} \in \proj{\+R}$ "iff" $u \in \+L$,
	"ie" $\phi(u) \in \Acc$,
	that is $\psi(\type{u}{\sigma}) = \type{(\phi(u))}{\sigma} \in \Acc'$.
	Notice then that all "underlying monoids" of $\inducedalg{M}$ are $M$,
	and hence they belong to $\B{V}$.

	\proofcase{(2) $\Rightarrow$ (3).} By \Cref{lem:syntactic-morphism-theorem}, 
	the "syntactic synchronous algebra" of $\+R$ "divides@@sync"
	any "algebra@@sync" $\?B$ "recognizing@@sync" $\+R$.
	In particular, its "underlying monoids" "divide@@monoid" the "underlying monoids" of $\?B$. The conclusion follows since $\B{V}$ is closed under "division@@monoid". 
	
	\proofcase{(3) $\Rightarrow$ (1).} Denote by $M_{\ll}, M_{\lb}$ and $M_{\bl}$ the 
	"underlying monoids" of $\SyntSA{\+R}$. Let $\Acc \subseteq \SyntSA{\+R}$ be the accepting set 
	such that $\proj{\+R} = \SyntSAM[-1]{\+R}[\Acc]$.
	Define $M \defeq M_{\ll} \times M_{\lb} \times M_{\bl}$, and
	\begin{center}
		\begin{tabular}{rccc}
			$\phi\colon$
			& $(\SigmaPair)^*$
			& $\to$
			& $M$\\
			& $\pair{a}{b}$
			& $\mapsto$
			& $\langle \SyntSAM{\+R}\type{\pair{a}{b}}{\ll},\, \type{1}{\lb},\, \type{1}{\bl} \rangle$ \\
			& $\pair{a}{\pad}$
			& $\mapsto$
			& $\langle \type{1}{\ll},\, \SyntSAM{\+R}\type{\pair{a}{\pad}}{\lb},\, \type{1}{\bl} \rangle$ \\
			& $\pair{\pad}{a}$
			& $\mapsto$
			& $\langle \type{1}{\ll},\, \type{1}{\lb},\, \SyntSAM{\+R}\type{\pair{\pad}{a}}{\bl} \rangle$
		\end{tabular}		
	\end{center}
	and finally, let
	\begin{align*}
		\Acc'
		& \cup \bigl\{
			\langle\type{1}{\ll},\, \type{1}{\lb},\, \type{z}{\bl}\rangle
			\mid \type{z}{\bl} \in \Acc
		\bigr\} \\
		& \cup \bigl\{
			\langle\type{x}{\ll},\, \type{y}{\lb},\, \type{1}{\bl}\rangle
			\mid \type{x}{\ll}\cdot \type{y}{\lb} \in \Acc
		\bigr\} \\
		& \cup \bigl\{
			\langle\type{x}{\ll},\, \type{1}{\lb},\, \type{z}{\bl}\rangle
			\mid \type{x}{\ll}\cdot\type{z}{\bl} \in \Acc
		\bigr\}.
	\end{align*}
	We first claim that
	\begin{align}
		\text{
			For every $\type{u}{\ll\to\lb} \in \Sync\Sigma$,
		} & \notag \\
		& 
		\hspace{-2cm}\text{
			$\phi(u)$ is of the form $\langle a,\, b,\, 1 \rangle$
			and moreover,
		}
		\label{eq:relationship-monoid-with-synchronous-algebra}
		\tag{\adforn{75}}\\
		& \hspace{-2cm} \text{
			$\SyntSAM{\+R}(\type{u}{\ll\to\lb}) = a \cdot b$,
		}\notag
	\end{align}
	which can trivially be proven by induction on $u$. Analogous results
	hold for words of different "type".
	We then prove that for each $\type{u}{\sigma} \in \Sync\Sigma$,
	\begin{equation}
		\SyntSAM{\+R}(\type{u}{\sigma}) \in \Acc
		\quad\text{"iff"}\quad
		\phi(u) \in \Acc'.
		\label{eq:accepting-set-is-preserved}
		\tag{\adforn{28}}
	\end{equation}
	The direct implication is straightforward, using
	\Cref{eq:relationship-monoid-with-synchronous-algebra}.
	The converse implication is more tricky: assume "eg" that $\sigma = \ll\to\lb$,
	say $\type{t}{\sigma} = \type{u}{\ll}\type{v}{\lb}$.
	If $\phi(t) \in \Acc'$, using \Cref{eq:relationship-monoid-with-synchronous-algebra} then it implies either that
	\begin{enumerate}
		\item $\SyntSAM{\+R}(\type{u}{\ll}) = \type{1}{\ll}$ and $\SyntSAM{\+R}(\type{v}{\lb}) = \type{1}{\ll}$, and $\type{1}{\bl} \in \Acc$, or
		\item $\SyntSAM{\+R}(\type{u}{\ll})\cdot \SyntSAM{\+R}(\type{v}{\lb}) \in \Acc$, or even
		\item $\SyntSAM{\+R}(\type{v}{\lb}) = \type{1}{\lb}$ and
			$\SyntSAM{\+R}(\type{u}{\ll})\cdot \type{1}{\bl} \in \Acc$.
	\end{enumerate}
	Clearly, $(2)$ implies the desired conclusion, namely that $\SyntSAM{\+R}(\type{t}{\sigma}) = \SyntSAM{\+R}(\type{u}{\ll})\SyntSAM{\+R}(\type{v}{\lb}) \in \Acc$.
	In all other cases, we will make heavy use of the "dependency relation".
	For case $(1)$, we have that $\SyntSAM{\+R}(\type{t}{\sigma}) = \type{1}{\ll\to\lb}$.
	From $\type{1}{\bl} \in \Acc$, \Cref{fact:closed-subset-units} yields $\type{1}{\ll} \cdot \type{1}{\lb} = \type{1}{\ll\to\lb} \in \Acc$, since $\Acc$ is "closed".
	Lastly, in case $(3)$, $\SyntSAM{\+R}(\type{u}{\ll}) \dep
	\SyntSAM{\+R}(\type{u}{\ll})\cdot \type{1}{\bl} \in \Acc$ so
	$\SyntSAM{\+R}(\type{u}{\ll}) \in \Acc$ and hence
	$\SyntSAM{\+R}(\type{t}{\ll\to\lb}) = \SyntSAM{\+R}(\type{u}{\ll}) \cdot \type{1}{\lb}
	\dep \SyntSAM{\+R}(\type{u}{\ll}) \in \Acc$ which yields
	$\SyntSAM{\+R}(\type{t}{\ll\to\lb}) \in \Acc$. This concludes the proof
	of \eqref{eq:accepting-set-is-preserved} for "type" $\sigma = \ll\to\lb$.
	Other "types" are handled similarly, and hence
	$\proj{\+R} = \phi^{-1}[\Acc']\cap\WellFormed$.
\end{proof}

\subsection{Proof of Lemma~\ref{lemma:characterization-pseudovarieties-syncrel}}
\label{apdx-proof:lemma:characterization-pseudovarieties-syncrel}

\begin{proposition}
	\label{prop:inverse-morphism-preserve-congruence}
	Let $\phi\colon \?A \surj \?B$ be a surjective "morphism@@sync",
	and $\Acc$ be a "closed subset" of $\?B$. Let $a, a' \in \?A$.
	Then
	\[
		a \congr{\phi^{-1}[\Acc]} a'
		\quad\text{"iff"}\quad
		\phi(a) \congr{\Acc} \phi(a').
	\]
\end{proposition}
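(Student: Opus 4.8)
The plan is to unwind the definition of the syntactic congruence on both sides and verify its three defining clauses directly, using only two elementary facts about a morphism $\phi$ of synchronous algebras. First, $\phi$ preserves types, and since whether a binary or ternary product is defined depends \emph{solely} on the types of its arguments (this is the first axiom of \Cref{def:synchronous-algebra}, together with associativity of $\concattype$), definedness of products transports freely between $\?A$ and $\?B$ along $\phi$ and along any choice of $\phi$-preimages. Second, $\phi$ commutes with products and units, so for any ``one-hole context'' $c$ — that is, an operation of the form $z \mapsto xzy$, $z \mapsto xz$, or $z \mapsto zy$ built from fixed parameters — we have $\phi(c(z)) = c'(\phi(z))$, where $c'$ is the corresponding context of $\?B$ built from the $\phi$-images of those parameters. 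I would begin by recording that $\phi^{-1}[\Acc]$ is a "closed subset" of $\?A$, because $\phi$ preserves the "dependency relation", so that $\congr{\phi^{-1}[\Acc]}$ is well-defined.

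For the right-to-left implication, I would assume $\phi(a) \congr{\Acc} \phi(a')$ and pick $x, y \in \?A$ such that, say, $xay$ and $xa'y$ are both defined (no relation between the types of $a$ and $a'$ is assumed; the ``if both defined'' guards handle the general case). By type preservation, $\phi(x)\phi(a)\phi(y)$ and $\phi(x)\phi(a')\phi(y)$ are both defined in $\?B$, so the hypothesis gives $\phi(x)\phi(a)\phi(y) \in \Acc \iff \phi(x)\phi(a')\phi(y) \in \Acc$; rewriting both sides via $\phi(xay) = \phi(x)\phi(a)\phi(y)$ and $\phi(xa'y) = \phi(x)\phi(a')\phi(y)$, this reads $xay \in \phi^{-1}[\Acc] \iff xa'y \in \phi^{-1}[\Acc]$. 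The two one-sided clauses (contexts $x\cdot\_$ and $\_\cdot y$) are checked identically. Hence $a \congr{\phi^{-1}[\Acc]} a'$; note that this direction uses neither surjectivity of $\phi$ nor closedness of $\Acc$.

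For the left-to-right implication, I would assume $a \congr{\phi^{-1}[\Acc]} a'$ and take $x, y \in \?B$ with $x\phi(a)y$ and $x\phi(a')y$ both defined. This is the only place surjectivity enters: choose preimages $\tilde x, \tilde y \in \?A$ with $\phi(\tilde x) = x$ and $\phi(\tilde y) = y$. Since $\phi$ preserves types, $\tilde x$ has the same type as $x$ and $\tilde y$ the same as $y$, so from the definedness of $x\phi(a)y$ and $x\phi(a')y$ in $\?B$ and the fact that definedness is determined by the types alone, the products $\tilde x a \tilde y$ and $\tilde x a' \tilde y$ are defined in $\?A$. Applying the hypothesis to these $\tilde x, \tilde y$ gives $\tilde x a \tilde y \in \phi^{-1}[\Acc] \iff \tilde x a' \tilde y \in \phi^{-1}[\Acc]$, which, pushing $\phi$ through the products, is precisely $x\phi(a)y \in \Acc \iff x\phi(a')y \in \Acc$. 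The one-sided clauses again go through verbatim, yielding $\phi(a) \congr{\Acc} \phi(a')$.

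The argument is essentially mechanical; the only point that needs care — and which I would flag as the ``main obstacle'', mild as it is — is the bookkeeping of partiality: one must repeatedly invoke that definedness of a typed product is governed entirely by the types, that $\phi$ respects types, and that surjectivity lets one lift arbitrary contexts of $\?B$ back to contexts of $\?A$, so that definedness and membership can be moved back and forth between the two algebras.
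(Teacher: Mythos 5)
Your proof is correct and follows essentially the same route as the paper's: the converse direction pushes a context of $\?A$ forward through $\phi$ using type preservation, and the direct direction uses surjectivity to lift a context of $\?B$ back to $\?A$, with definedness of products transported via the fact that it depends only on types. The extra bookkeeping you flag (the one-sided clauses, and the observation that $\phi^{-1}[\Acc]$ is "closed" so that $\congr{\phi^{-1}[\Acc]}$ is well-defined) is sound and, if anything, slightly more explicit than the paper's write-up.
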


\begin{proof}
	\proofcase{Direct implication.}
	Pick any $b_\ell, b_r \in \?B$ such that both
	$b_\ell \phi(a) b_r$ and
	$b_\ell \phi(a') b_r$
	are well-defined. By surjectivity of $\phi$, there exists
	$a_\ell, a_r \in \?A$
	such that $\phi(a_\ell) = b_\ell$
	and $\phi(a_r) = b_r$.
	Then both $a_\ell a a_r$
	and $a_\ell a' a_r$ are well-defined since they have the same "type"
	as $b_\ell \phi(a) b_r$ and $b_\ell \phi(a') b_r$, respectively.
	From $a \congr{\phi^{-1}[\Acc]} a'$,
	it follows that $a_\ell a a_r$ belongs to $\phi^{-1}[\Acc]$ "iff" $a_\ell a' a_r$ does.
	And hence
	\[
		b_\ell \phi(a) b_r \in \Acc 
	 	\quad\text{"iff"}\quad
		b_\ell \phi(a') b_r \in \Acc.
	\]

	\proofcase{Converse implication.}
	Dually, pick any $a_\ell, a_r \in \?A$ such that both
	$a_\ell a a_r$ and $a_\ell a' a_r$ are well-defined.
	Then $\phi(a_\ell) \phi(a) \phi(a_r)$ and $\phi(a_\ell) \phi(a') \phi(a_r)$
	are also well-defined since they have the same "type" as their preimage,
	and $\phi(a) \congr{\Acc} \phi(a')$ implies that the element $\phi(a_\ell) \phi(a) \phi(a_r)$ belongs
	to $\Acc$ "iff" $\phi(a_\ell) \phi(a') \phi(a_r)$ does. It follows
	that $a_\ell a a_r \in \phi^{-1}[\Acc]$ "iff" $a_\ell a' a_r \in \phi^{-1}[\Acc]$.
\end{proof}

\begin{proposition}[Inverse images of surjective "morphisms@@sync" preserve "residuals"]
	\label{prop:inverse-morphism-preserve-residuals}
	Let $\phi\colon \?A \surj \?B$ be a surjective "morphism@@sync", and $\Acc \subseteq \?B$
	be a "closed subset". Let $u \in \?A$. Then
	\[\residual{u}\phi^{-1}[\Acc] = \phi^{-1}[\residual{\phi(u)}\Acc].\]
\end{proposition}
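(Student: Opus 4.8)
The plan is to unwind \Cref{def:residuals} on both sides of the claimed equality and reduce everything to \Cref{prop:inverse-morphism-preserve-congruence} together with surjectivity of $\phi$. Write $\sigma$ for the "type" of $u$. Since "synchronous algebra morphisms" preserve "types", $\phi(x)$ has the same "type" as $x$ for every $x \in \?A$, and "compatibility" of "types" depends only on the "types"; hence $ux$ is defined in $\?A$ "iff" $\phi(u)\phi(x)$ is defined in $\?B$, and in that case $\phi(ux)=\phi(u)\phi(x)$. This will let us move products back and forth between $\?A$ and $\?B$ freely. The single substantive ingredient is that, by \Cref{prop:inverse-morphism-preserve-congruence}, $a \congr{\phi^{-1}[\Acc]} a'$ "iff" $\phi(a) \congr{\Acc} \phi(a')$ for all $a,a' \in \?A$.

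For the inclusion $\residual{u}\phi^{-1}[\Acc] \subseteq \phi^{-1}[\residual{\phi(u)}\Acc]$, I would take $y \in \residual{u}\phi^{-1}[\Acc]$, so there is a witness $y' \congr{\phi^{-1}[\Acc]} y$ with $uy' \in \phi^{-1}[\Acc]$, that is, $\phi(u)\phi(y')=\phi(uy') \in \Acc$. Applying \Cref{prop:inverse-morphism-preserve-congruence} gives $\phi(y') \congr{\Acc} \phi(y)$, so $\phi(y')$ is a valid witness for $\phi(y) \in \residual{\phi(u)}\Acc$, "ie" $y \in \phi^{-1}[\residual{\phi(u)}\Acc]$.

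For the reverse inclusion I would take $y$ with $\phi(y) \in \residual{\phi(u)}\Acc$, so there is $z' \congr{\Acc} \phi(y)$ with $\phi(u)z' \in \Acc$. By surjectivity of $\phi$, choose $y' \in \?A$ with $\phi(y')=z'$; its "type" is forced to equal that of $z'$, which is "compatible" with $\sigma$ because $\phi(u)z'$ is defined, so $uy'$ is defined and $\phi(uy')=\phi(u)z' \in \Acc$, "ie" $uy' \in \phi^{-1}[\Acc]$. From $\phi(y')=z' \congr{\Acc} \phi(y)$ and \Cref{prop:inverse-morphism-preserve-congruence} we get $y' \congr{\phi^{-1}[\Acc]} y$, so $y'$ witnesses $y \in \residual{u}\phi^{-1}[\Acc]$. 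The mirror-image argument handles the analogous identity for "residuals" on the right, $\phi^{-1}[\Acc]\residual{u} = \phi^{-1}[\Acc\residual{\phi(u)}]$. The only point that requires care — and which I expect to be the main (minor) obstacle — is tracking the "types" $\tau,\tau'$ hidden in \Cref{def:residuals}: the witness $y'$ may carry a "type" different from $y$, and one must observe that any $\phi$-preimage of $z'$ automatically has the "type" needed for the relevant product to be defined; everything else is a routine diagram chase.
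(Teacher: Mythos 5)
Your proof is correct and follows essentially the same route as the paper's: both directions unwind \Cref{def:residuals}, push the witness through $\phi$ (respectively pull it back by surjectivity), and invoke \Cref{prop:inverse-morphism-preserve-congruence} to transfer the "syntactic congruence" between $\?A$ and $\?B$. Your extra care about the "types" of the witnesses is a welcome clarification of a point the paper leaves implicit, but it does not change the argument.
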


\begin{proof}
	\proofcase{Left-to-right inclusion.}
	Let $a\in \residual{u}\phi^{-1}[\Acc]$.
	Then there exists $a' \in \?A$ such that $a \congr{\phi^{-1}[\Acc]} a'$
	and $ua' \in \phi^{-1}[\Acc]$.
	By \Cref{prop:inverse-morphism-preserve-congruence}
	$a \congr{\phi^{-1}[\Acc]} a'$ implies $\phi(a) \congr{\Acc} \phi(a')$,
	and $ua' \in \phi^{-1}[\Acc]$ yields $\phi(u)\phi(a') \in \Acc$.
	Overall, this shows that $a \in \phi^{-1}[\residual{\phi(u)}\Acc]$.

	\proofcase{Right-to-left inclusion.}
	Let $a \in \phi^{-1}[\residual{\phi(u)}\Acc]$. Then
	$\phi(a) \in \residual{\phi(u)}\Acc$, so there
	exists $b' \in \?B$ such that $\phi(a) \congr{\Acc} b'$
	and $\phi(u)b' \in \Acc$. By surjectivity of $\phi$ and
	\Cref{prop:inverse-morphism-preserve-congruence}, there exists $a' \in \?A$ 
	such that $\phi(a') = b'$ and $a \congr{\phi^{-1}[\Acc]} a'$.
\end{proof}

\lemmaCharacterizationPseudovar*

\begin{proof}[Proof of \Cref{lemma:characterization-pseudovarieties-syncrel}]
	\proofcase{Direct implication.} By \Cref{prop:inverse-morphism-preserve-residuals},
	the "residual" of any "relation" "recognized@@sync" by some "morphism@@sync"
	$\phi$ is also "recognized@@sync" by $\phi$. Hence, being closed under
	"syntactic derivatives" implies being closed under "residuals".

	\proofcase{Converse implication.} Consider some relation $\+R$.
	We will show that any relation "recognized@@sync" by $\SyntSAM{\+R}$
	can be expressed as a "Boolean combination" of "residuals" of $\+R$.\footnote{This result can be 
	put in perspective with \cite[Lemma XIII.4.11, p.~229]{pin_mathematical_2022} which proves a similar result in the context of monoids.}
	Let $\Acc$ be the "closed subset" of $\SyntSA{\+R}$ such that
	$\proj{\+R} = \SyntSAM{\+R}^{-1}[\Acc]$. Pick $x \in \SyntSA{\+R}$.
	Let $\Lambda \defeq \{s, t \in \SyntSA{\+R} \mid \exists x'\in\SyntSA{\+R},\; x' \dep x \text{ and } sx't \in \Acc\}$.
	We claim that
	\begin{equation}
		\equivclass{x}{\dep_{\SyntSA{\+R}}} =
		\left(
			\bigcap_{(s,t) \in \Lambda} \residual{s} \Acc\,\residual{t}
		\right)
		\smallsetminus
		\left(
			\bigcup_{(s, t) \not\in \Lambda} \residual{s} \Acc\,\residual{t}
		\right).
		\label{eq:boolean-combination-of-residuals}
		\tag{\adforn{41}}
	\end{equation}
	To prove the inclusion from left-to-right, first
	notice that $x \in \residual{s} \Acc\,\residual{t}$ for all $(s, t) \in \Lambda$.
	Then, assume by contradiction that there exists $(s, t) \not\in \Lambda$ "st"
	$x \in \residual{s} \Acc\,\residual{t}$. Then there would exist $x' \congr{\Acc} x$ such that
	such that $sx't \in \Acc$. But since $\SyntSAM{\+R}$ is the "syntactic synchronous algebra"
	of $\+R$, $\congr{\Acc}$ is precisely the relation $\dep$ by \Cref{coro:syntactic-congruence-is-syntactic-dependency}. Contradiction.
	Hence, $x$ belongs to the right-hand side (RHS). But then, this latter set is
	a Boolean combination of "residuals" of a "closed subset", so it is also
	"closed", and hence $\equivclass{x}{\dep_{\SyntSA{\+R}}}$ is included in the RHS.

	Dually, any element $y$ of the RHS satisfies that for all $s, t \in \SyntSA{\+R}$,
	$x \in \residual{s} \Acc\,\residual{t}$ "iff" $y \in \residual{s} \Acc\,\residual{t}$.
	We claim that $x \congr{\Acc} y$. Pick $s, t \in \?B$ and assume that
	both $sxt$ and $syt$ are well-defined. If $sxt \in \Acc$ then $x \in \residual{s} \Acc\,\residual{t}$ so $y \in \residual{s} \Acc\,\residual{t}$ and hence, there
	exists $y' \dep_{\SyntSA{\+R}} y$ "st" $sy't \in \Acc$. But $syt$ is also well-defined
	and $y \dep_{\SyntSA{\+R}} y'$ so $syt \in \Acc$. By symmetry, we have shown that
	$sxt \in \Acc$ "iff" $syt \in \Acc$, and hence $x \congr{\Acc} y$.
	Using again the fact that $\SyntSA{\+R}$ is the "syntactic algebra@@sync" of $\+R$, it 
	follows that $x \dep_{\SyntSA{\+R}} y$. This concludes the proof of
	\eqref{eq:boolean-combination-of-residuals}. By taking the union, it
	follows that any "closed subset" of $\SyntSA{\+R}$ is a Boolean combination
	of "residuals" of $\Acc$. Applying \Cref{prop:inverse-morphism-preserve-residuals}
	then yields that any "relation" "recognized@@sync" by $\phi$ is a Boolean combination of 
	"residuals" of $\+R$. Hence, any class closed under "Boolean combinations" and
	"residuals" is also closed under "syntactic derivatives". 
\end{proof}

\subsection{Proof of Theorem~\ref{lem:eilenberg-sy}}
\label{apdx-proof:lem:eilenberg-sy}

\thmeilenberg*

\begin{proof}
	We very roughly follow the proof schema of \cite[\S XIII.4, pp.~226--229]{pin_mathematical_2022}---which is
	a proof of Eilenberg's theorem in the context of monoids.

	\proofcase{The correspondence $\B{V} \corrAR \+V$ produces varieties.}
	First we have to show that if $\B{V}$ is a "pseudovariety
	of synchronous algebras" and $\B{V} \corrAR \+V$, then $\+V$ is a
	"$\ast$-pseudovarieties of synchronous relations".
	Since $\B{V}$ is closed under finite products, $\+V$ is closed under Boolean operations.

	\emph{"Syntactic derivatives":} Then let $\+R \in \+V_\Sigma$, and let $\+S$ be any other "relation"
	"recognized@@sync" by $\SyntSA{\+R}$. This implies that $\SyntSA{\+S}$
	"divides@@sync" $\SyntSA{\+R}$, and so $\SyntSA{\+S} \in \B{V}$, from which
	we have $\SyntSA{\+S} \in \+V_\Sigma$.

	\emph{"Inverse morphisms":} Lastly, if $\+R \in \+V_\Sigma$, say $\proj{\+R} = \SyntSAM[-1]{\+R}[\Acc]$,
	if $\psi\colon \Sync\Gamma \to \Sync\Sigma$ is a "synchronous algebra morphism",
	then $\psi^{-1}[\+R] = (\SyntSAM{\+R} \circ \psi)^{-1}[\Acc]$, so
	$\psi^{-1}[\+R]$ is "recognized@@sync" by $\SyntSA{\+R}$, that is
	$\SyntSA{\psi^{-1}[\+R]}$ "divides@@sync" $\SyntSA{\+R}$. Since $\SyntSA{\+R} \in \B{V}$
	and $\B{V}$ is closed by "division@@sync", it follows that $\SyntSA{\psi^{-1}[\+R]} \in \B{V}$
	and hence $\psi^{-1}[\+R] \in \+V_{\Gamma}$. 
	This concludes the proof that $\+V$ is a "$\ast$-pseudovariety of synchronous relations".

	\proofcase{Inverse bijections: part 1.} Assume that $\B{V} \corrAR \+V$ and $\+V \corrRA \B{W}$.
	Then
	\[\+V\colon \Sigma \mapsto \{\+R \subseteq \Sigma^*\times \Sigma^* \mid \SyntSA{\+R} \in \B{V}\},\]
	and so $\B{W}$ is the "pseudovariety@@syncalg" generated by all "syntactic synchronous algebras" that 
	belong to $\B{V}$.
	It follows that $\B{W} \subseteq \B{V}$.
	To prove that $\B{V} \subseteq \B{W}$, let $\?A \in \B{V}$.
	Let $\Sigma_{\?A}$ be an alphabet big enough so that
	there are injections from $\?A_{\ll}$ to $\Sigma_{\?A}\times \Sigma_{\?A}$,
	and from $\?A_{\lb}$ and $\?A_{\bl}$ to $\Sigma_{\?A} \times \pad$ and $\pad \times \Sigma_{\?A}$,
	respectively. Since $\?A$ is "locally generated@@sync",
	this allows us to define a surjective "synchronous algebra morphism"
	$\phi\colon \Sync{\Sigma_{\?A}} \surj \?A$.
	We then claim that $\?A$ "divides@@sync"
	$\?B \defeq \prod_{\type{x}{\tau} \in \?A} \?B_{\type{x}{\tau}}$
	where $\?B_{\type{x}{\tau}}$ is the "syntactic synchronous algebra"
	of $\phi^{-1}[\type{x}{\tau}]$.
	Indeed, let $\psi_{\type{x}{\tau}} \colon \Sync{\Sigma_{\?A}} \surj \?B_{\type{x}{\tau}}$
	be the "syntactic synchronous algebra morphism" of $\phi^{-1}[\type{x}{\tau}]$, say
	$\phi^{-1}[\type{x}{\tau}] = \psi_{\type{x}{\tau}}^{-1}[\Acc_{\type{x}{\tau}}]$.
	Then consider
	\[\begin{tabular}{rccc}
		$\Psi\colon$ & $\Sync{\Sigma_{\?A}}$ & $\to$ & $\?B$ \\
		& $\type{u}{\sigma}$ & $\mapsto$ &
			$\langle \psi_{\type{x}{\tau}}(\type{u}{\sigma}) \rangle_{\type{x}{\tau} \in \?A}$,
	\end{tabular}\]
	and let $\?B_0$ be its image. Observe that for each $\type{u}{\sigma} \in \Sync{\Sigma_{\?A}}$,
	$\psi_{\type{x}{\tau}}(\type{u}{\sigma}) \in \Acc_{\type{x}{\tau}}$ "iff"
	$\type{u}{\sigma} \in \phi^{-1}[\type{x}{\tau}]$ "ie" $\phi(\type{u}{\sigma}) = \type{x}{\tau}$---note by the way that it implies $\sigma = \tau$.
	This implies that for any $\type{(\langle y_{\type{x}{\tau}} \rangle_{\type{x}{\tau} \in \?A})}{\sigma} \in \?B_0$, there exists a unique $\type{x}{\tau}$ "st"
	$y_{\type{x}{\tau}} \in \Acc_{\type{x}{\tau}}$. This defines a map
	$\chi\colon \?B_0 \to \?A$. Since moreover it makes the following diagram commute
	
	\begin{center}\begin{tikzcd}
		\Sync{\Sigma_{\?A}} \ar[r, twoheadrightarrow, "\Psi"]
			\ar[dr, twoheadrightarrow, "\phi" swap]
		& \?B_0 \dar["\chi"] \\
		& \?A
	\end{tikzcd}\end{center}
	it follows that $\chi$ is in fact a surjective "synchronous algebra morphism".\footnote{See "eg" \cite[Lemma 3.2, p.~10]{bojanczyk_recognisable_2015} for a proof in a similar (but different) context.} Hence, $\?A$ is a "quotient@@sync" of $\?B_0$, which is a "subalgebra@@sync" of
	$\?B$, which in turns in a product of "algebras@@sync" from $\B{W}$, and so $\?A \in \B{W}$.
	It concludes the proof that $\B{V} = \B{W}$.

	\proofcase{Inverse bijections: part 2.} Assume now that $\+V \corrRA \B{V}$
	and $\B{V} \corrAR \+W$. Then for each $\Sigma$, for each $\+R \in \+V_{\Sigma}$,
	$\SyntSA{\+R} \in \B{V}$ so $\+R \in \+W_{\Sigma}$, and hence $\+V \subseteq \+W$.

	We then want to show the converse inclusion, namely $\+W \subseteq \+V$. Let $\+R \in \+W_{\Sigma}$ for some $\Sigma$, "ie" $\SyntSA{\+R} \in \B{V}$.
	Hence there exists $\Gamma$ and "relations" $\+S_1 \in \+V_{\Gamma_1}, \hdots \+S_k \in \+V_{\Gamma_k}$
	such that $\SyntSA{\+R}$ "divides@@sync"
	$\?B \defeq \SyntSA{\+S_1} \times \cdots \times \SyntSA{\+S_k}$,
	"ie" there is a "subalgebra@@sync" $\?C \subseteq \?B$ which is a "quotient@@sync" of $\?B$.
	Then $\?C$ also "recognizes@@sync" $\+R$, say $\+R = \phi^{-1}[\Acc]$ for some
	"morphism@@sync" $\phi\colon \Sync\Sigma \surj \?C$ and $\Acc \subseteq \?C$.
	Let $\iota\colon \?C \to \?B$ be the canonical embedding,
	$\pi_i\colon \?B \surj \SyntSA{\+S_i}$ be the canonical projection,
	and $\phi_i \defeq \pi_i \circ \iota \circ \phi\colon \Sync\Sigma \to \SyntSA{\+S_i}$
	for $i \in \lBrack 1, k\rBrack$. Then notice that since $\SyntSAM{\+S_i}\colon \Sync{\Gamma_i} \surj \SyntSA{\+S_i}$ is surjective, then there exists $\psi_i \colon \Sync\Sigma \to \Sync{\Gamma_i}$ such that $\SyntSAM{\+S_i} \circ \psi_i = \phi_i$. Indeed, it suffices
	to send $\pair{a}{b}$ (resp.~$\pair{a}{\pad}$, resp.~$\pair{\pad}{a}$)
	on any element $\type{u}{\ll} \in \Sync{\Gamma_i}$ (resp.~$\type{u}{\lb}$,
	resp.~$\type{u}{\bl}$) such that $\SyntSAM{\+S_i}(\type{u}{\ll}) = \phi\pair{a}{b}$
	(resp.~$\SyntSAM{\+S_i}(\type{u}{\lb}) = \phi\pair{a}{\pad}$,
	resp.~$\SyntSAM{\+S_i}(\type{u}{\bl}) = \phi\pair{\pad}{a}$). Overall, the following diagram
	commutes
	\begin{center}
	\begin{tikzcd}
		\Sync\Sigma
			\ar[rr, "\psi_i"]
			\dar["\phi" swap]
			\ar[ddrr, "\phi_i"]
		&[-2em]
		& \Sync{\Gamma_i}
			\ar[dd, "\SyntSAM{\+S_i}"] \\
		\?C
			\ar[dr, "\iota" swap]
		&
		& \\[-1.5em]
		& \?B
			\ar[r, "\pi_i" swap]
		& \SyntSA{\+S_i}.
	\end{tikzcd}
	\end{center}
	Our goal is to show that $\+R \in \+V_{\Sigma}$. Observe that:
	\[
		\proj{\+R} =
		\phi^{-1}[\Acc] = \cup_{x \in \Acc} \phi^{-1}[x]
	\]
	but then $\Acc \subseteq \?B$, so $x$ is a tuple $\langle x_1, \hdots, x_n \rangle$
	(all elements having the same type), and by definition:
	\[
		\phi^{-1}[x] = \bigcap_{i=1}^n \phi^{-1}[\iota^{-1}[\pi_i^{-1}[x_i]]]
		= \bigcap_{i=1}^n \phi_i^{-1}[x_i].
	\]
	But then $\phi_i^{-1}[x_i] = \psi_i^{-1}[\SyntSAM{\+S_i}^{-1}[x_i]]$.
	Since $\+V$ is closed under "syntactic derivatives" and $\+S_i \in \+V_{\Gamma_i}$
	we have $\SyntSAM{\+S_i}^{-1}[x_i] \in \+V_{\Gamma_i}$, and then since $\+V$ is closed under
	"Inverse morphisms" and $\psi_i\colon \Sync\Sigma \to \Sync{\Gamma_i}$ is a "morphism@@sync" between free algebras,
	$\psi_i^{-1}[\SyntSAM{\+S_i}^{-1}[x_i]] \in \+V_{\Sigma}$.
	Thus $\proj{\+R}$ is a "Boolean combination" of elements of $\+V_{\Sigma}$, and hence
	it also belongs to $\+V_{\Sigma}$. This concludes the proof of $\+W \subseteq \+V$.
\end{proof}
\section{Synchronous Algebras Require a Dependency Relation}
\label{apdx:counterexample}

In this section, we introduce the notion of "synchronous algebra" with no dependency relation, 
called "naive synchronous algebra". This notion is more natural---or naive---than
\Cref{def:synchronous-algebra}, and share some of its enjoyable properties, such as the existence of syntactic algebras. Yet, we show that these algebras cannot characterize some
natural classes of synchronous relation. More precisely, we show that there
is a "$\ast$-pseudovariety of regular languages" $\+V$ and two synchronous relations
$\+R_0$ and $\+R_1$, such that:
\begin{itemize}
	\item $\+R_0$ is a $\+V$-relation,
	\item $\+R_1$ is not a $\+V$-relation,
	\item $\+R_0$ and $\+R_1$ have the same syntactic "naive synchronous algebra".
\end{itemize}

\begin{definition}[Naive synchronous algebra]
	Let $\intro*\typesEps \defeq \types \cup \{1\}$. We extend the notion of "compatibility" so 
	that every $\sigma \in \typesEps$ is "compatible" with $1$ and
	$1$ is "compatible" with $\sigma$.
	A \AP""naive synchronous algebra"" $\?A$ consists of a "$\typesEps$-typed sets",
	together with a partial binary operator $\cdot$ such that:
	\begin{itemize}
		\item $\cdot$ is defined exactly on "compatible" elements and is associative, and
		\item there is a unique element of type $1$, denoted by $1$, and it satisfies
			$\type{x}{\sigma} \cdot 1 = \type{x}{\sigma} = 1 \cdot \type{x}{\sigma}$
			for all $\type{x}{\sigma} \in \?A$.
	\end{itemize}
\end{definition}

The set of all "synchronous words" is naturally a "naive synchronous algebra" under concatenation.
Moreover, any "synchronous relation" admits a syntactic "naive synchronous algebra"---this can be shown in the same fashion as \Cref{lem:syntactic-morphism-theorem}.

\begin{example}[Group relations: \Cref{ex:charac-group-relation-monoids} cont'd]
	Consider the relations
	\begin{align*}
		\+R_0 \defeq \big\{
			(u,v) \;\big\vert\; & |u| > |v| \text{ and } (|u| - |v| \bmod{p}) = 0
		\big\} \\
		\+R_1 \defeq \big\{
			(u,v) \;\big\vert\; & |u| > |v| \text{ and } (|u| - |v| \bmod{p}) = 1
		\big\}.
	\end{align*}
	Then by \Cref{ex:charac-group-relation-monoids}, $\+R_0$ is not a "group relation" but
	$\+R_1$ is.
	Yet, we claim that both relations have the same syntactic "naive synchronous algebra" $\?A$, described as follows:
	\begin{itemize}
		\item it has a unit, denoted by $0$, of type $1$,
		\item $\?A_{\ll}$, $\?A_{\bl}$ and $\?A_{\ll\to\bl}$ are all reduced to a single element,
			denoted by $\type{0}{\ll}$, $\type{0}{\bl}$ and $\type{0}{\ll\to\bl}$,
		\item $\?A_{\ll\to\lb}$ and $\?A_{\lb}$ contain the elements $\Z/p\Z$,
		\item $\cdot$ is defined as the addition over $\Z/p\Z$, by identifying
			$\type{0}{1}$, $\type{0}{\ll}$, $\type{0}{\ll\to\bl}$ and $\type{0}{\bl}$
			with the zero of $\Z/p\Z$.
	\end{itemize}
	Then $\+R_0$ and $\+R_1$ are the preimages of $\{\type{0}{\ll\to\lb}, \type{0}{\lb}\}$
	and $\{\type{1}{\ll\to\lb}, \type{1}{\lb}\}$, respectively, by the natural morphism onto $\?A$.
	And hence $\+R_0$ and $\+R_1$ are recognized by $\?A$. It is easy to show that it is in fact the syntactic "naive synchronous algebra" of these relations: by surjectivity of the morphism 
	above, it suffices to show that no two elements of $\?A$ can be identified and still recognize 
	the same relation.
\end{example}

And so, from this example is follows that ``being a $\+V$-relations'' cannot be characterized
by the syntactic "naive synchronous algebra" of the relation, which shows how crucial the "dependency relation" of \Cref{def:synchronous-algebra} is in order to get \Cref{thm:lifting-theorem-monoids}.

The same result can be used to prove that ``naive positive synchronous algebras''---defined analogously to "naive synchronous algebra" except that there is no "type" 1 and no unit, and hence no empty word in the free algebra---are also unable to capture the property of ``being a $\+V$-relations''.

\clearpage
\bibliography{refs-algebras-automatic-relations}

\newcommand{\etalchar}[1]{$^{#1}$}
\begin{thebibliography}{UACM17}

\bibitem[Alm99]{almeida_algorithmic_1999}
Jorge Almeida.
\newblock Some algorithmic problems for pseudovarieties.
\newblock {\em Publ. Math. Debrecen}, 52(1):531--552, 1999.
\newblock Consulted version:
  \url{https://www.researchgate.net/profile/Jorge-Almeida-14/publication/2510507_Some_Algorithmic_Problems_for_Pseudovarieties/links/02e7e531d968b4fe8f000000/Some-Algorithmic-Problems-for-Pseudovarieties.pdf}.

\bibitem[Ash91]{ash_inevitable_1991}
C.~J. Ash.
\newblock Inevitable graphs: a proof of the type {II} conjecture and some
  related decision procedures.
\newblock {\em International Journal of Algebra and Computation},
  01(01):127--146, March 1991.
\newblock \href {https://doi.org/10.1142/S0218196791000079}
  {\path{doi:10.1142/S0218196791000079}}.

\bibitem[Ber79]{berstel_transductions_1979}
Jean Berstel.
\newblock {\em Transductions and {Context}-{Free} {Languages}}.
\newblock Vieweg+Teubner Verlag, Wiesbaden, 1979.
\newblock Consulted version:
  \url{http://www-igm.univ-mlv.fr/~berstel/LivreTransductions/LivreTransductions.pdf}
  (saved on \url{http://web.archive.org/}).
\newblock URL: \url{http://link.springer.com/10.1007/978-3-663-09367-1}.

\bibitem[BFM23]{barcelo_separating_2023}
Pablo Barcel\'{o}, Diego Figueira, and R\'{e}mi Morvan.
\newblock {Separating Automatic Relations}.
\newblock In J\'{e}r\^{o}me Leroux, Sylvain Lombardy, and David Peleg, editors,
  {\em 48th International Symposium on Mathematical Foundations of Computer
  Science (MFCS 2023)}, volume 272 of {\em Leibniz International Proceedings in
  Informatics (LIPIcs)}, pages 17:1--17:15, Dagstuhl, Germany, 2023. Schloss
  Dagstuhl -- Leibniz-Zentrum f{\"u}r Informatik.
\newblock Consulted version: \url{https://arxiv.org/abs/2305.08727v2}.
\newblock \href {https://doi.org/10.4230/LIPIcs.MFCS.2023.17}
  {\path{doi:10.4230/LIPIcs.MFCS.2023.17}}.

\bibitem[BHL{\etalchar{+}}19]{barcelo_monadic_2019}
Pablo Barcel\'{o}, Chih-Duo Hong, Xuan-Bach Le, Anthony~W. Lin, and Reino
  Niskanen.
\newblock {Monadic Decomposability of Regular Relations}.
\newblock In Christel Baier, Ioannis Chatzigiannakis, Paola Flocchini, and
  Stefano Leonardi, editors, {\em 46th International Colloquium on Automata,
  Languages, and Programming (ICALP 2019)}, volume 132 of {\em Leibniz
  International Proceedings in Informatics (LIPIcs)}, pages 103:1--103:14,
  Dagstuhl, Germany, 2019. Schloss Dagstuhl -- Leibniz-Zentrum f{\"u}r
  Informatik.
\newblock Consulted version: \url{https://arxiv.org/abs/1903.00728v1}.
\newblock \href {https://doi.org/10.4230/LIPIcs.ICALP.2019.103}
  {\path{doi:10.4230/LIPIcs.ICALP.2019.103}}.

\bibitem[BLLW12]{barcelo_expressive_2012}
Pablo Barcel\'{o}, Leonid Libkin, Anthony~W. Lin, and Peter~T. Wood.
\newblock Expressive languages for path queries over graph-structured data.
\newblock {\em ACM Trans. Database Syst.}, 37(4), dec 2012.
\newblock Consulted version:
  \url{https://homepages.inf.ed.ac.uk/libkin/papers/pods10-tods.pdf} (saved on
  \url{http://web.archive.org}).
\newblock \href {https://doi.org/10.1145/2389241.2389250}
  {\path{doi:10.1145/2389241.2389250}}.

\bibitem[Blu21]{blumensath_algebraic_2021}
Achim Blumensath.
\newblock Algebraic {Language} {Theory} for {Eilenberg}--{Moore} {Algebras}.
\newblock {\em Logical Methods in Computer Science}, Volume 17, Issue 2, April
  2021.
\newblock URL: \url{https://lmcs.episciences.org/7364}, \href
  {https://doi.org/10.23638/LMCS-17(2:6)2021}
  {\path{doi:10.23638/LMCS-17(2:6)2021}}.

\bibitem[Blu23]{blumensath_monadic_2023}
Achim Blumensath.
\newblock Monadic {Second}-{Order} {Model} {Theory}.
\newblock Version of 2023-12-19 (saved on \url{http://web.archive.org/}), 2023.
\newblock URL: \url{https://www.fi.muni.cz/~blumens/MSO.pdf}.

\bibitem[BN{\^e}23]{bojanczyk_algebraic_2023}
Miko{\l}aj Boja{\'n}czyk and L\^{e} Th\`{a}nh D\~{u}ng~(Tito) Nguy\~{\^e}n.
\newblock {Algebraic Recognition of Regular Functions}.
\newblock In Kousha Etessami, Uriel Feige, and Gabriele Puppis, editors, {\em
  50th International Colloquium on Automata, Languages, and Programming (ICALP
  2023)}, volume 261 of {\em Leibniz International Proceedings in Informatics
  (LIPIcs)}, pages 117:1--117:19, Dagstuhl, Germany, 2023. Schloss Dagstuhl --
  Leibniz-Zentrum f{\"u}r Informatik.
\newblock Consulted version: \url{https://hal.science/hal-03985883v2}.
\newblock \href {https://doi.org/10.4230/LIPIcs.ICALP.2023.117}
  {\path{doi:10.4230/LIPIcs.ICALP.2023.117}}.

\bibitem[Boj15]{bojanczyk_recognisable_2015}
Mikołaj Bojańczyk.
\newblock Recognisable {Languages} over {Monads}.
\newblock In Igor Potapov, editor, {\em Developments in {Language} {Theory}},
  Lecture {Notes} in {Computer} {Science}, pages 1--13. Springer International
  Publishing, 2015.
\newblock Consulted version: \url{https://arxiv.org/abs/1502.04898v1}.
\newblock \href {https://doi.org/10.1007/978-3-319-21500-6_1}
  {\path{doi:10.1007/978-3-319-21500-6_1}}.

\bibitem[Boj20]{bojanczyk_languages_2020}
Mikołaj Bojańczyk.
\newblock Languages recognised by finite semigroups, and their generalisations
  to objects such as trees and graphs, with an emphasis on definability in
  monadic second-order logic, August 2020.
\newblock Lecture notes.
\newblock URL: \url{http://arxiv.org/abs/2008.11635}, \href
  {https://doi.org/10.48550/arXiv.2008.11635}
  {\path{doi:10.48550/arXiv.2008.11635}}.

\bibitem[BW08]{bojanczyk_forest_2008}
Mikołaj Bojańczyk and Igor Walukiewicz.
\newblock Forest algebras.
\newblock In Jörg Flum, Erich Grädel, and Thomas Wilke, editors, {\em Logic
  and {Automata}: {History} and {Perspectives} [in {Honor} of {Wolfgang}
  {Thomas}]}, volume~2 of {\em Texts in {Logic} and {Games}}, pages 107--132.
  Amsterdam University Press, 2008.
\newblock Consulted version: \url{https://hal.science/hal-00105796v1}.

\bibitem[CCG06]{carton_decision_2006}
Olivier Carton, Christian Choffrut, and Serge Grigorieff.
\newblock Decision problems among the main subfamilies of rational relations.
\newblock {\em RAIRO - Theoretical Informatics and Applications},
  40(2):255--275, April 2006.
\newblock Consulted version:
  \url{http://www.numdam.org/item/10.1051/ita:2006005.pdf}.
\newblock \href {https://doi.org/10.1051/ita:2006005}
  {\path{doi:10.1051/ita:2006005}}.

\bibitem[CCP18]{carton_algebraic_2018}
Olivier Carton, Thomas Colcombet, and Gabriele Puppis.
\newblock An algebraic approach to {MSO}-definability on countable linear
  orderings.
\newblock {\em The Journal of Symbolic Logic}, 83(3):1147--1189, September
  2018.
\newblock Consulted version: \url{https://arxiv.org/abs/1702.05342v2}.
\newblock \href {https://doi.org/10.1017/jsl.2018.7}
  {\path{doi:10.1017/jsl.2018.7}}.

\bibitem[CCP20]{cadilhac_continuity_2020}
Michaël Cadilhac, Olivier Carton, and Charles Paperman.
\newblock Continuity of {Functional} {Transducers}: {A} {Profinite} {Study} of
  {Rational} {Functions}.
\newblock {\em Logical Methods in Computer Science}, Volume 16, Issue 1,
  February 2020.
\newblock URL: \url{https://lmcs.episciences.org/6132/}, \href
  {https://doi.org/10.23638/LMCS-16(1:24)2020}
  {\path{doi:10.23638/LMCS-16(1:24)2020}}.

\bibitem[CE12]{courcelle_graph_2012}
Bruno Courcelle and Joost Engelfriet.
\newblock {\em Graph {Structure} and {Monadic} {Second}-{Order} {Logic}: {A}
  {Language}-{Theoretic} {Approach}}.
\newblock Encyclopedia of {Mathematics} and its {Applications}. Cambridge
  University Press, Cambridge, 2012.
\newblock Consulted version: \url{https://hal.science/hal-00646514v1}.
\newblock \href {https://doi.org/10.1017/CBO9780511977619}
  {\path{doi:10.1017/CBO9780511977619}}.

\bibitem[DFP18]{descotte_resynchronizing_2018}
Mar{\'\i}a~Emilia Descotte, Diego Figueira, and Gabriele Puppis.
\newblock {Resynchronizing Classes of Word Relations}.
\newblock In Ioannis Chatzigiannakis, Christos Kaklamanis, D\'{a}niel Marx, and
  Donald Sannella, editors, {\em 45th International Colloquium on Automata,
  Languages, and Programming (ICALP 2018)}, volume 107 of {\em Leibniz
  International Proceedings in Informatics (LIPIcs)}, pages 123:1--123:13,
  Dagstuhl, Germany, 2018. Schloss Dagstuhl -- Leibniz-Zentrum f{\"u}r
  Informatik.
\newblock Consulted version: \url{https://hal.science/hal-01721046v2}.
\newblock \href {https://doi.org/10.4230/LIPIcs.ICALP.2018.123}
  {\path{doi:10.4230/LIPIcs.ICALP.2018.123}}.

\bibitem[EH01]{EH2001transduction}
Joost Engelfriet and Hendrik~Jan Hoogeboom.
\newblock Mso definable string transductions and two-way finite-state
  transducers.
\newblock {\em ACM Trans. Comput. Logic}, 2(2):216–254, apr 2001.
\newblock Consulted version: SciHub.
\newblock \href {https://doi.org/10.1145/371316.371512}
  {\path{doi:10.1145/371316.371512}}.

\bibitem[FGL19]{filiot_logical_2019}
Emmanuel Filiot, Olivier Gauwin, and Nathan Lhote.
\newblock Logical and {Algebraic} {Characterizations} of {Rational}
  {Transductions}.
\newblock {\em Logical Methods in Computer Science}, Volume 15, Issue 4,
  December 2019.
\newblock URL: \url{https://lmcs.episciences.org/5988}, \href
  {https://doi.org/10.23638/LMCS-15(4:16)2019}
  {\path{doi:10.23638/LMCS-15(4:16)2019}}.

\bibitem[Fig21]{figueira_foundations_2021}
Diego Figueira.
\newblock {Foundations of Graph Path Query Languages (Course Notes)}.
\newblock In {\em {Reasoning Web Summer School 2021}}, volume 13100 of {\em
  Reasoning Web. Declarative Artificial Intelligence - 17th International
  Summer School 2021, Leuven, Belgium, September 8-15, 2021, Tutorial
  Lectures}, pages 1--21, Leuven, Belgium, September 2021. {Springer}.
\newblock Consulted version: \url{https://hal.science/hal-03349901}.
\newblock \href {https://doi.org/10.1007/978-3-030-95481-9\_1}
  {\path{doi:10.1007/978-3-030-95481-9\_1}}.

\bibitem[FL15]{figueira_synchronizing_2015}
Diego Figueira and Leonid Libkin.
\newblock Synchronizing {Relations} on {Words}.
\newblock {\em Theory of Computing Systems}, 57(2):287--318, August 2015.
\newblock Consulted version: \url{https://hal.science/hal-01793633v1/}.
\newblock \href {https://doi.org/10.1007/s00224-014-9584-2}
  {\path{doi:10.1007/s00224-014-9584-2}}.

\bibitem[GS19]{gool_pointlike_2019}
S.~J.~v. Gool and B.~Steinberg.
\newblock Pointlike sets for varieties determined by groups.
\newblock {\em Advances in Mathematics}, 348:18--50, May 2019.
\newblock Consulted version: \url{https://arxiv.org/abs/1801.04638v1}.
\newblock \href {https://doi.org/10.1016/j.aim.2019.03.020}
  {\path{doi:10.1016/j.aim.2019.03.020}}.

\bibitem[HMPR91]{henckell_ashs_1991}
Karsten Henckell, Stuart~W. Margolis, Jean-Éric Pin, and John Rhodes.
\newblock Ash's type {II} theorem, profinite topology and {Malcev} products:
  part {I}.
\newblock {\em International Journal of Algebra and Computation},
  01(04):411--436, December 1991.
\newblock Consulted version: \url{https://www.irif.fr/~jep/PDF/HMPR.pdf} (saved
  on \url{http://web.archive.org/}).
\newblock \href {https://doi.org/10.1142/S0218196791000298}
  {\path{doi:10.1142/S0218196791000298}}.

\bibitem[Hod76]{hodgson_theories_1976}
Bernard~R. Hodgson.
\newblock {\em Théories décidables par automate fini}.
\newblock PhD thesis, Université de Montréal, 1976.
\newblock Not available online.

\bibitem[Hod82]{hodgson_direct_1982}
Bernard~R. Hodgson.
\newblock On direct products of automaton decidable theories.
\newblock {\em Theoretical Computer Science}, 19(3):331--335, September 1982.
\newblock \href {https://doi.org/10.1016/0304-3975(82)90042-1}
  {\path{doi:10.1016/0304-3975(82)90042-1}}.

\bibitem[Hod83]{hodgson_decidabilite_1983}
Bernard~R. Hodgson.
\newblock Décidabilité par automate fini.
\newblock {\em Annales des Sciences Mathématiques du Québec}, 7(1):39--57,
  1983.
\newblock Consulted version:
  \url{https://www.mat.ulaval.ca/fileadmin/mat/documents/bhodgson/Hodgson_ASMQ_1983.pdf}
  (saved on \url{http://web.archive.org/}).

\bibitem[KL10]{kuske_natural_2010}
Dietrich Kuske and Markus Lohrey.
\newblock Some natural decision problems in automatic graphs.
\newblock {\em The Journal of Symbolic Logic}, 75(2):678--710, June 2010.
\newblock Consulted version:
  \url{https://www.eti.uni-siegen.de/ti/veroeffentlichungen/08-euler-hamilton.pdf}
  (saved on \url{http://web.archive.org/}).
\newblock \href {https://doi.org/10.2178/jsl/1268917499}
  {\path{doi:10.2178/jsl/1268917499}}.

\bibitem[KN95]{goos_automatic_1995}
Bakhadyr Khoussainov and Anil Nerode.
\newblock Automatic presentations of structures.
\newblock In Gerhard Goos, Juris Hartmanis, Jan Leeuwen, and Daniel Leivant,
  editors, {\em Logic and {Computational} {Complexity}}, volume 960, pages
  367--392. Springer Berlin Heidelberg, Berlin, Heidelberg, 1995.
\newblock \href {https://doi.org/10.1007/3-540-60178-3_93}
  {\path{doi:10.1007/3-540-60178-3_93}}.

\bibitem[Kö14]{kocher_analyse_2014}
Chris Köcher.
\newblock {\em Analyse der {Entscheidbarkeit} diverser {Probleme} in
  automatischen {Graphen}}.
\newblock PhD thesis, Technische Universität Ilmenau, Ilmenau, 2014.
\newblock (Saved on \url{http://web.archive.org/}).
\newblock URL:
  \url{https://people.mpi-sws.org/~ckoecher/files/theses/bsc-thesis.pdf}.

\bibitem[Pin98]{pin_positive_1998}
Jean-Éric Pin.
\newblock Positive varieties and infinite words.
\newblock In Cláudio~L. Lucchesi and Arnaldo~V. Moura, editors, {\em
  {LATIN}'98: {Theoretical} {Informatics}}, Lecture {Notes} in {Computer}
  {Science}, pages 76--87, Berlin, Heidelberg, 1998. Springer.
\newblock Consulted version: \url{https://hal.science/hal-00113768v1}.
\newblock \href {https://doi.org/10.1007/BFb0054312}
  {\path{doi:10.1007/BFb0054312}}.

\bibitem[Pin22]{pin_mathematical_2022}
Jean-Éric Pin.
\newblock Mathematical {Foundations} of {Automata} {Theory}, 2022.
\newblock Version of February 18, 2022 (saved on
  \url{http://web.archive.org/}); MPRI lecture notes.
\newblock URL: \url{https://www.irif.fr/~jep/PDF/MPRI/MPRI.pdf}.

\bibitem[PP04]{perrin_infinite_2004}
Dominique Perrin and Jean-Éric Pin.
\newblock {\em Infinite {Words}, {Automata}, {Semigroups}, {Logic} and
  {Games}}, volume 141.
\newblock Elsevier, 2004.
\newblock Consulted version: Libgen.

\bibitem[PZ23]{place_group_2023}
Thomas Place and Marc Zeitoun.
\newblock Group separation strikes back.
\newblock In {\em 2023 38th Annual ACM/IEEE Symposium on Logic in Computer
  Science (LICS)}, pages 1--13, 2023.
\newblock Consulted version: \url{https://arxiv.org/abs/2205.01632v2}.
\newblock \href {https://doi.org/10.1109/LICS56636.2023.10175683}
  {\path{doi:10.1109/LICS56636.2023.10175683}}.

\bibitem[Reu80]{reutenauer_series_1980}
Christophe Reutenauer.
\newblock Séries formelles et algèbres syntactiques.
\newblock {\em Journal of Algebra}, 66(2):448--483, October 1980.
\newblock \href {https://doi.org/10.1016/0021-8693(80)90097-6}
  {\path{doi:10.1016/0021-8693(80)90097-6}}.

\bibitem[RS11]{rhode_pointlike_2011}
John Rhodes and Benjamin Steinberg.
\newblock Pointlike sets, hyperdecidability and the identity problem for finite
  semigroups.
\newblock {\em International Journal of Algebra and Computation}, November
  2011.
\newblock Consulted version: SciHub.
\newblock URL: \url{https://www.worldscientific.com/worldscinet/ijac}, \href
  {https://doi.org/10.1142/S021819679900028X}
  {\path{doi:10.1142/S021819679900028X}}.

\bibitem[Rub08]{Rubin_2008}
Sasha Rubin.
\newblock Automata presenting structures: A survey of the finite string case.
\newblock {\em Bulletin of Symbolic Logic}, 14(2):169–209, 2008.
\newblock Consulted version: SciHub.
\newblock \href {https://doi.org/10.2178/bsl/1208442827}
  {\path{doi:10.2178/bsl/1208442827}}.

\bibitem[Sak09]{sakarovitch2009elements}
Jacques Sakarovitch.
\newblock {\em Elements of Automata Theory}.
\newblock Cambridge University Press, 2009.
\newblock Consulted version: Libgen.
\newblock \href {https://doi.org/10.1017/CBO9781139195218}
  {\path{doi:10.1017/CBO9781139195218}}.

\bibitem[SW21]{SW21varieties}
Howard Straubing and Pascal Weil.
\newblock {Varieties}.
\newblock In Jean Éric Pin, editor, {\em {Handbook of Automata Theory, volume
  I: Theoretical Foundations}}, pages Chapter 16, pp. 569--614. {European
  Mathematical Society Publishing House}, September 2021.
\newblock URL: \url{https://hal.science/hal-03434221}, \href
  {https://doi.org/10.4171/Automata} {\path{doi:10.4171/Automata}}.

\bibitem[Til87]{tilson_categories_1987}
Bret Tilson.
\newblock Categories as algebra: {An} essential ingredient in the theory of
  monoids.
\newblock {\em Journal of Pure and Applied Algebra}, 48(1):83--198, 1987.
\newblock \href {https://doi.org/10.1016/0022-4049(87)90108-3}
  {\path{doi:10.1016/0022-4049(87)90108-3}}.

\bibitem[UACM17]{urbat_eilenberg_2017}
Henning Urbat, Jiří Ad\'{a}mek, Liang-Ting Chen, and Stefan Milius.
\newblock {Eilenberg Theorems for Free}.
\newblock In Kim~G. Larsen, Hans~L. Bodlaender, and Jean-Francois Raskin,
  editors, {\em 42nd International Symposium on Mathematical Foundations of
  Computer Science (MFCS 2017)}, volume~83 of {\em Leibniz International
  Proceedings in Informatics (LIPIcs)}, pages 43:1--43:15, Dagstuhl, Germany,
  2017. Schloss Dagstuhl -- Leibniz-Zentrum f{\"u}r Informatik.
\newblock Consulted version: \url{https://arxiv.org/abs/1602.05831v3}.
\newblock \href {https://doi.org/10.4230/LIPIcs.MFCS.2017.43}
  {\path{doi:10.4230/LIPIcs.MFCS.2017.43}}.

\end{thebibliography}
\vfill
\setcounter{tocdepth}{1}
\tableofcontents

\end{document}